\newcommand{\ketbra}[2]{\mathinner{|{#1}\rangle \! \langle{#2}|}}
\newcommand{\dket}[1]{\mathinner{|{#1}\rangle \! \rangle}}
\newcommand{\dbra}[1]{\mathinner{\langle \! \langle {#1} |}}
\newcommand{\dbraket}[2]{\mathinner{\langle \! \langle {#1} | {#2} \rangle \! \rangle}}
\newcommand{\dketbra}[2]{\mathinner{|{#1} \rangle \! \rangle \! \langle \! \langle{#2}|}}
\newtheorem*{theorem}{Theorem (M. M. Wolf and D. Perez-Garcia)}
\newtheorem{lemma}{Lemma}
\newtheorem{corollary}{Corollary}
\begin{document}

\title{Complete positivity violation of the reduced dynamics in higher-order quantum adiabatic elimination}

\author{Masaaki Tokieda}
\email{tokieda.masaaki.4e@kyoto-u.ac.jp}
\affiliation{Laboratoire de Physique de l'École Normale Supérieure, Inria, ENS, Mines ParisTech, Université PSL, Sorbonne Université, Paris, France}
\author{Cyril Elouard}
\affiliation{Inria, ENS Lyon, LIP, F-69342, Lyon Cedex 07, France}
\author{Alain Sarlette}
\affiliation{Laboratoire de Physique de l'École Normale Supérieure, Inria, ENS, Mines ParisTech, Université PSL, Sorbonne Université, Paris, France}
\affiliation{Department of Electronics and Information Systems, Ghent University, Belgium}
\author{Pierre Rouchon}
\affiliation{Laboratoire de Physique de l'École Normale Supérieure, Inria, ENS, Mines ParisTech, Université PSL, Sorbonne Université, Paris, France}

\begin{abstract}
  This paper discusses quantum adiabatic elimination, which is a model reduction technique for a composite Lindblad system consisting of a fast decaying sub-system coupled to another sub-system with a much slower timescale.
  Such a system features an invariant manifold that is close to the slow sub-system.
  This invariant manifold is reached subsequent to the decay of the fast degrees of freedom, after which the slow dynamics follow on it.
  By parametrizing the invariant manifold, the slow dynamics can be simulated via a reduced model.
  To find the evolution of the reduced state, we perform an asymptotic expansion with respect to the timescale separation.
  So far, the second-order expansion has mostly been considered.
  It has then been revealed that the second-order expansion of the reduced dynamics is generally given by a Lindblad equation, which ensures complete positivity of the time evolution.
  In this paper, we present two examples where complete positivity of the reduced dynamics is violated with higher-order contributions.
  In the first example, the violation is detected for the evolution of the partial trace without truncation of the asymptotic expansion.
  The partial trace is not the only way to parametrize the slow dynamics.
  Concerning this nonuniqueness, it was conjectured in \href{https://iopscience.iop.org/article/10.1088/2058-9565/aa7f3f}{[R. Azouit, F. Chittaro, A. Sarlette, and P. Rouchon, Quantum Sci. Technol. {\bf 2}, 044011 (2017)]} that there exists a parameter choice ensuring complete positivity.
  With the second example, however, we refute this conjecture by showing that complete positivity cannot be restored in any choice of parametrization.
  We discuss these results in terms of the invariant slow manifold consisting of quantum correlated states.
\end{abstract}

\maketitle


\section{Introduction}
\label{sec:Intro}

Any quantum system should be treated as an open system.
One reason is that perfect isolation of a quantum system is unrealistic experimentally and the influence of a surrounding environment needs to be taken into account.
Besides, perfectly isolated systems cannot be used for the purpose of quantum control.
In order to control or read out a quantum state, coupling to another system is unavoidable. A state of an open quantum system is represented by a density matrix.
To describe its evolution, various approximation methods have been developed so far.
One of the most widely used methods is based on the Markov assumption.
Starting from a system-environment Hamiltonian, the Born-Markov-Secular approximations lead to a Lindblad equation \cite{Breuer02}.

Lindblad equations can also be derived mathematically by imposing axiomatic conditions on the time evolution map.
It is reasonable to assume that the time evolution preserves the properties of density matrices, namely they are Hermitian, unit-trace, and positive semidefinite along the entire evolution.
The condition of positivity is usually replaced by complete positivity \cite{NielsenChuang}.
The complete positivity requirement in physics stems from the fact that a density matrix of an open quantum system is a reduced one, and the total density matrix including an environment should also remain positive semidefinite under the evolution.
One can show that the evolution of a density matrix is governed by a Lindblad equation if and only if
the time evolution map is a one-parameter semigroup ($\{ \Lambda_t \}_{t \geq 0}$ satisfying $\Lambda_t \circ \Lambda_s = \Lambda_{t+s}$ for all $t,s \geq 0$), the elements of which are trace preserving completely positive maps for all $t \geq 0$ \cite{GKS,Lindblad,Havel03}.
Note that the semigroup property is associated with the Markov assumption \cite{Breuer02}.

In this paper, we consider a composite open quantum system where the total evolution is governed by a Lindblad equation.
The composite system is assumed to consist of a fast decaying sub-system being weakly coupled to another system with a slower time scale.
In this setting, the time evolution typically starts with decay of fast degrees of freedom followed by a slower evolution of the remaining slow degrees of freedom.
In capturing the latter dynamics, thus, the fast degrees of freedom can be discarded.
This model reduction technique is known in quantum physics as adiabatic elimination and goes back to singular perturbation theory (see, e.g., \cite{Kotovic-review}).
Owing to the linearity of Lindblad equations, there exists in fact an invariant linear subspace, associated with the slow eigenvalues of the overall system, on which this dynamics rigorously takes place.

Adiabatic elimination offers two noteworthy aspects when applied to quantum physics.
First, it provides a model reduction technique for composite Lindblad systems.
By discarding the fast degrees of freedom, the slow dynamics can be described via a reduced model.
This enables simulations of large-dimensional systems that are otherwise infeasible.
Second, adiabatic elimination allows for reservoir engineering.
By crafting the coupling between the two sub-systems, we can design the dissipative dynamics of the slow evolution after the decay.
This aspect is important in recent developments of quantum technologies because dissipation is not necessarily the enemy of quantum technology, but can be leveraged to control a quantum state.
These two aspects of adiabatic elimination can be seen in previous studies, see below for references.

Various approaches have been developed to formulate adiabatic elimination for composite open quantum systems.
One of the earliest studies is \cite{Cirac92}, where the author applied the Born-Markov approximation to a bipartite Lindblad system, as commonly done for composite Hamiltonian systems \cite{Breuer02}.
A large body of studies has adopted a similar approach \cite{Wiseman93,Ripoll09,Reiter12,Lesanovsky13,Karabanov15,Tomita17,Damanet19,Viana22,Yang22}.
Other studies have explored alternative formulations, including the application of the Laplace transform to the projected master equation \cite{F-Shapiro20,Saideh20} and the use of the Schrieffer-Wolff transformation \cite{Kessler12,Burgarth21,Jager22}.
In contrast to Hamiltonian systems, relaxation behavior is built into the spectral properties of the generator in open quantum systems.
We can hence consider an invariant subspace, or in geometric language, a manifold to which trajectories are attracted in the long-time regime.
Formulations based on such a geometric picture were presented in \cite{Macieszczak16,Zanardi16}.
However, the applicability of these approaches is limited either to systems for which the eigenvectors of the generator can be determined or to the evaluation of contributions up to the second-order in the timescale separation.
In this paper, we focus on the formulation presented in \cite{Azouit17}, which can overcome these limitations.
It provides a geometric picture based on center manifold theory \cite{Fenichel79}.
The system according to this theory does exactly feature an invariant manifold corresponding to slow dynamics, and hence we view the model reduction to slow degrees of freedom as approximating both the manifold and the evolution once the system is initialized on it.
To formulate the model reduction based on this picture, we parametrize the degrees of freedom on the invariant manifold (see Fig.$\,$\ref{fig:Intro_formulation}).
We then seek to find two maps; one describing the time evolution of the parameters and the other assigning the parametrization to the solution of the Lindblad equation, that is, the density matrix of the total system.
To calculate these maps approximately for general problems, an asymptotic expansion with respect to the timescale separation is performed.
In this way, \cite{Azouit17} established a methodology to calculate higher-order contributions systematically.
Recently, this approach was extended to a periodically driven system where the driving frequency is comparable to the fast timescale, while the amplitude is in the order of the slow timescale \cite{Michiel23}.
In addition, numerical simulations were conducted to evaluate the reduced dynamics in a multisystem platform \cite{FM24}.

\begin{figure}[t]
  \includegraphics[keepaspectratio, scale=0.36]{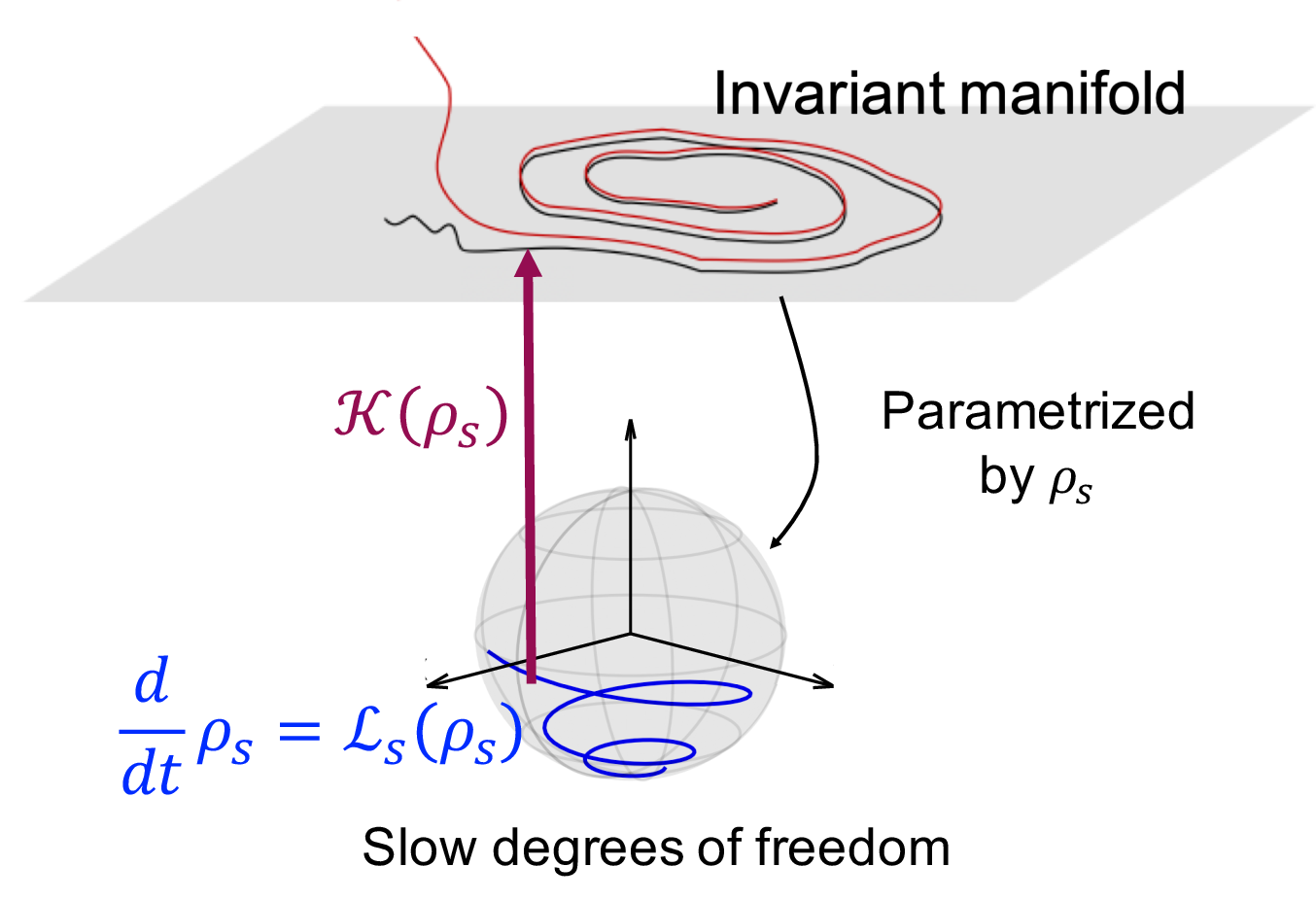}
  \caption{
  Schematic of the state evolution in a Lindblad system with timescale separation to illustrate the mechanism of adiabatic elimination. 
  From an arbitrary initial state, the system rapidly converges to the lower-dimensional invariant manifold (the gray plane), as illustrated by the red trajectory.
  In adiabatic elimination, we parametrize the degrees of freedom on this manifold (called slow degrees of freedom) by $\rho_s$ and describe the dynamics within the manifold, illustrated by the black trajectory, via the reduced dynamics, illustrated by the blue trajectory.
  Here the reduced system is assumed to be a qubit.
  The primary objectives of adiabatic elimination formulation are to calculate the map ($\mathcal{L}_s$) describing the evolution of $\rho_s$ and the map ($\mathcal{K}$) assigning the parametrization to the total state, which follows the Lindblad equation, on the invariant manifold.
  }
  \label{fig:Intro_formulation}
\end{figure}

In the geometric approach, adiabatic elimination includes a gauge degree of freedom associated with the nonuniqueness of the parametrization.
If the slow dynamics is parametrized via a density matrix, then one expects as a physical requirement that the two maps introduced above should preserve the quantum structure.
This expectation is behind the conjecture made in \cite{Azouit17}; the authors conjectured the existence of a gauge choice such that the reduced dynamics is governed by a Lindblad equation and the assignment is a trace preserving completely positive map (also called a Kraus map \cite{Choi75}) up to any order of the asymptotic expansion.
So far, this has been proved to be true for a general class of settings up to the second-order expansion;
it was shown in \cite{Azouit17} that the evolution equation admits a Lindblad equation and in \cite{AzouitThesis} that there always exists a gauge choice ensuring the Kraus map assignment.
Studies of the higher-order contributions have been limited so far.
For a two qubit system, \cite{Alain20} reported an example supporting the conjecture at any order.
Similar but different issues were discussed in \cite{Burgarth21}.
The authors extended the Schrieffer-Wolff transformation to open quantum systems.
They found that the effective adiabatic generator, which has the same block matrix structure as the unperturbed part and provides a first-order approximation of the total dynamics for all the time, cannot in general be put in the Lindblad form, even without truncation in the perturbation series, due to an unavoidable negative coefficient in front of one of the dissipators.
We note that they investigated the effective generator for the total system, while the conjecture is about the generator for the reduced system.
In our picture, it is clear that the total dynamics follows a Lindblad equation.
The question is whether it can be split up into a Lindblad equation on a Hilbert space equivalent to the slow sub-system, and a Kraus mapping of this parametrization to the total system.

In this paper, we challenge this conjecture by considering higher-order contributions beyond second-order.
We emphasize that recent advances in quantum technologies warrant accurate simulation of the slow dynamics.
Thus, understanding of higher-order contributions is increasingly in need.
We specifically consider two examples to investigate the conjecture.
In the first example, which is a three-level system dispersively coupled to a strongly dissipative qubit, adiabatic elimination can be performed without any truncation of the expansion series.
In such all-order analysis, we show that the parametrization via the partial trace of the total density matrix with respect to the fast sub-system yields a non-Lindblad equation violating even positivity, let alone complete positivity, of the time evolution.
This is a result for the partial trace parametrization and, according to the conjecture, there might exist a different gauge choice in which the Lindblad form is restored.
To explore the capacity of the gauge degree of freedom , we consider a qubit resonantly coupled to a strongly dissipative oscillator as the second example.
In this example, we can rigorously prove that, with fourth-order contributions, complete positivity of the reduced dynamics cannot be attained, whatever the gauge choice is.
Thus, this system serves as a counterexample to the conjecture.
The proof utilizes a result in \cite{WPG10} that reveals a constraint on the spectrum of completely positive qubit maps.
We discuss this complete positivity violation in terms of quantum correlations in states on the invariant manifold, which imposes a restriction on the initial state of the reduced system.
In contrast to our previous report \cite{Tokieda22}, which was limited to investigating an oscillator-qubit system,
this paper presents the example of a three-level system (the first example), where the all-order analysis is possible.
Furthermore, we offer more detailed discussions to elucidate the interpretation of the complete positivity violation, aiming to convince readers that such violation is not an unphysical anomaly but an anticipated consequence of quantum correlations, and we suggest the possibility of its experimental verification.

The paper is organized as follows.
In Sec.$\,$\ref{sec:AE}, the machinery of adiabatic elimination developed in \cite{Azouit17} is reviewed.
To investigate the role of the gauge degree of freedom, we derive how the time evolution and assignment maps are modified in different choices of gauge.
In Sec.$\,$\ref{sec:Diag}, we consider a dispersively coupled system.
For the partial trace, we show that complete positivity of the evolution is violated in the all-order analysis.
Next in Sec.$\,$\ref{sec:JC}, we consider an oscillator-qubit system in which the dissipative oscillator system is eliminated.
With this example, we prove the impossibility of restoring complete positivity by any gauge transformation.
Interpretations of these findings are presented in Sec.$\,$\ref{sec:diss}.
Lastly, concluding remarks are made in Sec.$\,$\ref{sec:conc}.
More details about all the claims can be found in the appendix.


\section{Adiabatic elimination}
\label{sec:AE}

In this section, we review the machinery of adiabatic elimination developed in \cite{Azouit17}.
We consider a system consisting of a fast decaying sub-system coupled to another sub-system with a slower timescale.
Let $\mathscr{H}_A$ ($\mathscr{H}_B$) be the Hilbert space of the fast (slow) sub-system.
The density matrix of the composite system, $\mathscr{H}_A \otimes \mathscr{H}_B$, denoted by $\rho$ follows a Lindblad equation,
\begin{equation}
  \frac{d}{dt} \rho = \mathcal{L}_A \otimes \mathcal{I}_B (\rho) + \epsilon \mathcal{I}_A \otimes \mathcal{L}_B (\rho) + \epsilon \mathcal{L}_{\rm int} (\rho) \equiv \mathcal{L}_{\rm tot} (\rho).
  \label{eq:AE_master}
\end{equation}
For $\xi = A$ and $B$, $\mathcal{I}_\xi$ are the identity superoperators acting only on operators on $\mathscr{H}_\xi$.
$\mathcal{L}_A$ is a Lindbladian acting only on $\mathscr{H}_A$ and generally reads
\begin{equation}
  \mathcal{L}_A \bullet = - i [H_A, \bullet] + \sum_{k} \mathcal{D}[L_{A,k}] \bullet,
  \label{eq:AE_LA}
\end{equation}
with a Hamiltonian $H_A$ and jump operators $\{ L_{A,k} \}$, all of which are operators on $\mathscr{H}_A$. We have also introduced the commutator superoperator $[H, \bullet] = H \bullet - \bullet H$ and the dissipator superoperator $\mathcal{D}[L] \bullet = L \bullet L^\dagger - (L^\dagger L \bullet + \bullet L^\dagger L)/2$ for any operator $H$ and $L$.
We assume that the evolution only with $\mathcal{L}_A$ exponentially converges to a unique steady state $\bar{\rho}_A$.
In other words, among the spectrum of $\mathcal{L}_A$, the eigenvalue zero is simple and the other eigenvalues have strictly negative real part.
$\mathcal{L}_B$ and $\mathcal{L}_{\rm int}$ are superoperators acting on $\mathscr{H}_B$ and $\mathscr{H}_A \otimes \mathscr{H}_B$, respectively, and are assumed to contain only Hamiltonian terms.
Lastly, $\epsilon$ is a non-negative parameter representing the timescale separation.
Physically, $\mathcal{L}_A$ and $\mathcal{L}_B$ describe the internal dynamics of $\mathscr{H}_A$ and $\mathscr{H}_B$, respectively, and $\mathcal{L}_{\rm int}$ determines how the two sub-systems interact.
Note that the internal dynamics $\mathcal{L}_B$ of $\mathscr{H}_B$ is assumed slow, i.e., the model Eq.$\,$(\ref{eq:AE_master}) must hold in a frame that follows its potential fast motion.

As described in the introduction section, the goal of adiabatic elimination is to find the slow dynamics on the invariant manifold.
For linear equations such as a Lindblad equation, an invariant manifold is characterized by the (right) eigenoperators of the generator $\mathcal{L}_{\rm tot}$ in Eq.$\,$(\ref{eq:AE_master}) whose eigenvalues have real part close to zero.
Such a subspace is preserved by the operation of $\mathcal{L}_{\rm tot}$ and thus is invariant with respect to the time evolution map.
We note that, when $\mathcal{L}_{\rm int}$ is nonzero, an invariant manifold does not exactly coincide with the slow sub-system $\mathscr{H}_B$ because of correlations building up on the invariant manifold (see Sec.$\,$\ref{sec:diss} for detailed discussions).

To describe the slow dynamics, we should parametrize the degrees of freedom on the invariant manifold.
Following \cite{Azouit17}, we use a density matrix for the parametrization.
As a mathematical model reduction technique, there is no preference in that choice.
In applications to physics, on the other hand, it is convenient to employ a parametrization that facilitates interpretation of the slow dynamics.
A suitable representation in this regard is a density matrix, since most studies of open quantum systems have been based on it.
We note that the partial trace ${\rm tr}_A (\rho)$, with ${\rm tr}_A$ the trace over $\mathscr{H}_A$, has commonly been used  to represent the reduced state \cite{Breuer02,NielsenChuang}.
This is a valid gauge choice, as far as the timescales are well separated (equivalently, $\epsilon \ll 1$).
This choice plays a central role in the following discussions.
For clear distinction, we denote the partial trace by $\rho_B = {\rm tr}_A (\rho)$ and general density matrix parametrization by $\rho_s$.

Once the parametrization is fixed to $\rho_s$, we seek to find the following two maps (see Fig.$\,$\ref{fig:Intro_formulation}).
One, denoted by $\mathcal{L}_s$, describes the time evolution of $\rho_s$, namely $(d/dt) \rho_s = \mathcal{L}_s (\rho_s)$.
The other, denoted by $\mathcal{K}$, maps $\rho_s$ to the solution $\rho$ of the total Lindblad equation Eq.$\,$(\ref{eq:AE_master}), $\rho = \mathcal{K} (\rho_s)$.
Throughout this paper, we assume that $\mathcal{K}$ and $\mathcal{L}_s$ are linear and time-independent.
Since $\rho$ satisfies Eq.$\,$(\ref{eq:AE_master}), we obtain
\begin{equation}
  \mathcal{K} (\mathcal{L}_s (\rho_s)) = \mathcal{L}_{\rm tot} (\mathcal{K}(\rho_s)),
  \label{eq:AE_inv1}
\end{equation}
from which we can determine $\mathcal{K}$ and $\mathcal{L}_s$ in principle. We call this relation the invariance condition in this paper.

Except special cases (see Sec.$\,$\ref{sec:Diag}), it is difficult to find $\mathcal{K}$ and $\mathcal{L}_s$ satisfying the invariance condition Eq.$\,$(\ref{eq:AE_inv1}) exactly.
To proceed, we assume $\epsilon \ll 1$ and perform the asymptotic expansions as
\begin{equation}
  \mathcal{K} = \sum_{n = 0}^\infty \epsilon^n \mathcal{K}_n, \ \ \ \ \ \ \ \ \mathcal{L}_s = \sum_{n = 0}^\infty \epsilon^n \mathcal{L}_{s,n}.
  \label{eq:AE_asymexp}
\end{equation}
When $\epsilon = 0$, the solution of Eq.$\,$(\ref{eq:AE_master}) after the decay of the fast sub-system reads $\rho(t) = \bar{\rho}_A \otimes {\rm tr}_A (\rho (t=0))$ with the initial density matrix $\rho(t=0)$.
Therefore, the $\epsilon^0$ order elements are given by
\begin{equation}
  \mathcal{K}_{0} (\rho_s) = \bar{\rho}_A \otimes \rho_s, \ \ \ \ \ \ \ \ \mathcal{L}_{s,0} (\rho_s) = 0.
  \label{eq:AE_asymexp0}
\end{equation}

The higher-order contributions can be evaluated by inserting the expansions Eq.$\,$(\ref{eq:AE_asymexp}) into the invariance condition Eq.$\,$(\ref{eq:AE_inv1}).
Detailed calculations are presented in Appendix \ref{app:inv.cond.}.
We recall that, as the parameter choice can involve different options at any orders of $\epsilon$, the solution is not unique.
This reflects into the fact that, due to the singularity of $\mathcal{L}_A$, ${\rm tr}_A \circ \mathcal{K}_{n \geq 1}$ cannot be fully determined from the invariance condition.
In what follows, we denote $G \equiv {\rm tr}_A \circ \sum_{n = 1}^\infty \epsilon^n \mathcal{K}_n $, which can be any linear and time-independent superoperator on $\mathscr{H}_B$.
Together with Eq.$\,$(\ref{eq:AE_asymexp0}), we find 
\begin{equation}
  \rho_B = {\rm tr}_A (\rho) = {\rm tr}_A (\mathcal{K} (\rho_s)) = \rho_s + G(\rho_s).
  \label{eq:AE_rhobrhos}
\end{equation}
Throughout this paper, we assume that $(\mathcal{I}_B + G)$ is invertible, which is valid for $\epsilon \ll 1$.

Let us see how $\mathcal{K}$ and $\mathcal{L}_s$ for an arbitrary gauge choice are related to those for the partial trace.
For the sake of clarity, we write the gauge dependence explicitly as $\mathcal{K}^G$ and $\mathcal{L}_{s}^{G}$.
We note in advance that the following relations are results of general basis change and are not associated with the quantum structure.
For $\mathcal{K}^G$, note $\rho = \mathcal{K}^{G} (\rho_s) = \mathcal{K}^{G=0} (\rho_B)$.
Substituting Eq.$\,$(\ref{eq:AE_rhobrhos}) into the rightmost side gives
\begin{equation}
  \mathcal{K}^G = \mathcal{K}^{G=0} \circ (\mathcal{I}_B + G).
  \label{eq:AE_KG}
\end{equation}
For $\mathcal{L}_{s}^{G}$, the time evolution of the partial trace reads $\mathcal{L}_{s}^{G=0} (\rho_B) = (d/dt) \rho_B = (d/dt) (\rho_s + G(\rho_s)) = \mathcal{L}_{s}^{G} (\rho_s) + G (\mathcal{L}_{s}^{G} (\rho_s))$.
Comparing the leftmost and rightmost sides, we find $\mathcal{L}_{s}^{G=0} \circ (\mathcal{I}_B + G) = (\mathcal{I}_B + G) \circ \mathcal{L}_{s}^{G}$.
From the existence of $(\mathcal{I}_B + G)^{-1}$, we obtain
\begin{equation}
  \mathcal{L}_{s}^G = (\mathcal{I}_B + G)^{-1} \circ \mathcal{L}_{s}^{G=0} \circ (\mathcal{I}_B + G).
  \label{eq:AE_LG}
\end{equation}
This indicates that the spectrum of $\mathcal{L}_s$ or the decay rate inside an invariant manifold is independent of gauge choice.
This is expected since the decay rate must not change depending on the way the slow dynamics is parametrized.
Equations (\ref{eq:AE_KG}) and (\ref{eq:AE_LG}) are useful in analyzing possible transformations that the gauge degree of freedom can make (see Sec.$\,$\ref{sec:JC}).

As summarized in the introduction section, the authors of \cite{Azouit17} conjectured the existence of a gauge choice leading to reduced dynamics described by a Lindbladian,
$\sum_{j=0}^n \epsilon^j \mathcal{L}_{s,j} (\rho_s) = - i [H_s, \rho_s] + \sum_{k} \mathcal{D}[L_{s,k}] (\rho_s)$ with a Hamiltonian $H_s$ and jump operators $\{ L_{s,k} \}$,
and assignment described by a Kraus map,
$\sum_{j=0}^n \epsilon^j \mathcal{K}_{j} (\rho_s) = \sum_{k} M_k \rho_s M_k^\dagger$ with operators $M_k: \mathscr{H}_B \to \mathscr{H}_A \otimes \mathscr{H}_B$, up to $\epsilon^n$ for any positive integer $n$.
For a general class of settings, this conjecture has been proved up to $n = 2$ so far.
In the following sections, we present examples where complete positivity of the reduced dynamics is violated with higher-order ($n>2$) terms.


\section{Complete positivity violation in all-order adiabatic elimination}
\label{sec:Diag}

\subsection{Problem setting}

In this section, we demonstrate complete positivity violation of the reduced dynamics.
In order to stress that the violation is not due to the truncation of the perturbation series, we consider an exactly solvable system where $\mathcal{K}$ and $\mathcal{L}_s$ satisfying the invariance condition Eq.$\,$(\ref{eq:AE_inv1}) can be obtained without the asymptotic expansion.
The total system consists of a target qudit ($d$-dimensional) system being coupled to another dissipative system through a single-term Hamiltonian.
To represent qudit operators, we introduce  $E_{m,n} \in \mathbb{R}^{d \times d} \ (m,n = 1,\dots,d)$  as $[E_{m,n}]_{i,j} = 1 \ (i = m \ {\rm and} \ j = n)$ and $0 \ ({\rm else})$, and $E_m = E_{m,m}$ in some canonical basis.
With these, we assume the following form of $\mathcal{L}_B$ and $\mathcal{L}_{\rm int}$ as in \cite{Alain20};
\begin{equation*}
  \epsilon \mathcal{L}_B \bullet = - i \ [ \sum_{m=1}^d \omega_m E_m, \bullet],
\end{equation*}
and
\begin{equation}
  \epsilon \mathcal{L}_{\rm int} \bullet = i  [ \sum_{m=1}^d \chi_m ( V_A \otimes E_m), \bullet],
  \label{eq:Diag_int}
\end{equation}
where $\{ \omega_m \}$ are the transition frequencies of the qudit, $\{ \chi_m \}$ are the coupling constants, and $V_A$ is an operator on $\mathscr{H}_A$.
Regarding $\mathcal{L}_A$, we only assume the existence of a unique steady state and do not specify its form in computing analytic expressions of $\mathcal{K}$ and $\mathcal{L}_s$.
When we discuss whether $\mathcal{L}_s$ is a Lindbladian later, we consider a driven-dissipative qubit system represented by
\begin{equation}
  \mathcal{L}_A \bullet = -i [ \frac{\Omega}{2} \sigma_x + \frac{\Delta}{2} \sigma_z, \bullet] + \kappa \mathcal{D} [\sigma_-] \bullet, \ \ \ V_A = \sigma_z,
  \label{eq:Diag_qubit}
\end{equation}
with the drive amplitude $\Omega$, the drive detuning from the qubit frequency $\Delta$, and the Pauli matrices $\{ \sigma_i \}_{i=x,y,z}$, and $\sigma_{\pm} = (\sigma_x \pm i \sigma_y)/2$.
Assuming the qudit to be a $d$-level approximation of an optical cavity, this Lindbladian describes a quantum nondemolition measurement of the photon number in the absence of dissipation \cite{Antoine21}.

Note that $\mathcal{L}_B$ and $\mathcal{L}_{\rm int}$ commute.
In the rotating frame with respect to the Hamiltonian $\sum_{m=1}^d \omega_m E_m$, thus, the interaction Hamiltonian does not change, while the qudit internal dynamics becomes trivial as $\mathcal{L}_B = 0$.
In the following, we consider adiabatic elimination in this frame.

\subsection{Adiabatic elimination at any order}
\label{sec:Diag_any.order}

We recall that our goal is to find maps $\mathcal{K}$ and $\mathcal{L}_s$ satisfying the invariance condition Eq.$\,$(\ref{eq:AE_inv1}).
To this end, we note that
\begin{equation}
  \mathcal{L}_{\rm tot} ( A \otimes E_{m,n} ) = \mathcal{L}_A^{(m,n)}(A) \otimes E_{m,n},
  \label{eq:Diag_LAmn}
\end{equation}
with $A$ any operator on $\mathscr{H}_A$ and
\begin{equation*}
  \mathcal{L}_A^{(m,n)} (A) = \mathcal{L}_A (A) + i (\chi_m V_A A - \chi_n A V_A).
\end{equation*}
As detailed in Appendix \ref{app:ae.qudit_all}, we then find
\begin{equation}
  \mathcal{K}^{G=0} (\rho_B) = \sum_{m,n = 1}^d Q_{m,n} \otimes E_m \rho_B E_n,
  \label{eq:Diag_KPT}
\end{equation}
and
\begin{equation}
  \mathcal{L}_s^{G=0} (\rho_B) = \sum_{m,n = 1}^d \lambda_{m,n} \ E_m \rho_B E_n,
  \label{eq:Diag_LsPT}
\end{equation}
where, for $m, n = 1 \dots d$, $\lambda_{m,n}$ is the eigenvalue of $\mathcal{L}_A^{(m,n)}$, which is in fact an eigenvalue of $\mathcal{L}_{\rm tot}$ as seen from Eq.$\,$(\ref{eq:Diag_LAmn}), with the smallest absolute real part and $Q_{m,n}$ is the corresponding (right) eigenoperator that is normalized as ${\rm tr}_A (Q_{m,n}) = 1$.
These are for the partial trace parametrization $(G = 0)$ as confirmed by the relation ${\rm tr}_A  (\mathcal{K}^{G=0} (\rho_B)) = \rho_B$ (see Eq.$\,$(\ref{eq:AE_rhobrhos})).
One can check the invariance condition Eq.$\,$(\ref{eq:AE_inv1}) using Eq.$\,$(\ref{eq:Diag_LAmn}).

In what follows, we investigate whether $\mathcal{L}_s^{G=0}$ is a Lindbladian.
According to the technical results in Appendix \ref{app:vect_Lindbladian}, this holds if and only if $S^\top \lambda S$ is positive semidefinite, which we denote as $S^\top \lambda S \geq 0$ in the following, 
with the matrix transpose $\top$ and a matrix $S$ defined in Eq.$\,$(\ref{eq:vect_MatrixS}).

As an example, let us see the case $d = 2$.
We find $S^\top \lambda S = {\rm Re}(-\lambda_{1,2})/2$.
As mentioned underneath Eq.$\,$(\ref{eq:Diag_LsPT}), $\lambda_{1,2}$ is an eigenvalue of $\mathcal{L}_{\rm tot}$.
From the stability condition, $S^\top \lambda S \geq 0$, and thus $\mathcal{L}_s^{G=0}$ always admits the Lindblad form.
This was shown in \cite{Alain20}, where the authors further proved the existence of gauge choices such that $\mathcal{K}^G$ is completely positive and surjective.

\subsection{Qutrit ($d = 3$) case}

\begin{figure}[t]
  \includegraphics[keepaspectratio, scale=0.42]{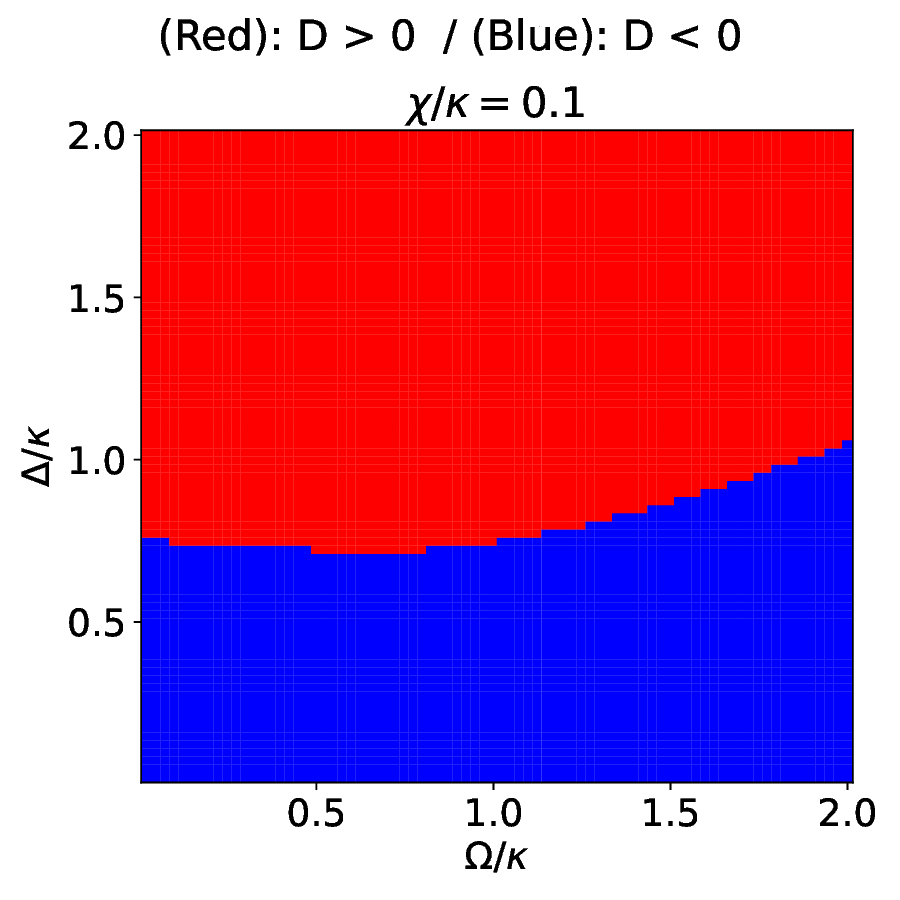}
  \caption{Parameter regions ensuring the Lindblad form of the time evolution generator $\mathcal{L}_s^{G=0}$ determined by the sign of $D$ (see Eq.$\,$(\ref{eq:Diag_lmdLindblad})).
  The Lindblad form is attained in the red region $(D > 0)$, while a dissipator comes with a negative coefficient in the blue region $(D < 0)$.
  The coupling strengths are set $(\chi_1,\chi_2,\chi_3) = (0,\chi,2\chi)$ with $\chi/\kappa = 0.1$.
  }
  \label{fig:Diag_sign}
\end{figure}

When $d = 3$, we find
\begin{equation}
  \begin{gathered}
    S^\top \lambda S = \\
    \begin{pmatrix}
      \frac{{\rm Re}(-\lambda_{1,2})}{2} & \frac{i {\rm Im}(\lambda_{1,2}) - \lambda_{1,3} + \lambda_{2,3}}{6} \\
      \frac{ - i {\rm Im}(\lambda_{1,2}) - \lambda_{1,3}^* + \lambda_{2,3}^*}{6} & \frac{{\rm Re}( \lambda_{1,2} - 2 (\lambda_{1,3} + \lambda_{2,3}) )}{18}
    \end{pmatrix},
  \end{gathered}
  \label{eq:Diag_CoefficientMatrix}
\end{equation}
with ${\rm Im}$ denoting the imaginary part.
$\mathcal{L}_s^{G=0}$ is a Lindbladian if and only if this matrix is positive semidefinite.
After sorting out, the condition reads $D \geq 0$ with $D$ defined by
\begin{equation}
  \begin{gathered}
    D = |\lambda_{1,2} + \lambda_{1,3} + \lambda_{2,3}|^2 \\
    - 2 (|\lambda_{1,2}|^2 + |\lambda_{1,3}|^2 + |\lambda_{2,3}|^2)
    - 4 {\rm Im} (\lambda_{1,2}) {\rm Im} (\lambda_{2,3}).
  \end{gathered}
  \label{eq:Diag_lmdLindblad}
\end{equation}
Unlike the case $d = 2$, it is not clear if this is satisfied generally.
To proceed, we numerically evaluate the sign of $D$ when the qutrit is coupled to a fast decaying qubit described by Eq.$\,$(\ref{eq:Diag_qubit}).
In this evaluation, we assume $(\chi_1,\chi_2,\chi_3) = (0,\chi,2\chi)$ with $\chi / \kappa = 0.1$.
The results are shown in Fig.$\,$\ref{fig:Diag_sign}.
We can see the blue regions where $D$ is negative.
In these regions, one of the dissipators constituting $\mathcal{L}_s^{G=0}$ has a negative coefficient.
Therefore, unlike the case $d = 2$, $\mathcal{L}_s^{G=0}$ is not a Lindbladian in general.
We note that the non-Lindblad form of $\mathcal{L}_s^{G=0}$ is obtained even with infinitesimal coupling constants as discussed in Appendix \ref{app:ae.qudit_4th}.

To be more precise, it is complete positivity of the time evolution that is violated.
As discussed in the introduction section, the generator is a Lindbladian if and only if the time evolution map satisfies the semigroup relation and is a Kraus map at any time.
In the current problem, the time evolution map, $\exp(\mathcal{L}_s^{G=0} t)$ with $t \in \mathbb{R}_{\geq 0}$, satisfies the semigroup relation and preserves the Hermitian property and trace.
Thus, the non-Lindblad form of $\mathcal{L}_s^{G=0}$ detected by the negative sign of $D$ signifies the complete positivity violation of the time evolution map $\exp(\mathcal{L}_s^{G=0} t)$.
In this qudit example, not only complete positivity, even positivity is violated.
To see this, we note that the operation of the time evolution map $\exp(\mathcal{L}_s^{G=0} t)$ reads as
\begin{equation*}
  e^{\mathcal{L}_s^{G = 0} t} (\rho_B) = \sum_{m,n = 1}^d e^{\lambda_{m,n} t} E_m \rho_B E_n.
\end{equation*}
From Lemma \ref{theorem:diagonal} in Appendix \ref{app:vect_choi}, a superoperator of this form is positive if and only if it is completely positive.
Thus, the violation of complete positivity is accompanied by that of positivity.
We present an interpretation of such a nonpositive evolution in Sec.$\,$\ref{sec:diss}.

This example demonstrates that complete positivity (and positivity) of the partial trace evolution can be violated in adiabatic elimination, even without truncation in the series expansion.
In this situation, the conjecture in \cite{Azouit17} states that the negative coefficient in front of a dissipator can be eliminated by a gauge transformation Eq.$\,$(\ref{eq:AE_LG}), and complete positivity of the reduced dynamics is restored in a different parametrization.
In the current example, it is difficult to examine this conjecture because a simple single criterion like Eq.$\,$(\ref{eq:Diag_lmdLindblad}) holds for $d=3$ only under the special conditions where $G=0$ yields a diagonal superoperator (Eq.$\,$(\ref{eq:Diag_LsPT})).


\section{Jaynes-Cummings model with damped oscillator}
\label{sec:JC}

\subsection{Problem setting}
\label{sec:JC_setting}

To investigate roles of the gauge degree of freedom more closely, this section is dedicated to a slow qubit system being coupled to a strongly dissipative oscillator system \cite{Tokieda22}.
We assume that the Hamiltonian is given by the Jaynes-Cummings Hamiltonian and that the oscillator is coupled to a Markovian environment at finite temperature.
The qubit is assumed to be nondissipative for simplicity. In the frame rotating with the qubit frequency, we have
\begin{equation}
  \mathcal{L}_{A} \bullet = - i [\Delta_A  a^\dagger \! a, \bullet] + \gamma (1 + n_{\rm th}) \mathcal{D}[a] \bullet + \gamma n_{\rm th} \mathcal{D}[a^\dagger] \bullet,
  \label{eq:JC_LA}
\end{equation}
\begin{equation*}
  \epsilon \mathcal{L}_{\rm int} \bullet = - i [ g ( a^\dagger \otimes \sigma_{-} + a \otimes \sigma_{+} ), \bullet],
\end{equation*}
and $\mathcal{L}_B = 0$, with the oscillator detuning from the qubit frequency $\Delta_A$, the decay rate $\gamma$, the asymptotic oscillator quantum number in the absence of coupling $n_{\rm th}$ (see Eq.$\,$(\ref{eq:damposc_nth})), and the coupling constant $g$.
Operators $a$ and $a^\dagger$ are the annihilation and creation operators of the oscillator, respectively.
This form of Lindbladian is used as a benchmark when analyzing oscillator-qubit interacting systems in cavity or superconducting circuit architectures.

The full spectrum and eigenoperators of $\mathcal{L}_A$ are provided in Appendix \ref{app:damposc}.
The result confirms, as long as $\gamma > 0$, the existence of a unique steady state $\bar{\rho}_A$ given by Eq.$\,$(\ref{eq:damposc_steadystate}).

\subsection{Fourth-order adiabatic elimination}
\label{sec:JC_AE}

To our knowledge, the invariance condition Eq.$\,$(\ref{eq:AE_inv1}) for this system cannot be solved exactly.
Thus, we perform the asymptotic expansion as discussed underneath Eq.$\,$(\ref{eq:AE_asymexp}).
In this example, the timescale of the oscillator system is characterized by $\gamma^{-1}$, while that of the interaction is $|g|^{-1}$.
Thus, the timescale separation parameter $\epsilon$ reads $\epsilon = |g|/\gamma$.
Assuming $\epsilon \ll 1$, we calculate contributions up to the fourth-order.

As shown in Appendix \ref{app:damposc}, $\mathcal{L}_{s}$ for the partial trace, $\mathcal{L}_{s}^{G=0}$, reads up to the fourth-order expansion
\begin{equation}
  \begin{gathered}
    \mathcal{L}_{s}^{G=0} \bullet = - i [\frac{\omega_B^{(4)}}{2} \sigma_z, \bullet] \\
    + \gamma_{-}^{(4)} \mathcal{D}[\sigma_-] \bullet + \gamma_{+}^{(4)} \mathcal{D}[\sigma_+] \bullet + \gamma_\phi^{(4)} \mathcal{D}[\sigma_z] \bullet.
  \end{gathered}
  \label{eq:JC_LsPT}
\end{equation}
The coefficients $\omega_B^{(4)}$, $\gamma_{\pm}^{(4)}$, and $\gamma_{\phi}^{(4)}$ are real numbers defined by
\begin{equation*}
  \omega_B^{(4)} = {\rm Im} (b_- + b_+), \ \ \ \ \ \gamma_{\pm}^{(4)} = 2 {\rm Re} (b_\pm),
\end{equation*}
and
\begin{equation*}
  \gamma_\phi^{(4)} = - \frac{8g^4 n_+ n_- ( 3 - 6 (2\Delta_A/\gamma)^2 - (2\Delta_A/\gamma)^4 )}{\gamma^3 (1+(2\Delta_A/\gamma)^2)^3},
\end{equation*}
where $b_\pm$ are
\begin{equation*}
  b_\pm =  \frac{2g^2 n_{\pm}}{\bar{\gamma}} + \frac{8g^4 n_{\pm}^2}{\bar{\gamma}^3} + \frac{8g^4 n_+ n_-(1+8 i \gamma \Delta_A / |\bar{\gamma}|^2)}{\bar{\gamma}^* |\bar{\gamma}|^2},
\end{equation*}
with $n_+ = n_{\rm th}$, $n_- = 1 + n_{\rm th}$, and $\bar{\gamma} = \gamma + 2 i \Delta_A$.

The coefficient $\omega_B^{(4)}$ represents the qubit frequency shift due to the coupling with the oscillator.
Up to the second-order, $\omega_B^{(2)}$, it reads $\omega_B^{(2)} = - 4 \Delta_A g^2 (n_- + n_+)/|\bar{\gamma}|^2$.
The $\gamma_{\pm}^{(4)}$ and $\gamma_\phi^{(4)}$ terms describe the effective qubit decay induced by the coupling to the dissipative oscillator.
On the one hand, when $|g|/\gamma \ll 1$, $\gamma_{\pm}^{(4)}$ are dominated by the second-order contributions given by $\gamma_{\pm}^{(2)} = 4 g^2 \gamma n_{\pm}/|\bar{\gamma}|^2 > 0$, and thus $\gamma_{\pm}^{(4)} > 0$.
On the other hand, $\gamma_\phi^{(4)}$ involves only the fourth-order contribution and
\begin{equation*}
  \begin{gathered}
    \gamma_\phi^{(4)} < 0 \ \ \ {\rm when} \\
    n_{\rm th} > 0 \ \ {\rm and} \ \ |\Delta_A|/\gamma < \sqrt{2\sqrt{3}-3}/2 \simeq 0.34,
  \end{gathered}
\end{equation*}
even if the condition for the asymptotic expansion, $|g|/\gamma \ll 1$, holds.

Even when $\gamma_\phi^{(4)} < 0$, the stability of the time evolution can be confirmed as follows.
As calculated in Appendix \ref{app:damposc.4thL_spec}, the spectrum of $\mathcal{L}_s^{G=0}$ reads 
\begin{equation*}
    \{ 0, -1/T_2 + i \omega_B^{(4)}, -1/T_2 - i \omega_B^{(4)}, -1/T_1 \}
\end{equation*}
with $1/T_1 = \gamma_-^{(4)} + \gamma_+^{(4)}$ and $1/T_2 = 1/(2 T_1) + 2 \gamma_\phi^{(4)}$.
Since $\gamma_{\pm}^{(4)} > 0$ and $\gamma_{\pm}^{(4)} \gg |\gamma_\phi^{(4)}|$ when $|g|/\gamma$ is small, we have $T_1 > 0$ and $T_2 > 0$.
Therefore, the time evolution is stable even when $\gamma_\phi^{(4)}$ is negative.

For a similar reason, the time evolution map is positive even when $\gamma_\phi^{(4)} < 0$.
We provide a proof in Appendix \ref{app:damposc.4thL_positive}.
However, as detailed in Sec.$\,$\ref{sec:JC_gauge}, the evolution is not completely positive when $\gamma_\phi^{(4)} < 0$.
This mechanism appears to be related to the well-known example of the transpose of a matrix, which is positive but not complete positive.
Indeed, that standard example essentially says that an evolution which contracts a single Bloch vector direction fast, but the two other directions slowly, is not completely positive.
We here have this situation with $\gamma_\phi^{(4)} < 0$, slowing down the contraction of the $x$ and $y$ components of the Bloch vector.

\subsection{Gauge transformation}
\label{sec:JC_gauge}

When $\gamma_\phi^{(4)} < 0$, $\mathcal{L}_s^{G=0}$ is thus not a Lindbladian.
To be more precise, the time evolution map with $\mathcal{L}_s^{G=0}$ is not completely positive.
On the other hand, the Lindblad form might be recovered in another gauge choice, as conjectured in \cite{Azouit17}.
Here, however, we prove that this is impossible.

As shown in Eq.$\,$(\ref{eq:AE_LG}), the gauge transformation induces a similarity transformation.
To our knowledge, similarity transformations of the time evolution generator have not been discussed extensively in the literature.
To demonstrate its role, therefore, we first consider the following toy example.
Suppose a time evolution equation $(d/dt)\rho = \mathcal{L}_0 (\rho)$ with
\begin{equation*}
  \mathcal{L}_0 \bullet = - i [\frac{\omega_0}{2} \sigma_z, \bullet] + \gamma_0 \mathcal{D}[\sigma_x] \bullet - \gamma_0 \mathcal{D} [\sigma_y] \bullet,
\end{equation*}
where $\omega_0$ and $\gamma_0$ are real and positive parameters. To ensure the stability of the evolution, we assume $\omega_0 > 2\gamma_0$.
The negative sign in front of $\mathcal{D}[\sigma_y]$ indicates that $\mathcal{L}_0$ is not a Lindbladian.
This negativity can be removed by the similarity transformation
\begin{equation*}
  \mathcal{L}_0' = e^{- q_0 \mathcal{D}[\sigma_x + \sigma_y]} \circ \mathcal{L}_0 \circ e^{q_0 \mathcal{D}[\sigma_x + \sigma_y]},
\end{equation*}
with $q_0$ defined by $\tanh(4q_0) = 2\gamma_0 / \omega_0$ because $\mathcal{L}_0'$ reads
\begin{equation*}
  \mathcal{L}_0' \bullet = - i [\frac{\omega_0'}{2} \sigma_z, \bullet],
\end{equation*}
with $\omega_0' = \sqrt{\omega_0^2 - 4 \gamma_0^2}$. As a result, the above similarity transformation restores the Lindblad form.

Since $\mathcal{L}_0$ and $\mathcal{L}_0'$ have the same spectrum, so do the time evolution maps $\exp( \mathcal{L}_0 t)$ and $\exp( \mathcal{L}_0' t)$.
At infinitesimal $t$, $\exp( \mathcal{L}_0' t)$ is a Kraus map, while $\exp( \mathcal{L}_0 t)$ is not.
Thus, this example shows that one cannot judge only from the spectrum whether the map is a Kraus map or not.
On the other hand, in this example, one can anticipate the existence of a Kraus map which has the same spectrum as $\exp( \mathcal{L}_0 t)$.
Indeed, the spectrum of $\exp( \mathcal{L}_0 t)$ is given by  $\{ 1,1, \exp(i \omega_0' t), \exp(-i\omega_0' t) \}$.
This implies that the time evolution merely induces a rotation of the Bloch vector without damping, which then implies that the evolution can be described by a unitary transformation in a suitable basis.
In fact, this argument can be generalized at least for qubit maps.
That is, for a given set of four numbers, $\Lambda \in \mathbb{C}^4$, one can characterize the existence of a Kraus map whose spectrum is given by $\Lambda$.
This is guaranteed by Theorem 1 of \cite{WPG10} which states the following;
\begin{theorem}
  Given $\Lambda \in \mathbb{C}^4$, the following statements are equivalent:
  \begin{itemize}
    \item There exists a Kraus map the spectrum of which is given by $\Lambda$.
    \item $\Lambda = {1} \cup \lambda$ where $\lambda \in \mathbb{C}^3$ is closed under complex conjugation.
    Furthermore, if we define $s \in \mathbb{R}^3$ by $s_i = \lambda_i$ if $\lambda_i \in \mathbb{R}$ and $s_i = |\lambda_i|$ otherwise, then
    \begin{equation}
      s \in \mathcal{T},
      \label{eq:JC_thm}
    \end{equation}
        where $\mathcal{T} \subset \mathbb{R}^3$ is the tetrahedron whose corners are $(1,1,1)$, $(1,-1,-1)$, $(-1,1,-1)$, and $(-1,-1,1)$.
  \end{itemize}
\end{theorem}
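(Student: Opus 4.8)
\emph{Plan.}
I would prove the two implications by quite different means: sufficiency (``(2) $\Rightarrow$ (1)'') by an explicit construction, and necessity (``(1) $\Rightarrow$ (2)'') by reduction to the known complete-positivity constraints on qubit channels. The common starting point is the real $4\times 4$ transfer matrix $\widehat T$ of a qubit Kraus map $T$ in the orthonormal Pauli basis $\{ I_B, \sigma_x, \sigma_y, \sigma_z \}/\sqrt 2$: trace preservation forces the first row of $\widehat T$ to be $(1,0,0,0)$, so $\widehat T$ has a lower-right $3\times 3$ block $M\in\mathbb R^{3\times 3}$ and a translation $t\in\mathbb R^3$, and hence $\mathrm{spec}(T)=\{1\}\cup\mathrm{spec}(M)$ with $\mathrm{spec}(M)$ closed under complex conjugation because $M$ is real. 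This gives the first half of statement (2) for free and reduces everything to deciding which vectors $s$ (built from $\mathrm{spec}(M)$ as in the statement) are admissible.

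For sufficiency, given an admissible $\Lambda=\{1,\lambda_1,\lambda_2,\lambda_3\}$ I would exhibit a channel. If $\lambda_1,\lambda_2,\lambda_3$ are all real, so $s=(\lambda_1,\lambda_2,\lambda_3)$, take the Pauli channel $T(\rho)=\sum_{j=0}^3 p_j\,\sigma_j\rho\sigma_j$ with $\sigma_0=I_B$ and $p_0=\tfrac14(1+s_1+s_2+s_3)$, $p_1=\tfrac14(1+s_1-s_2-s_3)$, $p_2=\tfrac14(1-s_1+s_2-s_3)$, $p_3=\tfrac14(1-s_1-s_2+s_3)$; one checks $\sum_j p_j=1$ and that the transfer matrix of $T$ is $\mathrm{diag}(1,s_1,s_2,s_3)$, so $\mathrm{spec}(T)=\Lambda$, while $T$ is a Kraus map iff all $p_j\ge 0$, which is precisely $s\in\mathcal T$. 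If instead $\lambda_{1,2}=|\mu|\,e^{\pm i\theta}$ is a genuine complex-conjugate pair and $\lambda_3=\nu\in\mathbb R$, so $s=(|\mu|,|\mu|,\nu)$ up to order, I compose the Pauli channel whose transfer matrix is $\mathrm{diag}(1,|\mu|,|\mu|,\nu)$ — a Kraus map since $(|\mu|,|\mu|,\nu)\in\mathcal T$ by hypothesis and $\mathcal T$ is symmetric under permutations — with the unitary channel $\rho\mapsto e^{-i\theta\sigma_z/2}\rho\,e^{i\theta\sigma_z/2}$, which rotates the Bloch sphere by $\theta$ about $z$. The composition is a Kraus map whose $3\times 3$ block is that rotation applied to $\mathrm{diag}(|\mu|,|\mu|,\nu)$, hence has eigenvalues $|\mu|e^{\pm i\theta}$ and $\nu$; thus $\mathrm{spec}=\Lambda$ as desired.

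For necessity I would show that completeness of positivity forces $s\in\mathcal T$, in three steps. (i) Reduce $M$ to block form: twirling $T$ over the group $\{I_B,\sigma_z\}$ (or the appropriate Pauli conjugation) yields another Kraus map whose $M$-block is block-diagonal, separating the real eigenvalue(s) from any $2\times 2$ block $B$ carrying a complex pair, and this twirl leaves $\mathrm{spec}(M)$ unchanged. (ii) Invoke the Ruskai--Szarek--Werner / King--Ruskai canonical form: by pre- and post-composition with unitary channels the $M$-block becomes a signed-diagonal $\mathrm{diag}(\sigma_1,\sigma_2,\sigma_3)$, and complete positivity of the resulting channel implies, among the Choi-positivity inequalities, $(1\pm\sigma_3)^2\ge(\sigma_1\pm\sigma_2)^2$, i.e.\ $(\sigma_1,\sigma_2,\sigma_3)\in\mathcal T$; the translation $t$ only makes these tighter, so they hold a fortiori. (iii) Transfer this from the canonical-form parameters to the eigenvalue vector $s$, using $\prod_i\sigma_i=\det M=\prod_i\lambda_i$ together with the relation between moduli of eigenvalues and singular values of a real matrix, and the fact that $\mathcal T$ is the convex hull of the four $(\pm 1,\pm 1,\pm 1)$ with an even number of minus signs, hence stable under the relevant contractions.

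The hard part will be step (iii). The data in statement (2) is the \emph{spectrum} of $T$, whereas every standard complete-positivity criterion for qubit channels is phrased through the \emph{singular values} of $M$, and for a non-normal real $M$ these genuinely differ — a $2\times 2$ block with eigenvalues $\pm i$ can have singular values $2$ and $1/2$ — so one cannot simply identify $s$ with the singular-value vector. My way around it would exploit that after the twirl of step (i) the $2\times 2$ block $B$ satisfies $\det B=|\mu|^2$, whence $\sigma_1(B)+\sigma_2(B)\ge 2\sqrt{\det B}=2|\mu|$ by arithmetic--geometric mean; combined with the singular-value inequality $1+\sigma_3\ge\sigma_1(B)+\sigma_2(B)$ from step (ii) this yields the $s$-face inequality $1+\nu\ge 2|\mu|$, and the remaining three faces of $\mathcal T$ follow analogously. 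The only genuinely delicate bookkeeping is the sign of $\sigma_3$ in the canonical form (pinned down by $\sigma_1\sigma_2\sigma_3=\det M=|\mu|^2\nu$) and the non-unital contributions, which can be absorbed precisely because they strengthen rather than weaken the inequalities.
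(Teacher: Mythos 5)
First, a caveat about the comparison itself: the paper does not prove this statement. It is imported verbatim as Theorem~1 of the cited Wolf--Perez-Garcia paper (arXiv:1005.4545), and the theorem environment is even named after those authors; the text preceding it says only ``This is guaranteed by Theorem 1 of [WPG10].'' So there is no internal proof to measure you against, and your attempt has to be judged on its own. Your reduction to the real $3\times 3$ block $M$ (giving $\Lambda=\{1\}\cup\mathrm{spec}(M)$ with conjugation symmetry for free) and your sufficiency construction are both correct: the Pauli channel with $p_j=\tfrac14(1\pm s_1\pm s_2\pm s_3)$ realizes any real $s\in\mathcal T$, and composing $\mathrm{diag}(1,|\mu|,|\mu|,\nu)$ with a Bloch rotation about $z$ realizes a conjugate pair, using the permutation symmetry of $\mathcal T$. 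That half is essentially the standard (and the original) argument.

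The necessity direction, however, has a genuine gap, and it sits exactly where you yourself flagged ``the hard part.'' Step (i) fails as stated: twirling over $\{I_B,\sigma_z\}$ replaces $M$ by $\tfrac12\bigl(M+R\,M\,R\bigr)$ with $R=\mathrm{diag}(-1,-1,1)$, and an average of two similar matrices does not have the same spectrum as either one (e.g.\ the permutation-like $M$ with $M_{xz}=M_{zx}=1$ and all other entries zero has eigenvalues $\{1,-1,0\}$ but twirls to the zero matrix). Step (ii) has the same defect in a different guise: reaching the King--Ruskai/RSW signed-diagonal form requires \emph{pre- and post-}composition with two different unitaries, i.e.\ $M\mapsto O_1MO_2$, which is an SVD-type operation and discards the eigenvalues entirely; the inequalities $(1\pm\sigma_3)^2\ge(\sigma_1\pm\sigma_2)^2$ you extract there constrain the signed singular values of $M$, not its spectrum. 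Your step (iii) then tries to bridge the two by combining $\det B=|\mu|^2$ and AM--GM (computed for a block of a \emph{similarity}-transformed $M$, where the block decomposition is at best block upper triangular, so the singular values of the blocks are not those of $M$) with the inequality $1+\sigma_3\ge\sigma_1+\sigma_2$ (which holds for the singular values of the \emph{equivalence}-transformed $M$). These two normal forms are not simultaneously achievable, so the chain $1+\nu\ge\sigma_1(B)+\sigma_2(B)\ge 2|\mu|$ is not justified as written. Weyl's log-majorization $\prod_{i\le k}|\lambda_i|\le\prod_{i\le k}\sigma_i$ controls products, not the sums appearing in the faces of $\mathcal T$, so it does not rescue the step either. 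A workable route (closer to what Wolf and Perez-Garcia actually do) is to exploit similarity invariants directly: for any Kraus map, $\mathrm{tr}\,\widehat{T}=\sum_k|\mathrm{tr}\,M_k|^2\ge 0$ gives the basis-independent face $1+\lambda_1+\lambda_2+\lambda_3\ge 0$, and the remaining faces are obtained by composing with Pauli unitaries and by a separate argument handling the complex-pair case through $\mathrm{tr}$ and $\det$ of the $2\times2$ invariant block. As it stands, your necessity proof does not close.
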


The spectrum of the time-evolution map, $\exp(\mathcal{L}_s^G t)$, is gauge invariant and is given by 
\begin{equation}
  \{ 1, e^{-t/T_2 + i \omega_B^{(4)}t}, e^{-t/T_2 - i \omega_B^{(4)} t}, e^{-t/T_1} \},
  \label{eq:JC_LsPTspec}
\end{equation}
in any gauge choice $G$.
Using the above theorem, we can show that, when $\gamma_\phi^{(4)} < 0$ and for an infinitesimal time $t$, there does not exist a Kraus map whose spectrum reads Eq.$\,$(\ref{eq:JC_LsPTspec}).
The proof is provided in Appendix \ref{app:damposc.4thL_WPG} .
As the time-evolution with a Lindblad generator is completely positive in the entire time regime, this result indicates that $\mathcal{L}_s^G$ is not a Lindbladian when $\gamma_\phi^{(4)} < 0$, in any gauge choice $G$.

In conclusion of this section, the oscillator-qubit system discussed here serves as a counterexample to the conjecture in \cite{Azouit17}.

\section{Discussion}
\label{sec:diss}

In this section, we delve into the interpretation of the complete positivity violation.
This section is structured as follows.
Initially, a summary of the findings obtained so far is presented, together with their connections to related previous studies, in Sec.$\,$\ref{sec:diss_summary}.
To interpret the results for the partial trace parametrization, we recall that adiabatic elimination describes the evolution on the invariant manifold.
For initial states outside the invariant manifold, the short-time transient regime is neglected.
The impact of this exclusion on the properties of the generator is addressed in Sec.$\,$\ref{sec:diss_transient}.
In the presence of the interaction term $\epsilon \mathcal{L}_{\rm int}$, the invariant manifold is in general characterized by correlated states.
The evolution on the invariant manifold thus starts with an initially correlated state and the complete positivity violation can be understood from this viewpoint.
These are further elaborated in Sec.$\,$\ref{sec:diss_initial.correlated}.
At last, several remarks on the role of the gauge degree of freedom are presented in Sec.$\,$\ref{sec:diss_gauge}.

\subsection{Summary of the results in the previous sections}
\label{sec:diss_summary}

We have seen that it is the violation of complete positivity that causes a non-Lindblad form of the generator.
The generator is given by a Lindbladian if and only if the time evolution map satisfies the semigroup relation and is a Kraus map.
The semigroup relation is guaranteed because the slow dynamics is restricted on an invariant manifold exactly.
The Hermitian and trace preservations are also satisfied generally.
Complete positivity, however, is a nontrivial condition and it can be violated as we have seen in the previous sections.
We recall that not only complete positivity but even positivity of the time evolution map can be violated as shown with the qubit-qutrit example in Sec.$\,$\ref{sec:Diag}.
In this case, the density matrix acquires a negative eigenvalue depending on the initial state, and thus the result cannot be interpreted physically on $\rho_B$ alone.

In addition, we stress that the complete positivity violation is not an artifact caused by truncating the series expansion at a finite order.
With the qubit-qutrit example presented in Sec.$\,$\ref{sec:Diag}, we have seen that the negative coefficient in front of a dissipator appears even in all-order analysis.
For the oscillator-qubit example in Sec.$\,$\ref{sec:JC}, contributions higher than the fourth-order 
bring corrections to the spectrum of $\mathcal{L}_s$. However, those corrections cannot restore complete positivity. This can be seen from the condition Eq.$\,$(\ref{eq:damposc.4thL_CPinequality}) for complete positivity given in Appendix \ref{app:damposc.4thL}.
These observations differ from the argument in \cite{Muller17}, where the authors derived a non-Lindblad master equation in a fourth-order perturbation calculation applied to a Hamiltonian system and speculated that the deviation from the Lindblad form could be attributed to the truncation in the perturbation series.

As a master equation violating positivity, let alone complete positivity, the Redfield equation is widely known \cite{Breuer02,Redfield57}.
While it was originally derived for Hamiltonian systems, the authors of \cite{Damanet19} applied the same derivation procedure for a Lindblad system and found similarly a non-Lindblad master equation for the reduced dynamics.
We note that the Redfield equation is a second-order master equation.
In our settings, on the other hand, the Lindblad form is generally guaranteed in the second-order approximation.
The difference attributes to the timescale of the internal dynamics of $\mathscr{H}_B$ (the term involving $\mathcal{L}_B$ in Eq.$\,$(\ref{eq:AE_master})).
While we assume a slow timescale and treat it perturbatively, the derivation of the Redfield equation leading to positivity violations in \cite{Damanet19} assumes a fast timescale.
It is noteworthy that, even with a slow timescale, the complete positivity violation can arise in the higher-order approximation. 
The positivity violation in the Redfield equation has been explored in the literature and we discuss connections to our findings in Sec.$\,$\ref{sec:diss_transient}.

\subsection{Transient regime}
\label{sec:diss_transient}

In order to understand the origin of a non-Lindblad form, let us recall the reason why we expected a completely positive evolution in the first place.
To this end, we write the time evolution of the partial trace via that of the total state.
If we assume a product initial state with $\bar{\rho}_A$, we have
\begin{equation}
  \rho_B (t) = {\rm tr}_A \circ e^{\mathcal{L}_{\rm tot} t} (\bar{\rho}_A \otimes \rho_B(t = 0)).
  \label{eq:diss_InitialProduct}
\end{equation}
Note that ${\rm tr}_A$, $\exp( \mathcal{L}_{\rm tot} t)$, and the map that sends $\rho_B(t = 0)$ to $\bar{\rho}_A \otimes \rho_B(t = 0)$ are all completely positive. Therefore, the time evolution map from an initial state $\rho_B(t = 0)$ to $\rho_B(t)$ is completely positive.

Here it should be recalled that, in adiabatic elimination, we initialize a state on an invariant manifold.
When $\epsilon = 0$, a set of density matrices on the invariant manifold is characterized by product states as $\mathcal{K}_0^{G=0}(\rho_B) = \bar{\rho}_A \otimes \rho_B$ (see Eq.$\,$(\ref{eq:AE_asymexp0})).
In this case, Eq.$\,$(\ref{eq:diss_InitialProduct}) describes the slow dynamics on the invariant manifold, where $\bar{\rho}_A$ is fixed.
When $\epsilon > 0$, however, the invariant manifold is not anymore of the form $\{ \bar{\rho}_A \otimes \rho_B \}$ , so the initial state of Eq.$\,$(\ref{eq:diss_InitialProduct}) does not lie on the invariant manifold, and Eq.$\,$(\ref{eq:diss_InitialProduct}) includes the transient dynamics of the fast degrees of freedom.
Long after decay time of the fast sub-system, say $t \geq t_{\rm inv}$, the total state $\rho(t)$ is approximately \cite{inv.close} on the invariant manifold.
In this situation, the time evolution map calculated in adiabatic elimination is the one that sends $\rho_B(t_{\rm inv}) = {\rm tr}_A (\rho (t_{\rm inv}))$ to $\rho_B(t) \ ( t \geq t_{\rm inv})$.
There is no guarantee that this map is completely positive, even though the map that sends $\rho_B(t=0)$ to $\rho_B(t)$ is completely positive.
This situation is illustrated in Fig.$\,$\ref{fig:diss_initialization}.

\begin{figure}[t]
  \includegraphics[keepaspectratio, scale=0.35]{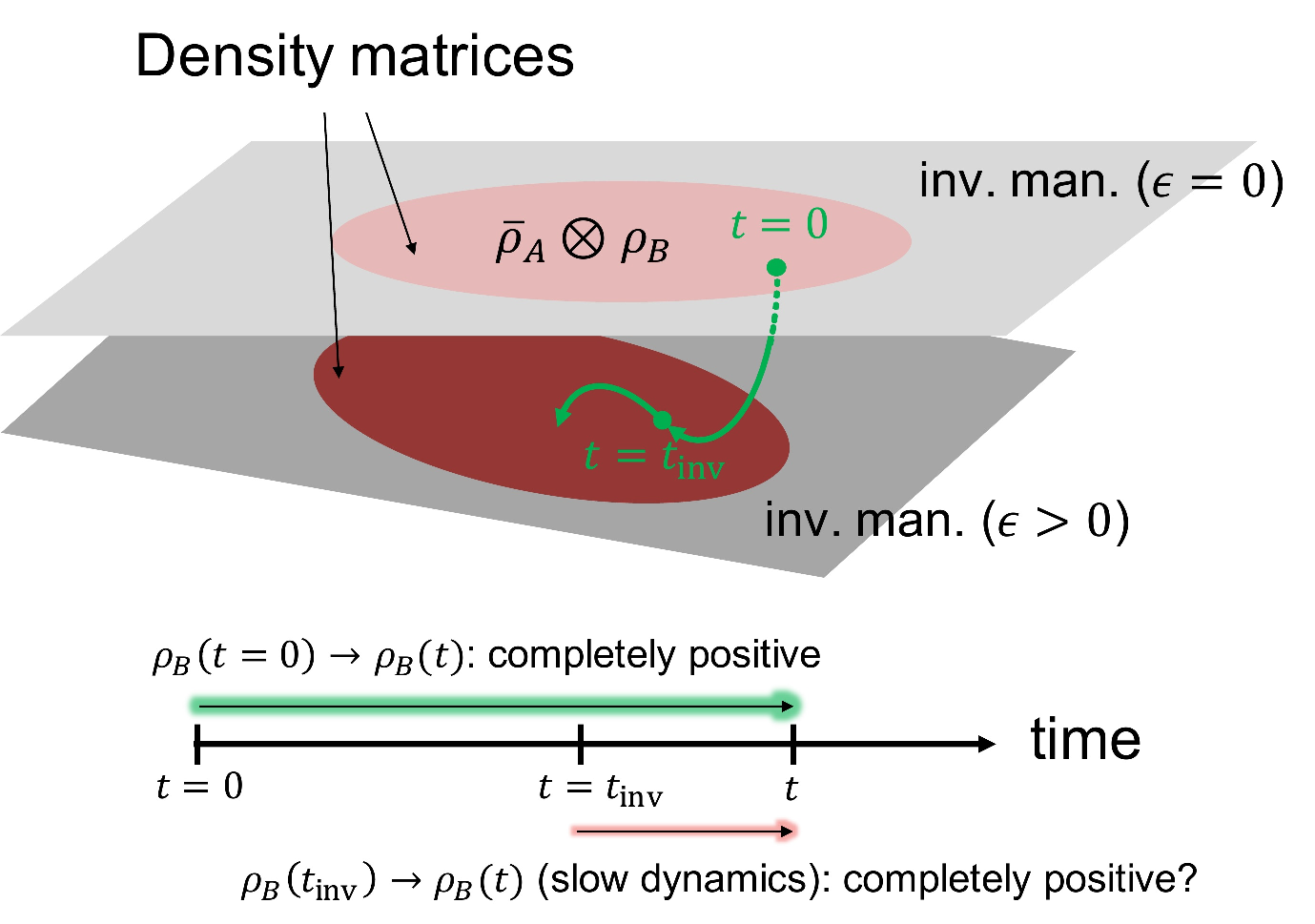}
  \caption{
  Schematic illustration of the total state evolution with an initial product state.
  On the grayish planes showing invariant manifolds with $\epsilon = 0$ (top) and $\epsilon > 0$ (bottom), the reddish areas show a set of density matrices.
  As indicated by the green arrow, the trajectory starting from a product state is first attracted to the invariant manifold with $\epsilon > 0$ (the darker grayish plane) and then is restricted there for $t \geq t_{\rm inv}$ (the slow dynamics).
  While complete positivity is ensured for the partial trace evolution including the attracting phase ($\rho_B(t = 0) \to \rho_B(t)$), it is not always the case for the slow dynamics ($\rho_B(t_{\rm inv}) \to \rho_B(t)$).
  }
  \label{fig:diss_initialization}
\end{figure}

In the transient regime, a master equation for the partial trace depends explicitly on time owing to the evolution of the fast degrees of freedom.
After relaxation of them, we obtain a time-independent master equation, and that is what we calculate in adiabatic elimination.
In this view, we note a similarity to nonpositivity in the Redfield equation.
The Redfield equation with the asymptotic time-independent coefficient violates (complete) positivity.
On the other hand, positivity is restored by taking into account time-dependence of the coefficient as shown in \cite{Maniscalco04,Whitney08,Hartmann20}.
The same should hold true in the composite Lindblad systems discussed in this paper.
In Appendix \ref{app:exactCP}, this expectation is confirmed for the qudit system in Sec.$\,$\ref{sec:Diag} by computing the exact master equation for the partial trace with a product initial state.
The non-Lindblad generators found in this paper can thus be considered as a signature of non-Markovianianity \cite{Rivas14,Breuer16}, in the sense that they must be associated with particular past evolution in the transient regime.
While the analytic structure of the complete positivity violation is similar between higher-order adiabatic elimination and the Redfield equation, the geometric picture is not applicable to the latter.
We discuss below that it provides us with a clear interpretation of the nonpositive evolution.

\subsection{Correlated initial states}
\label{sec:diss_initial.correlated}

To account for the initialization on the invariant manifold, Eq.$\,$(\ref{eq:diss_InitialProduct}) should be rewritten as
\begin{equation}
  \rho_B (t) = {\rm tr}_A \circ e^{\mathcal{L}_{\rm tot} t} \circ \mathcal{K}^{G=0} (\rho_B(t = 0)).
  \label{eq:diss_rhobt}
\end{equation}
From this equation and complete positivity of ${\rm tr}_A$ and $e^{\mathcal{L}_{\rm tot} t}$,
the complete positivity violation of the time evolution map stems from that of $\mathcal{K}^{G=0}$.
In what follows, we investigate properties of $\mathcal{K}^{G=0}$.
Here we concentrate on the oscillator-qubit system.
A similar analysis for a general class of settings is presented in Appendix \ref{app:K2L3_K2.positive}.

For the oscillator-qubit system, let us assume $\Delta_A = 0$ for simplicity.
Then, the expansion up to the second-order of $g/\gamma$ reads (see Appendix \ref{app:K2L3_K2.positive})
\begin{equation}
  \begin{gathered}
    \mathcal{K}^{G=0} (\rho_B) = W (\bar{\rho}_A \otimes \rho_B) W^\dagger \\
    - \Big[ \frac{4 g^2 (1 + n_{\rm th})}{\gamma^2} (I_A \otimes \sigma_-) (\bar{\rho}_A \otimes \rho_B) (I_A \otimes \sigma_-)^\dagger \\
    + \frac{4 g^2 n_{\rm th}}{\gamma^2} (I_A \otimes \sigma_+) (\bar{\rho}_A \otimes \rho_B) (I_A \otimes \sigma_+)^\dagger \Big],
  \end{gathered}
  \label{eq:diss_KG0}
\end{equation}
with $I_A$ the identity operator on $\mathscr{H}_A$ and $W = I_A \otimes I_B - (2ig/\gamma) (a^\dagger \otimes \sigma_- + a \otimes \sigma_+) - (2 g^2/\gamma^2) (a^\dagger \! a - n_{\rm th} I_A) \otimes I_B$.

Because of the minus signs in the last two lines, the second-order expansion of $\mathcal{K}^{G=0}$ is not completely positive.
It should be noted that the time evolution map can be completely positive even if $\mathcal{K}^{G=0}$ is not.
If $n_{\rm th} = 0$, for instance, $\gamma_\phi^{(4)} = 0$ and the generator $\mathcal{L}_{s}^{G=0}$ is in the Lindblad form up to the fourth-order terms, despite that $\mathcal{K}^{G=0}$ still contains a negative term.

Strikingly, $\mathcal{K}^{G=0}$ in Eq.$\,$(\ref{eq:diss_KG0}) is not even positive.
We prove in Appendix \ref{app:K2L3_K2.positive} that $\mathcal{K}^{G=0}(\rho_B)$ is not positive semidefintie for any pure reduced state $\rho_B$.
This situation can be illustrated as Fig.$\,$\ref{fig:diss_entangled}, where we introduce the following symbols; let $\mathscr{M}_{\rm inv}$ be the set of density matrices on the invariant manifold and $\mathscr{D}(\mathscr{H})$ be the set of density matrices on a Hilbert space $\mathscr{H}$.
The assignment map is $\mathcal{K}^{G=0}: \mathscr{D}(\mathscr{H}_B) \to \mathscr{M}_{\rm inv}$ and, conversely, the partial trace operation ${\rm tr}_A$ can be seen as a map ${\rm tr}_A: \mathscr{M}_{\rm inv} \to \mathscr{D}(\mathscr{H}_B)$.
We first note that, for every $\rho^* \in \mathscr{M}_{\rm inv}$, its partial trace is a density matrix,  $\rho_B^* \equiv {\rm tr}_A (\rho^*) \in \mathscr{D}(\mathscr{H}_B)$,
and we have $\rho^* = \mathcal{K}^{G=0}(\rho_B^*)$ by the construction of $\mathcal{K}^{G=0}$.
In addition to this, we have seen for the oscillator-qubit system that the second-order expansion of $\mathcal{K}^{G=0}$ is not positive.
These imply $\mathscr{M}_{\rm inv} \subset \mathcal{K}^{G=0}(\mathscr{D}(\mathscr{H}_B))$.
Similarly, while the partial trace map is injective from ${\rm tr}_A \circ \mathcal{K}^{G=0} = \mathcal{I}_B$ and the uniqueness of $\mathcal{K}^{G=0}$, it is not surjective because, for the oscillator-qubit system, any pure density matrix on $\mathscr{H}_B$ cannot be obtained by taking the partial trace of states in $\mathscr{M}_{\rm inv}$.
In other words, we have ${\rm tr}_A (\mathscr{M}_{\rm inv}) \subset \mathscr{D}(\mathscr{H}_B)$.

\begin{figure}[t]
  \includegraphics[keepaspectratio, scale=0.43]{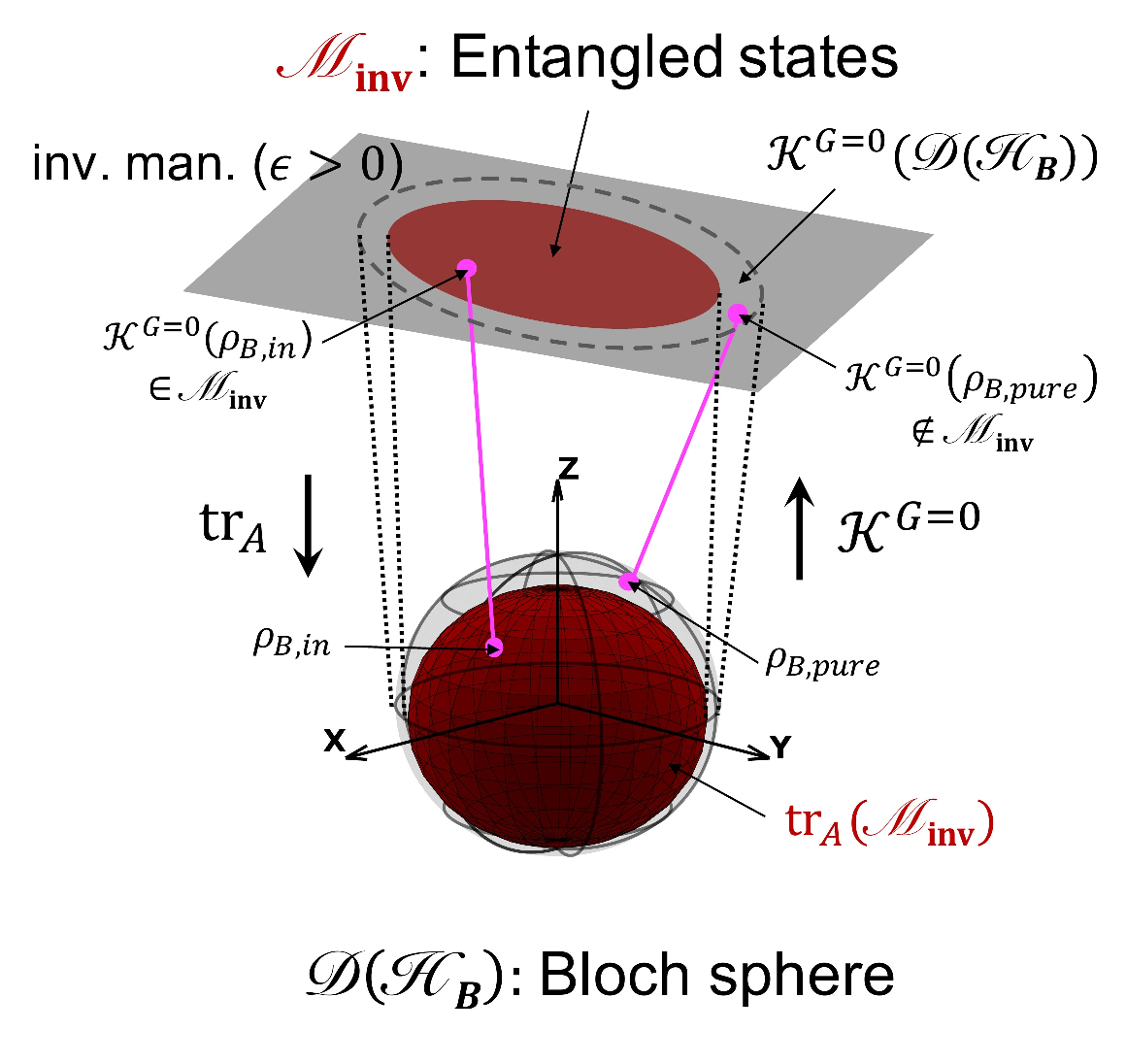}
  \caption{
  Schematic illustration of the nonsurjective property of the partial trace ${\rm tr}_A$ and the positivity violation of $\mathcal{K}^{G=0}$ for the oscillator-qubit system.
  Density matrices on the invariant manifold $\mathscr{M}_{\rm inv}$ shown by the reddish area include entangled states when $\epsilon > 0$.
  As a result, the partial trace ${\rm tr}_A (\mathscr{M}_{\rm inv})$, shown by the reddish spheroid, yields a subset of the Bloch sphere $\mathscr{D}(\mathscr{H}_B)$.
  Conversely, assignment of the whole Bloch sphere $\mathcal{K}^{G=0}(\mathscr{D}(\mathscr{H}_B))$ results in a larger set, enclosed by the dashed line on the invariant manifold, than only the positive-semidefinite matrices of $\mathscr{M}_{\rm inv}$.
  }
  \label{fig:diss_entangled}
\end{figure}

The positivity violation of $\mathcal{K}^{G=0}$ (the nonsurjective property of the partial trace map) can be understood from the entanglement between the two sub-systems as discussed for Hamiltonian systems in \cite{CPdebate}.
When $\epsilon = 0$, states in $\mathscr{M}_{\rm inv}$ are characterized by $\bar{\rho}_A \otimes \rho_B$ as mentioned above.
For these noncorrelated product states, any reduced state $\rho_B$ can be assigned to a valid total state.
When $\epsilon > 0$, on the other hand, states in $\mathscr{M}_{\rm inv}$ are no longer in the product form due to the interaction term.
At the second-order of $\epsilon$, they include entangled states.
Then, there exist pure reduced states that cannot be assigned to a valid total state.

This perspective also provides an interpretation of why the Lindblad form is attained including up to the second-order contribution.
This topic is further explored in Appendix \ref{app:K2L3_L3}, where we also speculate that $\mathcal{L}_{s}^{G=0}$ might admit the Lindblad form including the third-order contribution in general settings.
Paolo examined a similar problem under some assumptions and found the Lindblad form of the generator including up to the third-order contribution in a gauge choice \cite{PF}.

Complete positivity of the reduced dynamics for correlated initial states have been investigated in various Hamiltonian systems \cite{Stelmachovic01,Jordan04,McCracken13}.
A remarkable result was given in \cite{Rosario08}, in which the authors proved that vanishing quantum discord is sufficient for completely positive reduced dynamics.
Quantum discord is a measure of the quantumness of correlation in a state, that quantifies the amount of information loss when subjected to the least disturbing projective measurement \cite{Henderson01,Ollivier01,Modi13}.
In \cite{Shabani09}, it was claimed that vanishing quantum discord is also necessary for completely positive reduced dynamics.
However, subsequent research revealed counterexamples \cite{Brodutch13,Buscemi14}.
It turned out that vanishing quantum discord is equivalent to completely positive reduced dynamics only for initial states in the form
\begin{equation}
    \rho = \sum_{i,j} c_{ij} \, \phi_{ij} \otimes \ketbra{i}{j},
    \label{eq:diss_neccesary.qdiscord}
\end{equation}
with complex coefficients $\{ c_{ij} \}$, fixed environment operators $\{ \phi_{ij} \}$, and an orthonormal basis in the target system $\{ \ket{i} \}$ \cite{Shabani09}.

Note that, for the dispersive coupling system in Sec.$\,$\ref{sec:Diag}, states on the invariant manifold adhere to the configuration Eq.$\,$(\ref{eq:diss_neccesary.qdiscord}) (see Eq.$\,$(\ref{eq:Diag_KPT})).
Here we estimate quantum discord of this system to examine its potential correlations with the complete positivity violation.
Detailed methodology for estimating quantum discord through random sampling is presented in Appendix \ref{app:qdiscord}, along with a thorough discussion on the connections between the previous studies and the present study.
Fig.$\,$\ref{fig:qdiscord} compares representative values of quantum discord on the invariant manifold and the extent of complete positivity violation across various parameter sets $(\Omega,\Delta)$, with the coupling strength fixed at $\chi/\kappa = 0.1$.
The latter is estimated through the negativity of the coefficients in front of dissipators as 
\begin{equation}
    \eta(\mathscr{M}_{\rm inv}) = \frac{\sum_{i=1}^2 (|\mu_i| - \mu_i)/2}{\sum_{i=1}^2 |\mu_i|},
    \label{eq:diss_negativity}
\end{equation}
with $\mu_i \ (i=1,2)$ being the eigenvalues of the matrix $S^\top \lambda S$ in Eq.$\,$(\ref{eq:vect_LsPT_general}).
It follows that $0 \leq \eta(\mathscr{M}_{\rm inv}) \leq 1$ and that $\eta(\mathscr{M}_{\rm inv}) = 0$ if and only if the reduced dynamics are completely positive.
We see in Fig.$\,$\ref{fig:qdiscord} a qualitative agreement in the parameter dependence; quantum discord on the invariant manifold tends to be large in the parameter regime where there is a significant violation of complete positivity.
We observe a similar trend with different coupling strengths $\chi/\kappa$.
These observations are consistent with what one would expect from our interpretation of the complete positivity violation.

\begin{figure*}[t]
    \includegraphics[keepaspectratio, scale=0.7]{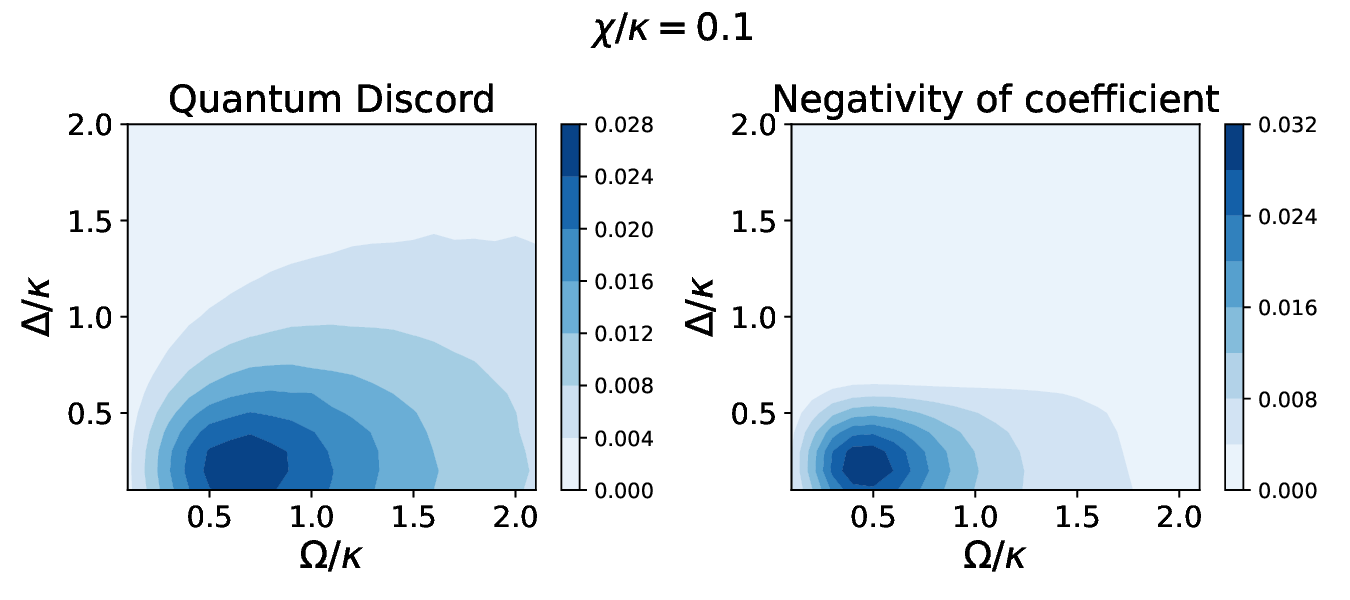}
  \caption{
  Correlation between (left) the strength of correlation on the invarinat manifold measured by quantum discord, $\delta_B(\mathscr{M}_{\rm inv})$ in Eq.$\,$(\ref{eq:qdiscord_def.inv}), and (right) the degree of the complete positivity violation measured by the negativity of the coefficients in front of dissipators, $\eta(\mathscr{M}_{\rm inv})$ in Eq.$\,$(\ref{eq:diss_negativity}).
  To obtain the left figure, we sample $N_{\rm proj} = 500$ orthogonal projectors and $N_{\rm state} = 500$ states on the invariant manifold (see Appendix \ref{app:qdiscord} for details).
  These are the results with the coupling strength $(\chi_1,\chi_2,\chi_3) = (0,\chi,2\chi)$ and $\chi/\kappa = 0.1$, the same as Fig.$\,$\ref{fig:Diag_sign}.
  }
  \label{fig:qdiscord}
\end{figure*}

\subsection{Role of the gauge degree of freedom}
\label{sec:diss_gauge}

So far we have concentrated on the parametrization via the partial trace.
Now let us consider how the gauge degree of freedom plays its role in the above discussion.
While the second-order expansion of $\mathcal{K}^{G=0}$ is not completely positive, there always exist gauge choices $G$ such that $\mathcal{K}^{G}$ has that property.
This was proved in Theorem 6 of \cite{AzouitThesis} and the calculation is repeated in Appendix \ref{app:K2L3_K2.geneal}.
The gauge choices restoring complete positivity of the assignment are obtained in Eq.$\,$(\ref{eq:K2L3_G2}).
For the oscillator-qubit system, one choice (setting $z = 1$ in Eq.$\,$(\ref{eq:K2L3_G2})) reads
\begin{equation*}
  G(\rho_s) = \frac{4 g^2 (1 + n_{\rm th})}{\gamma^2} \mathcal{D}[\sigma_-] (\rho_s) + \frac{4 g^2 n_{\rm th}}{\gamma^2} \mathcal{D}[\sigma_+] (\rho_s),
\end{equation*}
and $\mathcal{K}^{G}$ up to the second-order expansion is given by
\begin{equation*}
  \mathcal{K}^{G} (\rho_s) = W (\bar{\rho}_A \otimes \rho_s) W^\dagger,
\end{equation*}
where $W$ was introduced in Eq.$\,$(\ref{eq:diss_KG0}).

Even when $\mathcal{K}^G$ is completely positive, however, the Lindblad form of the generator $\mathcal{L}_s^G$ is not guaranteed.
As in Eq.$\,$(\ref{eq:diss_rhobt}), we have
\begin{equation}
  \rho_B (t) = {\rm tr}_A \circ e^{\mathcal{L}_{\rm tot} t} \circ \mathcal{K}^{G} ( \rho_s(0) ).
  \label{eq:diss_rhoBt.rhos0}
\end{equation}
Note that the map $\Upsilon_t^G = {\rm tr}_A \circ e^{\mathcal{L}_{\rm tot} t} \circ \mathcal{K}^{G}$ is completely positive, but now its input is not $\rho_B(0)$ anymore.
To recover a semigroup starting at identity of the form $\exp(\mathcal{L}_s^G t)$, we must either map $\rho_s(0)$ to $\rho_B(0)$, or map $\rho_B(t)$ to $\rho_s(t)$ in Eq.$\,$(\ref{eq:diss_rhoBt.rhos0}). The conclusions will be similar, so let us consider the evolution of $\rho_s$ by using $\rho_s(t) = (\mathcal{I}_B + G)^{-1} (\rho_B(t))$;
\begin{equation}
  \rho_s (t) = (\mathcal{I}_B + G)^{-1} \circ \Upsilon_t^G (\rho_s(0)).
  \label{eq:diss_rhost}
\end{equation}
When $\mathcal{K}^{G}$ is completely positive, so is $(\mathcal{I}_B + G)$.
However, $(\mathcal{I}_B + G)^{-1}$ is not completely positive in general, neither is the time evolution map for $\rho_s$, i.e., $(\mathcal{I}_B + G)^{-1} \circ \Upsilon_t^G$.
We have seen above that the complete positivity violation of the partial trace evolution is a natural consequence of quantum correlations.
On the other hand, we cannot judge from Eq.$\,$(\ref{eq:diss_rhost}) whether it is possible to restore complete positivity or not.
This underlines the importance of the analysis in Sec.$\,$\ref{sec:JC}.

These results add a new perspective to the debate between Pechukas and Alicki regarding complete positivity of the reduced dynamics \cite{CPdebate}.
A representation akin to $\rho_B(t) = \Upsilon_t^G (\rho_s(0))$ was considered by Alicki as an example of completely positive evolution starting from a correlated initial state.
In Pechukas' reply, it was emphasized that one should discuss the property of the map sending $\rho_B(0)$ to $\rho_B(t)$, which might not be completely positive.
This counterargument overlooks the possibility of achieving complete positivity in the map sending $\rho_s(0)$ to $\rho_s(t)$.
Such an alternative representation of the reduced dynamics $\rho_s$, which is different from the conventional partial trace $\rho_B$, naturally emerges in the geometric formulation as a different parametrization of the invariant manifold. 
The analysis in Sec.$\,$\ref{sec:JC} elucidates that achieving complete positivity through this alternative representation is, generally speaking, impossible.

In Appendix \ref{app:gauge}, further discussions are presented on the potential roles that the gauge degree of freedom plays in the practical applications of this formulation, along with the challenges encountered.


\section{Concluding remarks}
\label{sec:conc}

Higher-order adiabatic elimination (higher than the second-order) can lead to the slow evolution of the partial trace that is not completely positive.
This is due to the fact that states on an invariant manifold include correlation between the two initial sub-systems, which imposes a restriction on the domain of proper reduced states.
Although the existence of a gauge choice restoring complete positivity was conjectured, we have shown that it is not the case for the oscillator-qubit system discussed in Sec.$\,$\ref{sec:JC}.

On experimental side, the fact that a higher-order reduced model (for the partial trace) can potentially lead to unphysical results can be understood as follows.
A first nontrivial point is to prepare an initial state within the slow invariant manifold.
Subsequently, the evolution of this state could be tracked to deduce rates related to $\mathcal{L}_s^{G=0}$.
A reduced model described by a non-Lindblad generator then implies that, even if the invariant manifold fits the dimensions of e.g. a qubit, we cannot interpret it as a standalone decaying qubit.
To test this, one way is to measure rates at which the slow dynamics decays, such as $T_1$ and $T_2$ relaxation times for a qubit reduced system.
While those values are positive from the stability condition, we expect the violation of relations among them imposed by the complete positivity condition, such as $T_1 \geq T_2/2$.

In closing, we discuss two further questions raised in this paper.
First, the oscillator-qubit system discussed in Sec.$\,$\ref{sec:JC} is so far the only example where we can rigorously prove the impossibility of restoring complete positivity by the gauge transformation.
It is not clear how general this conclusion is. As discussed underneath Eq.$\,$(\ref{eq:diss_rhost}), we have not yet developed an intuitive understanding of such impossibility.
Our proof utilizes the theorem in \cite{WPG10}, which is applicable only to qubit maps.
Future work in this direction would provide us with a new insight into a constraint on the spectrum of quantum maps in general.
A second question is whether it is possible to find a gauge choice leading to a positive and surjective assignment map $\mathcal{K}^G$, which is relevant to practical applications as detailed in Appendix \ref{app:gauge}.
The existence of a such a gauge choice has so far been confirmed only for a dispersively coupled two qubit system \cite{Alain20}.
To prove such surjectivity, we need to characterize density matrices in the total system.
This problem is already complicated for 3-dimensional systems \cite{Kimura03} and a new approach is warranted.


\begin{acknowledgments}
  This work has been supported by the Engineering for Quantum Information Processors (EQIP) Inria challenge project, the European Research Council (ERC) under the European Union's Horizon 2020 research and innovation programme (grant agreement No. [884762]), French Research Agency through the ANR grant HAMROQS, and by Plan France 2030 through the project ANR-22-PETQ-0006.
\end{acknowledgments}


\appendix


\section{Perturbation solution to the invariance condition}
\label{app:inv.cond.}

In this appendix, we present a way to evaluate the higher-order contributions in the asymptotic expansions Eq.$\,$(\ref{eq:AE_asymexp}).
Inserting the expansions into the invariance condition Eq.$\,$(\ref{eq:AE_inv1}), the $n$-th order of $\epsilon$ reads
\begin{equation}
  \bar{\rho}_A \otimes \mathcal{L}_{s,n} (\rho_s) = \mathcal{L}_A \otimes \mathcal{I}_B (\mathcal{K}_{n} (\rho_s) )) + \mathcal{L}_n(\rho_s),
  \label{eq:inv.cond._inv2}
\end{equation}
where we have introduced
\begin{equation*}
  \begin{gathered}
    \mathcal{L}_n (\rho_s) = (\mathcal{I}_A \otimes \mathcal{L}_B + \mathcal{L}_{\rm int}) (\mathcal{K}_{n-1}(\rho_s)) \\
    - \sum_{k=1}^{n-1} \mathcal{K}_{k} (\mathcal{L}_{s,n-k}(\rho_s)).
  \end{gathered}
\end{equation*}
Since ${\rm tr}_A ( \mathcal{L}_A \bullet ) = 0$, the partial trace over $\mathscr{H}_A$ leads to $\mathcal{L}_{s,n}$ in the form
\begin{equation}
  \mathcal{L}_{s,n} (\rho_s) = {\rm tr}_A ( \mathcal{L}_n(\rho_s) ).
  \label{eq:inv.cond._Ln}
\end{equation}
Equation (\ref{eq:inv.cond._inv2}) can then be rearranged with known terms on the right-hand side;
\begin{equation}
  \mathcal{L}_A \otimes \mathcal{I}_B (\mathcal{K}_{n} (\rho_s)) =  \bar{\rho}_A \otimes \mathcal{L}_{s,n} (\rho_s) - \mathcal{L}_n(\rho_s).
  \label{eq:AE_LAKn}
\end{equation}
To solve this linear equation for $\mathcal{K}_{n} (\rho_s)$, one needs to invert $\mathcal{L}_A$.
Note that $\mathcal{L}_A$ is singular since one of the eigenvalues is zero.
Thus, this linear equation is underdetermined.
By identifying the Kernel of $\mathcal{L}_A \otimes \mathcal{I}_B$, $\mathcal{K}_{n} (\rho_s)$ is determined only up to $\bar{\rho}_A \otimes {\rm tr}_A (\mathcal{K}_{n} (\rho_s))$ by solving (\ref{eq:AE_LAKn}).
We introduce the undetermined part, $G_n = {\rm tr}_A \circ \mathcal{K}_{n}$, which can be any superoperator on $\mathscr{H}_B$.
This gauge degree of freedom is associated with the nonuniqueness of the parametrization.
With $G_n (\rho_s)$, $\mathcal{K}_{n} (\rho_s)$ reads
\begin{equation}
  \begin{gathered}
    \mathcal{K}_{n} (\rho_s) = \mathcal{L}_A^{+} \otimes \mathcal{I}_B (\bar{\rho}_A \otimes \mathcal{L}_{s,n} (\rho_s) - \mathcal{L}_n(\rho_s)) \\
    + \bar{\rho}_A \otimes G_n(\rho_s),
  \end{gathered}
  \label{eq:AE_Kn}
\end{equation}
where $\mathcal{L}_A^{+}$ is the Moore-Penrose inverse of $\mathcal{L}_A$.
One way to calculate $\mathcal{L}_A^{+}$ by making use of the fact that the evolution only with $\mathcal{L}_A$ exponentially converges to a unique steady state was discussed in \cite{Azouit17}.
We revisit their results using the vectorization method in Appendix \ref{app:vect}.
Notice that the definition of the gauge superoperator is different from that in \cite{Azouit17}.
In this paper, the choice $G_n = 0 \ (n=1,2,\dots)$ corresponds to the parametrization via the partial trace $\rho_s = \rho_B$.

From Eqs.$\,$(\ref{eq:inv.cond._Ln}) and (\ref{eq:AE_Kn}), we can obtain $\mathcal{K}_n$ and $\mathcal{L}_{s,n}$ up to a desired order.
From these equations, we see that $\mathcal{K}_n$ and $\mathcal{L}_{s,n+1}$ depend on $G_1, \dots, G_n$.
Therefore, $\mathcal{K}_n$ and $\mathcal{L}_{s,n+1}$ become, for instance, nonlinear functions of $\rho_s$ if so is even one of $G_1, \dots, G_n$.
To meet the conditions that $\mathcal{K}$ and $\mathcal{L}_s$ are linear and time-independent, we assume that $\{ G_n \}$ have those properties.


\section{Adiabatic elimination for a dispersively coupled qudit systems}
\label{app:ae.qudit}

\subsection{All-order adiabatic elimination}
\label{app:ae.qudit_all}

In this appendix, we present the details of the adiabatic elimination calculations for the dispersively coupled qudit system introduced in Sec.$\,$\ref{sec:Diag}.
As discussed in \cite{Alain20}, the invariance condition can be solved without resorting to the asymptotic expansion.
To see this, we recall Eq.$\,$(\ref{eq:Diag_LAmn}), that is, 
\begin{equation*}
  \mathcal{L}_{\rm tot} ( A \otimes E_{m,n} ) = \mathcal{L}_A^{(m,n)}(A) \otimes E_{m,n}.
\end{equation*}
This indicates that the eigenvalue problem of $\mathcal{L}_{\rm tot}$ can be formally solved as
\begin{equation}
  \mathcal{L}_{\rm tot} (Q_{m,n}^{(k)} \otimes E_{m,n}) = \lambda_{m,n}^{(k)} (Q_{m,n}^{(k)} \otimes E_{m,n}),
  \label{eq:ae.qudit_eig_tot}
\end{equation}
for $m,n = 1,\dots,d$ and $k = 1,\dots,d_A^2$, where $\{ Q_{m,n}^{(k)} \}$ satisfy
\begin{equation}
  \mathcal{L}_A^{(m,n)} (Q_{m,n}^{(k)}) = \lambda_{m,n}^{(k)} Q_{m,n}^{(k)}.
  \label{eq:ae.qudit_eig}
\end{equation}
Note $(\lambda_{m,n}^{(k)})^* = \lambda_{n,m}^{(k)}$ and $(Q_{m,n}^{(k)})^\dagger = Q_{n,m}^{(k)}$.
The stability condition reads ${\rm Re}(\lambda_{m,n}^{(k)}) \leq 0$ with ${\rm Re}$ denoting real part.
Given $(m,n)$, the eigenoperator contributing to the slow dynamics is the one with the smallest value of ${\rm Re}(-\lambda_{m,n}^{(k)})$.
For each set $(m,n)$, we rearrange $\{ k \}$ so that $k = 1$ has this property and denote $\lambda_{m,n} = \lambda_{m,n}^{(1)}$ and $Q_{m,n} = Q_{m,n}^{(1)}$ in the following.
In addition, we normalize $Q_{m,n}$ as ${\rm tr}_A(Q_{m,n}) = 1$.
From Eq.$\,$(\ref{eq:ae.qudit_eig}), $Q_{m,m}$ is an eigenoperator of a Lindbladian.
Since the eigenvalues of Lindbladians include $0$ due to the trace preserving property,
we have $\lambda_{m,m} = 0$ for $m = 1,\dots,d$.
To summarize, we have the following properties of the eigenvalues,
\begin{equation}
  \lambda_{m,n}^* = \lambda_{n,m}, \ \ \ \lambda_{m,m} = 0, \ \ \ {\rm Re}(-\lambda_{m,n}) \geq 0,
  \label{eq:ae.quditDiag_lmd.prop}
\end{equation}
and of the eigenoperators,
\begin{equation*}
  Q_{m,n}^\dagger = Q_{n,m}, \ \ \ {\rm tr}_A(Q_{m,n}) = 1.
\end{equation*}

When the inequality
\begin{equation}
  \max_{m,n}{\rm Re}(-\lambda_{m,n}) < \min_{k>1,m,n} {\rm Re}(-\lambda_{m,n}^{(k)}),
  \label{eq:ae.quditDiag_ineq}
\end{equation}
is satisfied, the modes with $k>1$ decay faster than those with $k=0$.
In this case, the total density matrix after the decay of the modes with $k>1$ can be expressed as
\begin{equation*}
  \rho = \sum_{m,n = 1}^d [\rho_B]_{m,n} \ Q_{m,n} \otimes E_{m,n},
\end{equation*}
where $[\rho_B]_{m,n}$ is the $mn$-element of $\rho_B$
the evolution of which is given by
\begin{equation*}
  \frac{d}{dt} [\rho_B]_{m,n} = \lambda_{m,n} \ [\rho_B]_{m,n}.
\end{equation*}
Therefore, the maps $\mathcal{K}$ and $\mathcal{L}_s$ associated with the partial trace parametrization ($G=0$) read
\begin{equation*}
  \mathcal{K}^{G=0} (\rho_B) = \sum_{m,n = 1}^d Q_{m,n} \otimes E_m \rho_B E_n,
\end{equation*}
and
\begin{equation*}
  \mathcal{L}_s^{G=0} (\rho_B) = \sum_{m,n = 1}^d \lambda_{m,n} \ E_m \rho_B E_n.
\end{equation*}

\subsection{Fourth-order adiabatic elimination}
\label{app:ae.qudit_4th}

Using the result in Appendix \ref{app:ae.qudit_all}, we found in Fig.$\,$\ref{fig:Diag_sign} the presence of a parameter region where the generator $\mathcal{L}_s^{G=0}$ is not in the Lindblad form.
The above all-order analysis, however, requires numerical computation of the eigenvalues $\{ \lambda_{m,n} \}$.
It is then unclear whether the non-Lindblad form is obtained even with infinitesimal coupling constants.
To make clear the coupling constant dependence, we here present the results of fourth-order adiabatic elimination.

The setting considered here is presented in Sec.$\,$\ref{sec:Diag}.
The interaction is assumed to be given by Eq.$\,$(\ref{eq:Diag_int}).
To make the coupling constant dependence explicit, we rewrite it as
\begin{equation*}
  \epsilon \mathcal{L}_{\rm int} \bullet = i [\chi ( V_A \otimes B ), \bullet],
\end{equation*}
with $B = \sum_{m=1}^d (\chi_m/\chi) E_m$ and $\chi \in \mathbb{R}_{>0}$ the characteristic coupling strength such that $ \chi_m/\chi  \ (m = 1,\dots,d)$ are in the order of unity.
We consider adiabatic elimination in a frame where the qudit internal dynamics becomes trivial as $\mathcal{L}_B = 0$.
The sub-system $\mathscr{H}_A$ is assumed to be a qubit system described by Eq.$\,$(\ref{eq:Diag_qubit}).
The timescale separation parameter $\epsilon$ then reads $\epsilon = \chi/\kappa$.

We assume that the eigenvalue problem for the generator $\mathcal{L}_A$ is solvable.
Although it might require numerical computation, we note that the results are independent of $\chi$ the coupling constant.
In the vectorized form, we introduce the following notations (see Eq.$\,$(\ref{eq:vect_eig}))
\begin{equation*}
  \hat{\mathcal{L}}_A \dket{X_\alpha} = \kappa \nu_\alpha \dket{X_\alpha}, \ \ \ \dbra{\bar{X}_\alpha} \hat{\mathcal{L}}_A = \kappa \nu_{\alpha} \dbra{\bar{X}_\alpha},
\end{equation*}
for $\alpha = 0,1,2,3$.
Since there exists a unique steady state, we set $\{ \nu_\alpha \}$ such that $\nu_0 = 0$ and ${\rm Re}(\nu_\alpha) < 0 \ (\alpha = 1,2,3)$.
Assuming the linear independence of $\{ X_\alpha \}$, we can obtain the Moore-Penrose inverse of $\hat{\mathcal{L}}_A$ as (see Eq.$\,$(\ref{eq:vect_MPinverse}))
\begin{equation*}
  \hat{\mathcal{L}}_A^+ = \sum_{\alpha = 1}^{3} \frac{1}{\kappa \nu_\alpha} \dketbra{X_\alpha}{\bar{X}_\alpha}.
\end{equation*}

With the Moore-Penrose inverse, we can now calculate the higher-order contributions.
For the partial trace parametrization, $\mathcal{L}_s^{G=0}$ up to the fourth-order expansion reads
\begin{equation}
  \begin{gathered}
    \mathcal{L}_s^{G=0} (\rho_B) = - i \epsilon x_1 B \rho_B + \epsilon^2 x_2 [B \rho_B, B] \\
    + i \epsilon^3 [x_3 B \rho_B B + y_3 \rho_B B^2, B] \\
    + \epsilon^4 [x_4 B^2 \rho_B B - y_4 B^3 \rho_B, B] \\
    + (h.c.),
  \end{gathered}
  \label{eq:app:ae.qudit_LsG0}
\end{equation}
where $(h.c.)$ means the Hermitian conjugate of all the terms prior.
The coefficients $x_k \ (k = 1,2,3,4)$ and $y_k \ (k = 3,4)$, which are independent of $\chi$ the coupling constant, are defined by
\begin{equation*}
  x_1 / \kappa = (V_A)_0,
\end{equation*}
\begin{equation*}
  \begin{gathered}
    x_2 / \kappa
    = - \sum_{\alpha=1}^{3} (V_A)_{0,\bar{\alpha}}^* (V_A)_\alpha \frac{1}{\nu_\alpha^*},
  \end{gathered}
\end{equation*}
\begin{equation*}
  \begin{gathered}
    x_3 / \kappa
    = - (V_A)_{0} \sum_{\alpha=1}^{3} (V_A)_{0,\bar{\alpha}}^* (V_A)_\alpha \frac{1}{(\nu_\alpha^2)^*} \\
    + \sum_{\alpha,\beta=1}^{3} (V_A)_{0,\bar{\alpha}} (V_A)_{\alpha,\bar{\beta}}^* (V_A)_\beta \frac{1}{\nu_\alpha \nu_\beta^*},
  \end{gathered}
\end{equation*}
\begin{equation*}
  \begin{gathered}
    y_3 / \kappa
    = - (V_A)_{0} \sum_{\alpha=1}^{3} (V_A)_{\bar{\alpha},0}^* (V_A)_\alpha \frac{1}{(\nu_\alpha^2)^*} \\
    + \sum_{\alpha,\beta=1}^{3} (V_A)_{0,\bar{\alpha}} (V_A)_{\bar{\beta},\alpha}^* (V_A)_\beta \frac{1}{\nu_\alpha \nu_\beta^*},
  \end{gathered}
\end{equation*}
\begin{equation*}
  \begin{gathered}
    x_4 / \kappa = \\
    \sum_{\alpha=1}^3 \Big[ 2 {\rm Re} \Big( (V_A)_{0,\bar{\alpha}} (V_A)_\alpha^* \frac{x_2}{\nu_\alpha^2} \Big) + (V_A)_{\bar{\alpha},0} (V_A)_\alpha^* \frac{x_2}{\nu_\alpha^2} \Big] \\
    - (V_A)_{0}^2 \sum_{\alpha = 1}^{3} \Big[ 2 (V_A)_{0,\bar{\alpha}}^* (V_A)_\alpha \frac{1}{(\nu_\alpha^*)^3}
    + (V_A)_{0,\bar{\alpha}} (V_A)_\alpha^* \frac{1}{\nu_\alpha^3} \Big] \\
    + \sum_{\alpha,\beta = 1}^{3} \Big\{ 2 {\rm Re} \Big( (V_A)_{0} (V_A)_{0,\bar{\alpha}}^* (V_A)_{\alpha,\bar{\beta}} (V_A)_{\beta}^* \frac{\nu_\alpha^* + \nu_\beta}{(\nu_\alpha^* \nu_\beta)^2} \Big) \\
    + (V_A)_{0} (V_A)_{0,\bar{\alpha}}^* (V_A)_{\bar{\beta},\alpha} (V_A)_{\beta}^* \frac{\nu_\alpha^* + \nu_\beta}{(\nu_\alpha^* \nu_\beta)^2} \Big\} \\
    - \sum_{\alpha,\beta,\gamma = 1}^{3} \Big\{ (V_A)_{0,\bar{\alpha}}^* (V_A)_\gamma \Big[ (V_A)_{\alpha,\bar{\beta}} (V_A)_{\beta,\bar{\gamma}}^* + (V_A)_{\bar{\beta},\alpha} (V_A)_{\bar{\gamma},\beta}^* \Big] \\
    \times \frac{1}{\nu_\alpha^* \nu_\beta \nu_\gamma^*} + \Big[ (V_A)_{0,\bar{\alpha}}^* (V_A)_{\alpha,\bar{\beta}} (V_A)_{\bar{\gamma},\beta}^* (V_A)_\gamma \frac{1}{\nu_\alpha^* \nu_\beta \nu_\gamma^*} \Big]^* \Big\},
  \end{gathered}
\end{equation*}
\begin{equation*}
  \begin{gathered}
    y_4 / \kappa
    = \sum_{\alpha=1}^{3} (V_A)_{\bar{\alpha},0} (V_A)_\alpha^* \frac{x_2}{\nu_\alpha^2} \\
    - (V_A)_{0}^2 \sum_{\alpha = 1}^{3} (V_A)_{0,\bar{\alpha}}^* (V_A)_\alpha \frac{1}{(\nu_\alpha^*)^3} \\
    + \sum_{\alpha,\beta = 1}^{3} (V_A)_{0} (V_A)_{0,\bar{\alpha}}^* (V_A)_{\bar{\beta},\alpha} (V_A)_{\beta}^* \frac{\nu_\alpha^* + \nu_\beta}{(\nu_\alpha^* \nu_\beta)^2} \\
    - \sum_{\alpha,\beta,\gamma = 1}^{3} (V_A)_{0,\bar{\alpha}}^* (V_A)_{\bar{\beta},\alpha} (V_A)_{\beta,\bar{\gamma}}^* (V_A)_\gamma \frac{1}{\nu_\alpha^* \nu_\beta \nu_\gamma^*},
  \end{gathered}
\end{equation*}
where we have introduce the notation,
\begin{equation*}
  (V_A)_{\alpha,\bar{\beta},\gamma,\bar{\sigma}} = \dbraket{ X_\alpha \bar{X}_\beta X_\gamma \bar{X}_\sigma}{V_A},
\end{equation*}
for the matrix elements.

To investigate the structure of $\mathcal{L}_s^{G=0}$ given by Eq.$\,$(\ref{eq:app:ae.qudit_LsG0}), we rewrite it as follows, which is valid up to order $\epsilon^4$;
\begin{equation}
  \mathcal{L}_s^{G=0} \bullet = -i [h_B, \bullet]
  + \Big[ \epsilon^2 \mathcal{D}[l_B] \bullet + \epsilon^4 c_B \mathcal{D}[l_B^2] \bullet \Big].
  \label{eq:app:ae.qudit_LsG0_1}
\end{equation}
The first term includes the Hermitian operator $h_B$ defined by
\begin{equation*}
  \begin{gathered}
    h_B = \epsilon x_1 B  + \epsilon^2 {\rm Im} (x_2) B^2 \\
    - \epsilon^3 {\rm Re} (y_3) B^3 - \epsilon^4 {\rm Im} (y_4) B^4.
  \end{gathered}
\end{equation*}
In the terms inside the square bracket, the operator $l_B$ is defined by
\begin{equation*}
  l_B = c_1 B + \epsilon c_2 B^2 + \epsilon^2 c_3 B^3,
\end{equation*}
with $|c_1|^2 = 2 {\rm Re} (x_2)$, $c_2 = - i ({y_3}^* + 2 {\rm Re} (x_3))/c_1^*$, and $c_3 = - (x_4 + y_4)/c_1^*$.
The assumption ${\rm Re} (x_2) > 0$ has been made, which is justified since the second-order expansion always admits the Lindblad form as shown in \cite{Azouit17}.
With these $\{ c_k \}$, we have introduced $c_B = |c_1|^{-4} (2 {\rm Re} (x_4) - |c_2|^2) \in \mathbb{R}$.
Now, we denote $l_B$ as $l_B = \sum_{m=1}^d l_m E_m$ with $l_m = c_1 (\chi_m/\chi) + \epsilon c_2 (\chi_m/\chi)^2 + \epsilon^2 c_3 (\chi_m/\chi)^3$.
The terms inside the square bracket in Eq.$\,$(\ref{eq:app:ae.qudit_LsG0_1}) then reads as
\begin{equation*}
  \begin{gathered}
    \epsilon^2 \mathcal{D}[l_B] (\rho_B) + \epsilon^4 c_B \mathcal{D}[l_B^2] (\rho_B) \\
    = \sum_{m,n = 1}^d \beta_{m,n} \Big[ E_m \rho_B E_n - \frac{1}{2} (E_n E_m \rho_B + \rho_B E_n E_m) \Big],
  \end{gathered}
\end{equation*}
with $\beta_{m,n} = \epsilon^2 l_m l_n^* (1 + \epsilon^2 c_B l_m l_n^* )$.
Since this has a similar form to Eq.$\,$(\ref{eq:vect_LsPT2}), as discussed underneath Eq.$\,$(\ref{eq:vect_LsPT_general}),
$\mathcal{L}_s^{G=0}$ is a Lindbladian if and only if the condition $S^\top \beta S \geq 0$ is satisfied,
where the matrix $\beta$ is defined by $[\beta]_{m,n} = \beta_{m,n}$ and the matrix $S$ is defined in Eq.$\,$(\ref{eq:vect_MatrixS}).

As in Sec.$\,$\ref{sec:Diag}, we consider the case $d=3$ as an example.
When $\epsilon \ll 1$, the trace of the matrix $S^\top \beta S$ is dominated by the terms at order $\epsilon^2$, which are positive.
The condition $S^\top \beta S \geq 0$ then reads
\begin{equation*}
 \epsilon^4 c_B | (l_1 - l_2)(l_1 - l_3)(l_2 - l_3) |^2 \geq 0,
\end{equation*}
which is equivalent to $c_B \geq 0$.
Since $c_B$ is independent of $\chi$ the coupling constant, we can now discuss the Lindblad structure of $\mathcal{L}_s^{G=0}$ with infinitesimal coupling constants.
By numerically computing $c_B$ with the same settings as in Fig.$\,$\ref{fig:Diag_sign}, where the linear independence of the eigenvectors $\{ X_\alpha \}$ was numerically confirmed, we examined its sign in various parameter sets.
As a result, we found that the results closely resembled those depicted in Fig.$\,$\ref{fig:Diag_sign}.
Therefore, in a parameter region near the blue region of Fig.$\,$\ref{fig:Diag_sign}, $\mathcal{L}_s^{G=0}$ is not a Lindbladian, even with infinitesimal coupling constants.

\section{Vectorization}
\label{app:vect}

\subsection{Introduction to vectorization}

In this appendix, we introduce a vectorization of operators, which is a convenient representation in studies of open quantum systems.
Let $A$ be an operator acting on a Hilbert space $\mathscr{H}$ with dimension $d$ and have a matrix representation as
\begin{equation*}
  A = \sum_{i,j = 1}^d (A)_{ij} \ketbra{i}{j},
\end{equation*}
with an orthonormal basis set $\{ \ket{i} \}_{i=1,\dots,d}$ in $\mathscr{H}$.
Following \cite{Havel03}, we map this operator to a vector as
\begin{equation*}
  \dket{A} \equiv \sum_{i,j = 1}^d (A)_{ij} \ket{j} \otimes \ket{i}.
\end{equation*}
We also introduce the dual state as the Hermitian conjugation of $\dket{A}$,
\begin{equation*}
  \dbra{A} \equiv [\dket{A}]^\dagger = \sum_{i,j = 1}^d (A)_{ij}^* \bra{j} \otimes \bra{i}.
\end{equation*}
With $B$ and $C$ being operators acting on $\mathscr{H}$, we can show the Hilbert-Schmidt inner product as
\begin{equation*}
  \dbraket{A}{B} = {\rm tr}(A^\dagger B),
\end{equation*}
with ${\rm tr}$ the trace operation over $\mathscr{H}$, and
\begin{equation}
  \dket{ABC} = C^\top \otimes A \, \dket{B},
  \label{eq:vect_ABC}
\end{equation}
where $\top$ denotes the matrix transpose.

\subsection{Moore-Penrose inverse of a Lindbladian with a unique steady state}

Next we consider superoperators in the vectorized form.
As a vectorized operator is a $d^2$-dimensional vector, a superoperator is represented by a $d^2$-dimensional matrix.
We denote such supermatrix (the name is taken from \cite{Havel03}) by attaching the hat symbol $\hat{}$, that is, if $\mathcal{L}$ is a superoperator acting on $\mathscr{H}$, we denote
\begin{equation}
  \mathcal{L}(A) \to \dket{\mathcal{L}(A)} = \hat{\mathcal{L}} \dket{A}.
  \label{eq:vect_supermatrix}
\end{equation}
For instance, if $\mathcal{L}(B) = ABC$, then $\hat{\mathcal{L}} = C^\top \otimes A$ as shown in Eq.$\,$(\ref{eq:vect_ABC}).
Suppose now that the eigenvalue problem of $\hat{\mathcal{L}}$ is solved as
\begin{equation}
  \hat{\mathcal{L}} \dket{X_\alpha} = \lambda_\alpha \dket{X_\alpha}, \ \ \ \dbra{\bar{X}_\alpha} \hat{\mathcal{L}} = \lambda_{\alpha} \dbra{\bar{X}_\alpha},
  \label{eq:vect_eig}
\end{equation}
for $\alpha = 0,1,\dots,d^2-1$. If $\{ \dket{X_\alpha} \}_{\alpha = 0,1,\dots,d^2-1}$ are linearly independent and are normalized so that the orthonormality relations $\dbraket{\bar{X}_\alpha}{X_{\beta}} = \delta_{\alpha,\beta} \ (\alpha,\beta = 0,1,\dots,d^2-1)$ hold, we have the completeness relation
\begin{equation}
  \sum_{\alpha=0}^{d^2-1} \dketbra{X_\alpha}{\bar{X}_\alpha} = I_d \otimes I_d,
  \label{eq:vect_complete}
\end{equation}
with the $d$-dimensional identity matrix $I_d$.
The spectral decomposition of $\hat{\mathcal{L}}$ then reads
\begin{equation}
  \hat{\mathcal{L}} = \sum_{\alpha=0}^{d^2-1} \lambda_\alpha \dketbra{X_\alpha}{\bar{X}_\alpha}.
  \label{eq:vect_eigdecomp}
\end{equation}

Let us assume $\lambda_0 = 0$ and ${\rm Re}(\lambda_\alpha) < 0 \ (\alpha = 1,\dots,d^2-1)$, as we have assumed for $\mathcal{L}_A$ in Sec.$\,$\ref{sec:AE}.
In this case, the Moore-Penrose inverse of $\hat{\mathcal{L}}$, $\hat{\mathcal{L}}^+$, can be written as
\begin{equation}
  \hat{\mathcal{L}}^+ = \sum_{\alpha = 1}^{d^2-1} \frac{1}{\lambda_\alpha} \dketbra{X_\alpha}{\bar{X}_\alpha}.
  \label{eq:vect_MPinverse}
\end{equation}
To derive another expression, let us use $1/\lambda_{\alpha} = - \int_0^\infty ds \, \exp(\lambda_\alpha s)$ for $\alpha = 1,\dots,d^2-1$,
\begin{equation*}
  \begin{gathered}
    \hat{\mathcal{L}}^+ = - \sum_{\alpha=1}^{d^2-1} \int_0^\infty ds \, e^{\lambda_\alpha s} \dketbra{X_\alpha}{\bar{X}_\alpha} \\
    = - \int_0^\infty ds \, e^{\hat{\mathcal{L}} s} \sum_{\alpha=1}^{d^2-1} \dketbra{X_\alpha}{\bar{X}_\alpha}.
  \end{gathered}
\end{equation*}
Using the completeness relation Eq.$\,$(\ref{eq:vect_complete}), we obtain
\begin{equation*}
  \hat{\mathcal{L}}^+ = - \int_0^\infty ds \, e^{\hat{\mathcal{L}} s} (I_d \otimes I_d - \dketbra{X_0}{\bar{X}_0} ),
\end{equation*}
or its operation given by
\begin{equation}
  \mathcal{L}^+ (A) = - \int_0^\infty ds \, e^{\mathcal{L} s} (A - {\rm tr} (\bar{X}_0^\dagger A) X_0 ).
  \label{eq:vect_MPinverse_exp_op}
\end{equation}
This expression was derived in \cite{Azouit17}.

\subsection{Choi matrix}
\label{app:vect_choi}

Given a superoperator $\mathcal{T}$, we introduce the Choi matrix as \cite{Havel03,Choi75}
\begin{equation*}
  Choi(\hat{\mathcal{T}}) = \mathcal{I} \otimes \mathcal{T} ( \ketbra{\alpha}{\alpha} ),
\end{equation*}
with $\mathcal{I}$ being the identity superoperator on $\mathscr{H}$ and $\ket{\alpha} \in \mathscr{H} \otimes \mathscr{H}$ is the (non-normalized) maximally entangled state defined by $\ket{\alpha} = \sum_{i=1}^d \ket{i} \otimes \ket{i}$.

The Choi matrix provides a useful way to judge complete positivity of superoperators.
To see this, suppose the operation of $\mathcal{T}$ is given by
\begin{equation}
  \mathcal{T}(A) = \sum_{p,q = 1}^{d^2} \tau_{p,q} T_p A T_q^\dagger,
  \label{eq:vect_Top}
\end{equation}
with an orthonormal operator basis $\{ T_p \}$ satisfying $(T_p| T_q) = \delta_{p,q} \ (p,q = 1,2,\dots,d^2)$ and a $d^2$-dimensional Hermitian matrix $\tau$.
One then finds
\begin{equation}
  \begin{gathered}
    Choi(\hat{\mathcal{T}}) = \sum_{i,j=1}^d \ketbra{i}{j} \otimes \mathcal{T} (\ketbra{i}{j}) \\
    = \sum_{i,j,k,l=1}^d \sum_{p,q = 1}^{d^2} \tau_{p,q} (T_p)_{ki} (T_q)_{lj}^* \ketbra{i}{j} \otimes \ketbra{k}{l} \\
    = \sum_{p,q = 1}^{d^2} \tau_{p,q} \dketbra{T_p}{T_q}.
  \end{gathered}
  \label{eq:vect_ChoiT}
\end{equation}
From Eq.$\,$(\ref{eq:vect_Top}) and the linear independence of $\{ T_p \}$, $\mathcal{T}$ is a completely positive map if and only if $\tau \geq 0$.
On the other hand, $\tau \geq 0$ is equivalent to $Choi(\hat{\mathcal{T}}) \geq 0$.
Combining these results, $\mathcal{T}$ is a completely positive map if and only if $Choi(\hat{\mathcal{T}}) \geq 0$.
Similarly, $\mathcal{T}$ preserves the Hermitian property if and only if $Choi(\hat{\mathcal{T}})^\dagger = Choi(\hat{\mathcal{T}})$.

While the complete positivity condition is stronger than the positivity one in general, we can prove the following;
\begin{lemma}
  Suppose that $\mathcal{T}$ takes the form of
  \begin{equation}
    \mathcal{T}(A) = \sum_{m,n = 1}^{d} \mathfrak{p}_{m,n} E_m A E_n,
    \label{eq:vect_Tdiag}
  \end{equation}
  with a $d$-dimensional Hermitian matrix $\mathfrak{p}$ and $\{ E_m \}$ the projectors introduced in Sec.$\,$\ref{sec:Diag}.
  Then, $\mathcal{T}$ is positive if and only if it is completely positive.
  \label{theorem:diagonal}
\end{lemma}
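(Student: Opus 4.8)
The plan is to prove the nontrivial implication, namely that positivity of $\mathcal{T}$ already forces complete positivity; the reverse implication is automatic since every completely positive map is positive. So all three conditions---``positive'', ``completely positive'', and ``$\mathfrak{p}\geq 0$''---will turn out to be equivalent.

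First I would translate complete positivity of $\mathcal{T}$ into a condition on $\mathfrak{p}$ via the Choi matrix. The key observation is that the projectors satisfy $(\Pi_m|\Pi_n)={\rm tr}(\Pi_m\Pi_n)=\delta_{m,n}$, so $\{\Pi_m\}_{m=1}^d$ is an orthonormal, hence linearly independent, family of operators on $\mathscr{H}$. Substituting the form (\ref{eq:vect_Tdiag}) into the computation leading to (\ref{eq:vect_ChoiT}) gives $Choi(\hat{\mathcal{T}})=\sum_{m,n}\mathfrak{p}_{m,n}\dketbra{\Pi_m}{\Pi_n}$. Since the vectors $\{\dket{\Pi_m}\}$ are orthonormal but span only a $d$-dimensional subspace of the $d^2$-dimensional space, $Choi(\hat{\mathcal{T}})$ is block diagonal: it acts as the Hermitian matrix $\mathfrak{p}$ on ${\rm span}\{\dket{\Pi_m}\}$ and as $0$ on the orthogonal complement. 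Hence $Choi(\hat{\mathcal{T}})\geq 0$ if and only if $\mathfrak{p}\geq 0$, and (as recalled just before the Lemma) $Choi(\hat{\mathcal{T}})\geq 0$ is equivalent to complete positivity of $\mathcal{T}$. The goal thus reduces to showing that positivity of $\mathcal{T}$ implies $\mathfrak{p}\geq 0$.

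Second I would test positivity on a single, well-chosen pure state. Let $\ket{\psi}=\sum_{k=1}^d\ket{k}\in\mathscr{H}$ be the (unnormalized) uniform superposition, so that $\ketbra{\psi}{\psi}\geq 0$. Because $\Pi_m\ket{\psi}=\ket{m}$, one has $\Pi_m\ketbra{\psi}{\psi}\Pi_n=\ketbra{m}{n}$, and therefore $\mathcal{T}(\ketbra{\psi}{\psi})=\sum_{m,n}\mathfrak{p}_{m,n}\ketbra{m}{n}$, which is exactly the matrix $\mathfrak{p}$ expressed in the basis $\{\ket{m}\}$. If $\mathcal{T}$ is positive, this output is positive semidefinite, i.e. $\mathfrak{p}\geq 0$, and by the previous paragraph $\mathcal{T}$ is completely positive. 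There is no serious obstacle here once the right test state is identified; the only point requiring care is the bookkeeping noted above, that $\{\Pi_m\}$ (equivalently $\{\dket{\Pi_m}\}$) are orthonormal but do not span the full operator (resp.\ vector) space, so one argues block-diagonality of $Choi(\hat{\mathcal{T}})$ rather than invoking the full $d^2$-dimensional identification used in (\ref{eq:vect_ChoiT}). Equivalently, one may phrase the whole argument through the Schur (Hadamard) product, since $\mathcal{T}(\ketbra{\psi}{\psi})$ is the entrywise product of $\mathfrak{p}$ with the rank-one matrix $\psi\psi^\dagger$, but the choice of the uniform $\ket{\psi}$ makes the reduction immediate.
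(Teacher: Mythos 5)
Your proposal is correct, and on the key implication it takes a genuinely more direct route than the paper. Both arguments reduce the statement to showing that positivity of $\mathcal{T}$ forces $\mathfrak{p}\geq 0$ (with $\mathfrak{p}\geq 0 \Leftrightarrow$ complete positivity following from linear independence of $\{\Pi_m\}$; your block-diagonal reading of $Choi(\hat{\mathcal{T}})=\sum_{m,n}\mathfrak{p}_{m,n}\dketbra{\Pi_m}{\Pi_n}$ on the $d$-dimensional span of the orthonormal vectors $\dket{\Pi_m}$ is a careful spelling-out of exactly that point). Where you differ is in how that implication is proved: the paper first diagonalizes $\mathfrak{p}$, rewrites $\mathcal{T}(A)=\sum_l \mathfrak{p}_l P_l A P_l^\dagger$ with $P_l=\sum_m U_{m,l}\Pi_m$, and argues by contraposition, constructing for each negative eigenvalue $\mathfrak{p}_n<0$ a state $\ket{\psi_n}$ (from the $n$-th column of $U$) and the all-ones vector $\ket{\varphi}$ such that $\braket{\varphi|\mathcal{T}(\ketbra{\psi_n}{\psi_n})|\varphi}=\mathfrak{p}_n<0$. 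You instead evaluate $\mathcal{T}$ once on the all-ones projector $\ketbra{\psi}{\psi}$ and note that the output is literally the matrix $\mathfrak{p}$ in the basis $\{\ket{m}\}$, so positivity yields $\mathfrak{p}\geq 0$ in one step; this is the classical Schur-multiplier argument (your map is entrywise multiplication by $\mathfrak{p}$, and the all-ones matrix is the canonical test state). Your version is shorter and needs no diagonalization, making the structural reason for the equivalence transparent; the paper's version is slightly more constructive, in that it exhibits, for each negative eigenvalue, an explicit input state and test vector witnessing where positivity fails. Both are sound.
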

\begin{proof}
From the linear independence of $\{ E_m \}$, $\mathfrak{p} \geq 0$ is equivalent to $\mathcal{T}$ being completely positive.
We prove that  $\mathfrak{p} \geq 0$ is also an equivalent condition to the positivity of $\mathcal{T}$.
Since the sufficiency is evident, we need to prove only the necessity.
Let $U \in \mathbb{C}^{d \times d}$ be a unitary matrix diagonalizing $\mathfrak{p}$ as $\mathfrak{p}_{m,n} = \sum_{l = 1}^d \mathfrak{p}_l U_{m,l} U_{n,l}^*$ with $\{ \mathfrak{p}_l \}$ the eigenvalues $\mathfrak{p}$.
With those, the above equation reads
\begin{equation*}
  \mathcal{T}(A) = \sum_{l = 1}^{d} \mathfrak{p}_{l} P_l A P_l^\dagger,
\end{equation*}
with $P_l = \sum_{m=1}^d U_{m,l} E_m$.
Suppose that $\mathfrak{p}$ is not positive semidefinite.
Then, at least one of the elements $\{ \mathfrak{p}_l \}$ is negative.
When $\mathfrak{p}_n < 0$, for instance, we introduce $\ket{\psi_n} = [U_{1,n}^* \ U_{2,n}^* \ \dots \ U_{d,n}^* ]^\top$ and $\ket{\varphi} = [1 \ 1 \ \dots 1]^\top $.
We then find
\begin{equation*}
  \braket{\varphi|P_l|\psi_n} = (U^\dagger U)_{n,l} = \delta_{n,l},
\end{equation*}
from the definition of $\{ E_m \}$ and the unitarity of $U$.
This yields
\begin{equation*}
  \braket{\varphi| \mathcal{T}(\ketbra{\psi_n}{\psi_n}) |\varphi} = \sum_{l = 1}^d \mathfrak{p}_l |\braket{\varphi|P_l|\psi_n}|^2 = \mathfrak{p}_n < 0,
\end{equation*}
which implies that $\mathcal{T}$ is not positive.
By contraposition, therefore, $\mathfrak{p} \geq 0$ is necessary for $\mathcal{T}$ being positive.
\end{proof}

\subsection{Lindbladian}
\label{app:vect_Lindbladian}

To judge if a given superoperator is a Lindbladian or not, the following lemma is convenient;
\begin{lemma}
  Suppose that $\mathcal{L}$ preserves the Hermitian property, $[\mathcal{L}(A)]^\dagger = \mathcal{L}(A^\dagger)$, and have trace zero, ${\rm tr} \circ \mathcal{L} = 0$.
  Then, $\mathcal{L}$ is a Lindbladian if and only if the supermatrix $\hat{\mathcal{P}}^I Choi(\hat{\mathcal{L}}) \hat{\mathcal{P}}^I$, with $\hat{\mathcal{P}}^I = I_d \otimes I_d - \dketbra{I_d}{I_d}/d$, is positive semidefinite.
  \label{theorem:Lindbladform}
\end{lemma}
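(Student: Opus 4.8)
The plan is to reduce the statement to the Gorini--Kossakowski--Sudarshan (GKS) characterization of Lindblad generators, working in an operator basis adapted to the projector $\hat{\mathcal{P}}^I$. First I would fix an orthonormal operator basis $\{ F_\mu \}_{\mu=0,\dots,d^2-1}$ with $F_0 = I_d/\sqrt{d}$ and all $F_j$ ($j\geq1$) traceless, and write $\mathcal{L}$ in the form of Eq.$\,$(\ref{eq:vect_Top}), $\mathcal{L}(A) = \sum_{\mu,\nu} \ell_{\mu\nu}\, F_\mu A F_\nu^\dagger$; the Hermiticity-preservation hypothesis is equivalent to $\ell^\dagger = \ell$, and Eq.$\,$(\ref{eq:vect_ChoiT}) gives $Choi(\hat{\mathcal{L}}) = \sum_{\mu,\nu}\ell_{\mu\nu}\dketbra{F_\mu}{F_\nu}$. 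Since $\dket{I_d} = \sqrt{d}\,\dket{F_0}$ and $\dbraket{I_d}{I_d} = d$, the supermatrix $\hat{\mathcal{P}}^I$ is exactly the orthogonal projector onto $\mathrm{span}\{\dket{F_j}\}_{j\geq 1}$, so $\hat{\mathcal{P}}^I Choi(\hat{\mathcal{L}})\hat{\mathcal{P}}^I = \sum_{j,k\geq1}\ell_{jk}\dketbra{F_j}{F_k}$; by linear independence of $\{\dket{F_j}\}_{j\geq1}$ (the same argument used around Eq.$\,$(\ref{eq:vect_ChoiT})) this is positive semidefinite if and only if the block $c := (\ell_{jk})_{j,k\geq1}$ is. Thus the lemma reduces to the claim that $\mathcal{L}$ is a Lindbladian if and only if $c\geq0$, which is GKS phrased in this basis.

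To prove that equivalence I would separate the index-$0$ rows and columns of $\ell$. Collecting the terms with $\mu=0$ and/or $\nu=0$ and using $F_0 = I_d/\sqrt{d}$, one rewrites $\mathcal{L}(\rho) = \ell_{00}\rho/d + \{B,\rho\} - i[H_1,\rho] + \sum_{j,k\geq1}\ell_{jk}F_j\rho F_k^\dagger$, where $A := d^{-1/2}\sum_{j\geq1}\ell_{j0}F_j$, $B := (A+A^\dagger)/2$, and $H_1 := i(A-A^\dagger)/2$ is Hermitian. The hypothesis ${\rm tr}\circ\mathcal{L} = 0$ is equivalent to $\sum_{\mu,\nu}\ell_{\mu\nu}F_\nu^\dagger F_\mu = 0$, which, expanded the same way and using that the $F_j$ are traceless, is precisely $\ell_{00}I_d/d + 2B + C = 0$ with $C := \sum_{j,k\geq1}\ell_{jk}F_k^\dagger F_j$. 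This is exactly the relation that lets one fold $\ell_{00}\rho/d + \{B,\rho\}$ into the anticommutator part of a dissipator, giving the canonical form $\mathcal{L}(\rho) = -i[H_1,\rho] + \sum_{j,k\geq1}\ell_{jk}\big(F_j\rho F_k^\dagger - \tfrac12\{F_k^\dagger F_j,\rho\}\big)$. If $c\geq0$, diagonalizing $c = \sum_p \gamma_p v^{(p)}(v^{(p)})^\dagger$ with $\gamma_p\geq0$ and setting $L_p := \sqrt{\gamma_p}\sum_{j\geq1}v^{(p)}_j F_j$ yields $\mathcal{L} = -iH_1^\times + \sum_p \mathcal{D}[L_p]$, a Lindbladian.

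For the converse, starting from $\mathcal{L} = -iH^\times + \sum_p\mathcal{D}[L_p]$, I would split each $L_p$ into its trace part $c_p I_d/\sqrt{d}$ and traceless part $L_p' = \sum_{j\geq1}d_{pj}F_j$; a short computation shows $\mathcal{D}[L_p] = \mathcal{D}[L_p'] + (\text{a pure commutator})$, so after absorbing those commutators into $H$ the Hamiltonian part only populates the $0$-th row and column of $\ell$, while $\sum_p\mathcal{D}[L_p']$ contributes to the block $c$ exactly the Gram matrix $\ell_{jk} = \sum_p d_{pj}\overline{d_{pk}}\geq0$. Since $\ell$ (hence $c$) is uniquely fixed by $\mathcal{L}$ as the matrix of $Choi(\hat{\mathcal{L}})$ in the basis $\{\dket{F_\mu}\}$, this forces $c\geq0$. \textbf{The main obstacle} is the bookkeeping in the second and third paragraphs: checking that ${\rm tr}\circ\mathcal{L}=0$ supplies precisely $\ell_{00}I_d/d + 2B + C = 0$, and that no Hamiltonian or trace-part contribution ever leaks into the $(j,k\geq1)$ block of $\ell$. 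This amounts to rederiving GKS in the Choi-matrix language; if one invokes GKS as known, the only genuinely new content is the identification in the first paragraph of $\hat{\mathcal{P}}^I Choi(\hat{\mathcal{L}})\hat{\mathcal{P}}^I$ with the GKS coefficient matrix.
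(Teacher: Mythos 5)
Your proposal is correct and follows essentially the same route as the paper's proof: both identify $\hat{\mathcal{P}}^I Choi(\hat{\mathcal{L}}) \hat{\mathcal{P}}^I$ with the GKS coefficient block on traceless operators and then use the trace-zero hypothesis to fold the remaining row/column contributions into the commutator and anticommutator parts, recovering the Lindblad form when that block is positive semidefinite. The only difference is cosmetic: you work in an explicit orthonormal operator basis with $F_0 = I_d/\sqrt{d}$, whereas the paper manipulates the Choi matrix basis-free via the kernel of $\hat{\mathcal{P}}^I$ and a rank-one decomposition of the projected Choi matrix.
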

\begin{proof}
  The proof presented here is taken from \cite{Havel03}.
  To start, note that $\hat{\mathcal{P}}^I$ is an Hermitian supermatrix, $[\hat{\mathcal{P}}^I]^2 = \hat{\mathcal{P}}^I$, and its operation reads $\mathcal{P}^I (A) = A - ({\rm tr}(A)/d) I_d$.
  Thus, $\mathcal{P}^I$ is the orthogonal projector onto traceless operators, and the kernel of $\hat{\mathcal{P}}^I$ is ${\rm ker}(\hat{\mathcal{P}}^I) = \{ \dket{I_d} \}$.
  We use these properties.

  If $\mathcal{L}$ is a Lindbladian, it is generally given by
  \begin{equation*}
    \mathcal{L} (A) = -i [H, A] + \sum_{\mu} \mathcal{D} [L_\mu] A, 
  \end{equation*}
  with arbitrary matrices $L_\mu \in C^{d \times d}$ and a Hermitian matrix $H \in C^{d \times d}$.
  From the definition of the Choi matrix Eq.$\,$(\ref{eq:vect_ChoiT}), we find
  \begin{equation*}
    \begin{gathered}
      Choi(\hat{\mathcal{L}}) = - i [ \dketbra{H}{I_d} - \dketbra{I_d}{H}] \\
      + \sum_{\mu} \Big[ \dketbra{L_\mu}{L_\mu} - \frac{1}{2} [ \dketbra{L_\mu^\dagger L_\mu}{I_d} + \dketbra{I_d}{L_\mu^\dagger L_\mu}] \Big].
    \end{gathered}
  \end{equation*}
  By applying $\hat{\mathcal{P}}^I$ from the left and right sides, we can extract the dissipation part
  \begin{equation*}
    \hat{\mathcal{P}}^I Choi(\hat{\mathcal{L}}) \hat{\mathcal{P}}^I = \sum_{\mu} \dketbra{\mathcal{P}^I (L_\mu)}{\mathcal{P}^I (L_\mu)}.
  \end{equation*}
  Thus, $\hat{\mathcal{P}}^I Choi(\hat{\mathcal{L}}) \hat{\mathcal{P}}^I \geq 0$.

  If $\hat{\mathcal{P}}^I Choi(\hat{\mathcal{L}}) \hat{\mathcal{P}}^I$ is positive semidefinite, on the other hand, there exist traceless operators $\{ \overline{L}_\mu \}_{\mu =  1,\dots,D} \ (D < d^2)$ such that
  \begin{equation*}
    \hat{\mathcal{P}}^I Choi(\hat{\mathcal{L}}) \hat{\mathcal{P}}^I = \sum_{\mu = 1}^{D} \dketbra{\overline{L}_\mu}{\overline{L}_\mu}.
  \end{equation*}
  Since ${\rm ker}(\hat{\mathcal{P}}^I) = \{ \dket{I_d} \}$, $Choi(\hat{\mathcal{L}})$ generally reads
  \begin{equation*}
    Choi(\hat{\mathcal{L}}) = \dketbra{K}{I_d} + \dketbra{I_d}{K} + \sum_{\mu = 1}^{D} \dketbra{\overline{L}_\mu}{\overline{L}_\mu},
  \end{equation*}
  where $K$ is a $d$-dimensional complex matrix and we have used $Choi(\hat{\mathcal{L}})^\dagger = Choi(\hat{\mathcal{L}})$ which is derived from the Hermitian preservation.
  Note that we have not imposed ${\rm tr}(\mathcal{L} \bullet) = 0$ yet.
  The operation of $\mathcal{L}$ reads
  \begin{equation*}
    \mathcal{L}(A) =  K A + A K^\dagger + \sum_{\mu = 1}^{D} \overline{L}_\mu A \overline{L}_{\mu}^{\dagger}.
  \end{equation*}
  From the condition ${\rm tr} \mathcal{L} = 0$ or, equivalently,
  \begin{equation*}
    {\rm tr}(\mathcal{L}(A)) =  {\rm tr} \Big( \Big[ K + K^\dagger + \sum_{\mu = 1}^{D} \overline{L}_\mu^\dagger \overline{L} \Big] A  \Big) = 0,
  \end{equation*}
  for any $A$, we obtain $K + K^\dagger = - \sum_{\mu = 1}^{D} \overline{L}_\mu^\dagger \overline{L}$. Thus,
  \begin{equation*}
    \mathcal{L}(A) = -i \Big[ \frac{i(K-K^\dagger)}{2}, A \Big] + \sum_{\mu = 1}^D \mathcal{D} [\overline{L}_\mu] (A),
  \end{equation*}
  and $\mathcal{L}$ is a Lindbladian.
\end{proof}

As an example, suppose that $\mathcal{L}$ is given by
\begin{equation}
  \mathcal{L}(A) = -i [H, A] + \sum_{\alpha,\beta = 1}^{D} \gamma_{\alpha,\beta} \Big[ V_\alpha A V_\beta^\dagger - \frac{1}{2} \{ V_\beta^\dagger V_\alpha, A \}  \Big],
  \label{eq:vect_nondiag}
\end{equation}
with arbitrary matrices $V_\alpha \in \mathbb{C}^{d \times d}$ and Hermitian matrices $H \in \mathbb{C}^{d \times d}$ and $\gamma \in \mathbb{C}^{D \times D}$. In view of Lemma \ref{theorem:Lindbladform}, $\gamma \geq 0$ is a sufficient condition for $\mathcal{L}$ to be a Lindbladian, but not necessary in general.
Note that
\begin{equation*}
  \hat{\mathcal{P}}^I Choi(\hat{\mathcal{L}}) \hat{\mathcal{P}}^I = \sum_{\alpha,\beta = 1}^{D} \gamma_{\alpha,\beta} \dketbra{\mathcal{P}^I (V_\alpha)}{\mathcal{P}^I (V_\beta)}.
\end{equation*}
When $\{ \mathcal{P}^I (V_\alpha) \}$ are linearly dependent, the right-hand side can be positive semidefinite even if $\gamma$ is not.
A trivial example is $D = 2$, $\mathcal{P}^I (V_1) = \mathcal{P}^I (V_2)$, $\gamma_{1,2} = 0$, $\gamma_{1,1} > 0$, and $\gamma_{2,2} = - \gamma_{1,1}/2 < 0$.
In this case, even though $\gamma$ is not positive semidefinite, $\hat{\mathcal{P}}^I Choi(\hat{\mathcal{L}}) \hat{\mathcal{P}}^I = (\gamma_{1,1}/2) \dketbra{\mathcal{P}^I (V_1)}{\mathcal{P}^I (V_1)} \geq 0$.
In fact, the operation of $\mathcal{L}$ is given by the following Lindblad form,
\begin{equation*}
  \begin{gathered}
    \mathcal{L}(A) = -i \Big[ H - \frac{\gamma_{1,1}}{2id}({\rm tr}(V_1)^* V_1 - {\rm tr}(V_1) V_1^\dagger) \\
    + \frac{\gamma_{1,1}}{4id}({\rm tr}(V_2)^* V_2 - {\rm tr}(V_2) V_2^\dagger), A \Big]
    + \frac{\gamma_{1,1}}{2} \mathcal{D} [ \mathcal{P}^I (V_1) ] (A).
  \end{gathered}
\end{equation*}

When $\{ \mathcal{P}^I (V_\alpha) \}$ are linearly independent, $\mathcal{L}$ is a Lindbladian if and only if $\gamma \geq 0$.
Note that $\{ \mathcal{P}^I (V_\alpha) \}$ might not be linearly independent even if $\{ V_\alpha \}$ are.
For instance, let us consider the projectors in Sec.$\,$\ref{sec:Diag}, $\{ E_m \}$.
Although it is a linearly independent set, from $\sum_{m = 1}^d E_m = I_d$ and $\mathcal{P}^I (E_m) = E_m - I_d/d$, we obtain $\sum_{m = 1}^d \mathcal{P}^I (E_m) = 0$.
Thus, $\{ \mathcal{P}^I (E_m) \}$ are not independent.
If $\{ V_\alpha \}$ are traceless, then $\mathcal{P}^I(V_\alpha) = V_\alpha$. This leads to the following corollary;
\begin{corollary}
  In Eq.$\,$(\ref{eq:vect_nondiag}), suppose that $\{ V_\alpha \}$ are traceless and linearly independent. Then, $\mathcal{L}$ is a Lindbladian if and only if $\gamma \geq 0$.
  \label{theorem:Lindbladform_coll}
\end{corollary}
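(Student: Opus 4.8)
The plan is to deduce the statement from Lemma~\ref{theorem:Lindbladform}. First I would check that an $\mathcal{L}$ of the form~(\ref{eq:vect_nondiag}) satisfies the hypotheses of that lemma: Hermiticity preservation follows from $H=H^\dagger$ and $\gamma=\gamma^\dagger$ (relabelling $\alpha\leftrightarrow\beta$ in the dissipative sum), and $\mathrm{tr}\circ\mathcal{L}=0$ follows because $\mathrm{tr}(V_\alpha A V_\beta^\dagger)=\frac{1}{2}\mathrm{tr}(\{V_\beta^\dagger V_\alpha,A\})$; both hold irrespective of whether the $V_\alpha$ are traceless. Hence $\mathcal{L}$ is a Lindbladian if and only if $\hat{\mathcal{P}}^I\,Choi(\hat{\mathcal{L}})\,\hat{\mathcal{P}}^I\geq0$. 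The computation recorded just before the corollary gives
\begin{equation*}
  \hat{\mathcal{P}}^I\,Choi(\hat{\mathcal{L}})\,\hat{\mathcal{P}}^I
  = \sum_{\alpha,\beta=1}^{D}\gamma_{\alpha,\beta}\,\dketbra{\mathcal{P}^I(V_\alpha)}{\mathcal{P}^I(V_\beta)},
\end{equation*}
and, because each $V_\alpha$ is traceless, $\mathcal{P}^I(V_\alpha)=V_\alpha$, so the criterion reduces to $M\geq0$, where $M:=\sum_{\alpha,\beta}\gamma_{\alpha,\beta}\dketbra{V_\alpha}{V_\beta}$.

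It then remains to prove $M\geq0\iff\gamma\geq0$, using that $\{V_\alpha\}$ --- equivalently the family $\{\dket{V_\alpha}\}$ in Hilbert--Schmidt space --- is linearly independent. In the direction $\gamma\geq0\Rightarrow M\geq0$ I would diagonalize $\gamma=\sum_l p_l\,u_l u_l^\dagger$ with all $p_l\geq0$ and set $W_l=\sum_\alpha (u_l)_\alpha V_\alpha$, so that $M=\sum_l p_l\,\dketbra{W_l}{W_l}\geq0$; linear independence is not used here. Conversely, assuming $M\geq0$, I would use linear independence to make the Gram matrix $g_{\alpha,\beta}=\dbraket{V_\alpha}{V_\beta}$ invertible, introduce the biorthogonal family $\dket{\tilde V_\gamma}=\sum_\alpha (g^{-1})_{\gamma,\alpha}^{*}\dket{V_\alpha}$ satisfying $\dbraket{\tilde V_\gamma}{V_\delta}=\delta_{\gamma,\delta}$, and observe that $\dbra{\tilde V_\gamma}M\dket{\tilde V_\delta}=\gamma_{\gamma,\delta}$. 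Then for an arbitrary $c\in\mathbb{C}^D$ the vector $\dket{\phi}=\sum_\gamma c_\gamma\dket{\tilde V_\gamma}$ gives $\dbra{\phi}M\dket{\phi}=c^\dagger\gamma c$, so $M\geq0$ forces $c^\dagger\gamma c\geq0$ for all $c$, i.e.\ $\gamma\geq0$.

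The step I expect to need the most care is pinning down precisely why linear independence is indispensable: the projector example $\{\Pi_m\}$ recalled in the text shows that $M\geq0$ does \emph{not} imply $\gamma\geq0$ once the $\{\mathcal{P}^I(V_\alpha)\}$ are linearly dependent, and the biorthogonal construction above --- the technical core of the converse --- is exactly what breaks down there. A secondary point worth spelling out is that the double projection by $\hat{\mathcal{P}}^I$ genuinely annihilates the commutator (Hamiltonian) part of $Choi(\hat{\mathcal{L}})$ and leaves only the $V_\alpha$-sum; this follows from $\mathrm{ker}(\hat{\mathcal{P}}^I)=\{\dket{I_d}\}$, as in the proof of Lemma~\ref{theorem:Lindbladform}, and together with tracelessness it is what collapses $\mathcal{P}^I(V_\alpha)$ to $V_\alpha$.
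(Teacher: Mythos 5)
Your proposal is correct and takes essentially the same route as the paper: it invokes Lemma~\ref{theorem:Lindbladform}, uses the already-computed identity $\hat{\mathcal{P}}^I\,Choi(\hat{\mathcal{L}})\,\hat{\mathcal{P}}^I=\sum_{\alpha,\beta}\gamma_{\alpha,\beta}\dketbra{\mathcal{P}^I(V_\alpha)}{\mathcal{P}^I(V_\beta)}$ together with $\mathcal{P}^I(V_\alpha)=V_\alpha$ for traceless $V_\alpha$, and reduces the claim to the equivalence $\sum_{\alpha,\beta}\gamma_{\alpha,\beta}\dketbra{V_\alpha}{V_\beta}\geq0\iff\gamma\geq0$ under linear independence. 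Your Gram-matrix/biorthogonal construction simply makes explicit the linear-independence step that the paper asserts without detailed proof, and it is sound.
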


As a concrete example, we consider $\mathcal{L}_s^{G = 0}$ in Sec.$\,$\ref{sec:Diag}, see Eq.$\,$(\ref{eq:Diag_LsPT}).
Note first that 
\begin{equation}
  \begin{gathered}
    \mathcal{L}_s^{G=0} (\rho_B) = \sum_{m,n = 1}^d \lambda_{m,n} \Big[ E_m \rho_B E_n \\ - \frac{1}{2} (E_n E_m \rho_B + \rho_B E_n E_m) \Big],
  \end{gathered}
  \label{eq:vect_LsPT2}
\end{equation}
where we have used $\lambda_{m,m} = 0 \ (m = 1,\dots,d)$ (see Eq.$\,$(\ref{eq:ae.quditDiag_lmd.prop})).
As discussed above, $\{ \mathcal{P}^I (E_m) \}$ are not independent.
Hence, $\lambda \geq 0$ is only a sufficient condition for $\mathcal{L}_s^{G=0}$ to be a Lindbladian, but not necessary.

To judge whether $\mathcal{L}_s^{G=0}$ is a Lindbladian or not, we represent it with linearly independent operators that are traceless.
A possible set is $\{ \tau_k \}_{k = 1,\dots,d-1}$ defined by
\begin{equation*}
  \tau_k = \sum_{l = 1}^k E_l - k E_{k+1},
\end{equation*}
which are, with normalization, commonly used as basis matrices of the $\mathfrak{su}(d)$ Lie algebras together with nondiagonal ones.
With the $d$-dimensional identity matrix $I_d$, the transformation between $\{ E_m \}$ and $I_d \cup \{ \tau_k \}$ bases read
\begin{equation*}
  \begin{pmatrix}
  E_1 \\
  E_2 \\
  \vdots \\
  E_{d-1} \\
  E_d
  \end{pmatrix}
  =
  \begin{pmatrix}
    \ \ \ \ \ \ \ \ \  \scalebox{1.5}{$S$} \ \ \ \ \ \ \ \ \  \begin{matrix} 1 \\ 1 \\ \vdots \\ 1 \\ 1 \end{matrix}
  \end{pmatrix}
  \begin{pmatrix}
    \tau_1 \\
    \tau_2 \\
    \vdots \\
    \tau_{d-1} \\
    I_d/d
  \end{pmatrix},
\end{equation*}
where $S \in \mathbb{R}^{d \times (d-1)}$ is defined by
\begin{equation}
  S =
  \begin{pmatrix}
    q(2) & q(3) & \dots & q(d-1) & q(d) \\
    -1/2 & q(3) & \dots & q(d-1) & q(d) \\
    0 & -1/3 & \dots & q(d-1) & q(d) \\
    & & \vdots \\
    0 & 0 & \dots & q(d-1) & q(d) \\
    0 & 0 & \dots & -1/(d-1) & q(d) \\
    0 & 0 & \dots & 0 & -1/d \\
  \end{pmatrix},
  \label{eq:vect_MatrixS}
\end{equation}
with $q(k) = (k(k-1))^{-1}$.
Using $\{ \tau_k \}$, we can express $\mathcal{L}_s^{G=0}$ as
\begin{equation}
  \begin{gathered}
    \mathcal{L}_s^{G=0} (\rho_B) = \\
    - \Big[ \sum_{\substack{m,n=1 \\ (m<n)}}^d \sum_{k = 1}^{d-1} \frac{\lambda_{m,n} - \lambda_{m,n}^*}{2d} (S_{m,k}-S_{n,k})  \tau_k, \rho_B \Big] \\
    + \sum_{k,l = 1}^{d-1} (S^\top \lambda S)_{k,l} \Big[ \tau_k \rho_B \tau_l - \frac{1}{2} ( \tau_l \tau_k \rho_B + \rho_B \tau_l \tau_k) \Big].
  \end{gathered}
  \label{eq:vect_LsPT_general}
\end{equation}
From Corollary \ref{theorem:Lindbladform_coll}, thus, $S^\top \lambda S \geq 0$ is necessary and sufficient conditions for $\mathcal{L}_s^{G = 0}$ to be a Lindbladian.

\section{Adiabatic elimination for a damped oscillator system}
\label{app:damposc}

Let $\mathscr{H}_A$ be a damped oscillator system with Eq.$\,$(\ref{eq:JC_LA}) being coupled to a slow sub-system $\mathscr{H}_B$ via
\begin{equation*}
  \epsilon \mathcal{L}_{\rm int} \bullet = - i [g ( a^\dagger \otimes B + a \otimes B^\dagger ), \bullet],
\end{equation*}
with $B$ an operator on $\mathscr{H}_B$. We do not specify the form of $\mathcal{L}_B$.
In this appendix, we present an efficient way to perform adiabatic elimination in this setting.

We first solve the eigenvalue problem of $\mathcal{L}_A$.
The eigenoperators of $\mathcal{L}_A$ can be represented in a compact form with the normal ordering and the generalized Laguerre polynomials \cite{Briegel93}.
Here we present an alternative way to diagonalize $\mathcal{L}_A$, which can be derived from the ladder superoperator technique \cite{Prosen10,BZ21}.
To this end, we introduce a Hermitian preserving map
\begin{equation*}
  \mathcal{W}_A (O_A) = \sum_{p,q = 0}^\infty \frac{(-n_{\rm th})^p}{p! \, q!} (a^\dagger)^p a^q \, O_A \, (a^\dagger)^q a^p,
\end{equation*}
with $O_A$ an arbitrary operator on $\mathscr{H}_A$.
In the supermatrix form (see Eq.$\,$(\ref{eq:vect_supermatrix})), it reads $\hat{\mathcal{W}}_A = \exp(-n_{\rm th} a^\top \! \otimes a^\dagger) \exp( a^* \! \otimes a)$,
from which the inverse reads $\hat{\mathcal{W}}_A^{-1} = \exp( - a^* \! \otimes a) \exp( n_{\rm th} a^\top \! \otimes a^\dagger)$. The similarity transformation with $\mathcal{W}_A$ gives
\begin{align}
  \begin{split}
    \mathcal{W}_A (a \mathcal{W}_A^{-1} (O_A) ) &= a \, O_A + n_{\rm th} O_A a, \\
    \mathcal{W}_A (\mathcal{W}_A^{-1} (O_A) a^\dagger) &= O_A \, a^\dagger + n_{\rm th} a^\dagger O_A, \\
    \mathcal{W}_A (a^\dagger \mathcal{W}_A^{-1} (O_A) ) &= (1 + n_{\rm th}) a^\dagger O_A + O_A \, a^\dagger, \\
    \mathcal{W}_A (\mathcal{W}_A^{-1} (O_A) a) &= (1 + n_{\rm th}) \, O_A a + a \, O_A. \\
  \end{split}
  \label{eq:damposc_WSimTrans}
\end{align}
Using these relations, we find
\begin{equation}
  \begin{gathered}
    \mathcal{M}_A (O_A) \equiv \mathcal{W}_A \circ \mathcal{L}_A \circ \mathcal{W}_A^{-1} (O_A) \\
    = - \frac{1}{2} ( \bar{\gamma} \, a^\dagger \! a \, O_A + \bar{\gamma}^* O_A \, a^\dagger \! a ),
  \end{gathered}
  \label{eq:damposc_MA}
\end{equation}
with $\bar{\gamma} = \gamma + 2i\Delta_A$.
This is diagonal in the Fock basis of the oscillator which we denote by $\{ \ket{n} \}_{n = 0,1,2,\dots}$.
Thus, the eigenvalue problem of $\mathcal{L}_A$ is solved in the vectorization representation (see Eq.$\,$(\ref{eq:vect_eig})) as
\begin{equation*}
  \begin{gathered}
    \hat{\mathcal{L}}_A \dket{X_{m,n}} = \lambda_{m,n} \dket{X_{m,n}}, \\
    \dbra{\bar{X}_{m,n}}  \hat{\mathcal{L}}_A = \lambda_{m,n} \dbra{\bar{X}_{m,n}},
  \end{gathered}
\end{equation*}
for $m,n = 0,1,2,\dots$, with $\lambda _{m,n} = - (\bar{\gamma} m + \bar{\gamma}^* n)/2$, $\dket{X_{m,n}} = \hat{\mathcal{W}}_A^{-1} \dket{ \ketbra{m}{n} }$, and $\dbra{\bar{X}_{m,n}} = \dbra{\ketbra{m}{n}} \hat{\mathcal{W}}_A$.
The eigenvalues satisfy $\lambda_{0,0} = 0$ and ${\rm Re}(\lambda_{m,n}) < 0 \ (m + n > 0)$.
Therefore, the evolution only with $\mathcal{L}_A$ exponentially converges to a unique steady state $\bar{\rho}_A$ given by
\begin{equation}
  \bar{\rho}_A = X_{0,0} =  \Big( \frac{n_{\rm th}}{1+n_{\rm th}} \Big)^{a^\dagger \! a} / {\rm tr}_A \Big[ \Big( \frac{n_{\rm th}}{1+n_{\rm th}} \Big)^{a^\dagger \! a} \Big].
  \label{eq:damposc_steadystate}
\end{equation}
Note that $n_{\rm th}$ is the average oscillator quantum number with $\bar{\rho}_A$
\begin{equation}
  {\rm tr}_A (a^\dagger \! a \, \bar{\rho}_A) = n_{\rm th}.
  \label{eq:damposc_nth}
\end{equation}

Adiabatic elimination is greatly simplified in the eigenbasis of $\mathcal{L}_A$.
To implement this, we introduce $\xi = \mathcal{W}_A \otimes \mathcal{I}_B (\rho)$, which is Hermitian $\xi^\dagger = \xi$, and consider the master equation for $\xi$,
\begin{equation}
  \frac{d}{dt} \xi = \mathcal{M}_A \otimes \mathcal{I}_B (\xi) + \epsilon \mathcal{M}_{\rm ad} (\xi) \equiv \mathcal{M}_{\rm tot} (\xi),
  \label{eq:damposc_master}
\end{equation}
where $\mathcal{M}_A$ is defined in Eq.$\,$(\ref{eq:damposc_MA}) and
$\mathcal{M}_{\rm ad} = (\mathcal{W}_A \otimes \mathcal{I}_B) \circ (\mathcal{I}_A \otimes \mathcal{L}_B + \mathcal{L}_{\rm int} ) \circ (\mathcal{W}_A^{-1} \otimes \mathcal{I}_B)$.
From Eq.$\,$(\ref{eq:damposc_WSimTrans}), the operation of $\mathcal{M}_{\rm ad}$ reads
\begin{equation*}
  \begin{gathered}
    \mathcal{M}_{\rm ad} (\xi) = \mathcal{I}_A \otimes \mathcal{L}_B (\xi) \\
    + (a \otimes I_B) \mathcal{I}_A \otimes \mathcal{B}_L (\xi) + \mathcal{I}_A \otimes \mathcal{B}_D (\xi) (a^\dagger \otimes I_B) \\
    + (a^\dagger \otimes I_B) \mathcal{I}_A \otimes \mathcal{B}_R (\xi) + \mathcal{I}_A \otimes \mathcal{B}_U (\xi) (a \otimes I_B),
  \end{gathered}
\end{equation*}
where $I_B$ the identity operator on $\mathscr{H}_B$ and $\{ \mathcal{B}_{X} \}_{X=L,D,R,U}$ are suoperoperators on $\mathscr{H}_B$ defined by
\begin{align*}
  \begin{split}
    \mathcal{B}_L (O_B) &= i [O_B, B^\dagger], \ \ \mathcal{B}_D (O_B) = i [O_B, B], \\
    \mathcal{B}_R (O_B) &= i (n_{\rm th} O_B B - (1+n_{\rm th}) B O_B), \\
    \mathcal{B}_U (O_B) &= -i (n_{\rm th} B^\dagger O_B - (1+n_{\rm th}) O_B B^\dagger), \\
  \end{split}
\end{align*}
with $O_B$ an arbitrary operator on $\mathscr{H}_B$.
We can easily show $\mathcal{B}_L (O_B) ^\dagger = \mathcal{B}_D (O_B^\dagger)$ and $\mathcal{B}_R (O_B)^\dagger = \mathcal{B}_U (O_B^\dagger)$ as expected from the Hermitian preserving property of $\mathcal{M}_{\rm ad}$.

Instead of Eq.$\,$(\ref{eq:AE_master}), we perform adiabatic elimination for Eq.$\,$(\ref{eq:damposc_master}).
We introduce $\xi = \mathcal{J} (\rho_s)$, which is related to $\mathcal{K}$ in Sec.$\,$\ref{sec:AE} as $\mathcal{K} = \mathcal{W}_A^{-1} \otimes \mathcal{I}_B \circ \mathcal{J}$.
The invariance condition Eq.$\,$(\ref{eq:AE_inv1}) now reads $\mathcal{J} (\mathcal{L}_s (\rho_s)) = \mathcal{M}_{\rm tot} (\mathcal{J}(\rho_s))$.
Since this equation cannot be solved exactly to our knowledge, as in Eq.$\,$(\ref{eq:AE_asymexp}), we expand $\mathcal{J}$ with respect to $\epsilon$, $\mathcal{J} = \sum_{n = 0}^\infty \epsilon^n \mathcal{J}_n$.
Then, similarly to the derivation of Eq.$\,$(\ref{eq:AE_asymexp0}), we find the zeroth order $\mathcal{J}_0 (\rho_s) = \ketbra{0}{0} \otimes \rho_s$.
Substituting the asymptotic expansions into the invariance condition, the $n$-th order of $\epsilon$ reads
\begin{equation}
  \ketbra{0}{0} \otimes \mathcal{L}_{s,n} (\rho_s) = \mathcal{M}_A \otimes \mathcal{I}_B (\mathcal{J}_{n} (\rho_s) ) + \mathcal{M}_n(\rho_s),
  \label{eq:damposc_inv.cond.n}
\end{equation}
with
\begin{equation*}
  \mathcal{M}_n(\rho_s) = \mathcal{M}_{\rm ad} (\mathcal{J}_{n-1} (\rho_s)) - \sum_{k=1}^{n} \mathcal{J}_{k} (\mathcal{L}_{s,n-k}(\rho_s)).
\end{equation*}
Note $\braket{0| \mathcal{M}_A |0} = 0$, which can be shown directly from the definition Eq.$\,$(\ref{eq:damposc_MA}).
Thus, sandwiching both sides of Eq.$\,$(\ref{eq:damposc_inv.cond.n}) with $\ket{0}$, we find $\mathcal{L}_{s,n}$
\begin{equation*}
  \mathcal{L}_{s,n} (\rho_s) = \braket{0|\mathcal{M}_n(\rho_s)|0}.
\end{equation*}
Furthermore, we obtain
\begin{equation*}
  \begin{gathered}
    \mathcal{J}_{n} (\rho_s) = \mathcal{M}_A^{+} \otimes \mathcal{I}_B (\ketbra{0}{0} \otimes \mathcal{L}_{s,n} (\rho_s) - \mathcal{M}_n(\rho_s)) \\
    + \ketbra{0}{0} \otimes G_n(\rho_s),
  \end{gathered}
\end{equation*}
as in the derivation of Eq.$\,$(\ref{eq:AE_Kn}).
In this equation, $G_n = \braket{0|\mathcal{J}_{n}|0}$ is the gauge superoperator associated with the singularity of $\mathcal{M}_A$.
This definition of $G_n$ is consistent with the one in Sec.$\,$\ref{sec:AE} because of $\braket{0|\mathcal{J}_{n}|0} = {\rm tr}_A \circ \mathcal{K}_n$, which can be shown from the identity $\braket{0|\mathcal{W}_A (O_A)|0} = {\rm tr}_A (O_A)$.
In what follows, we consider the partial trace parametrization or, equivalently, we set $G_n = 0 \ (n = 1,2,\dots)$.

Before proceeding with the calculation, we list several convenient formulas.
We introduce $O_{m,n} = \ketbra{m}{n} \otimes O_B + (h.c.)$.
To calculate $\mathcal{L}_{s,n}$, we use
\begin{equation*}
  \braket{0|\mathcal{M}_{\rm ad}(\ketbra{0}{0} \otimes O_B )|0} = \mathcal{L}_B (O_B),
\end{equation*}
and
\begin{equation*}
  \begin{gathered}
    \braket{0|\mathcal{M}_{\rm ad}( O_{m,n} )|0} = \delta_{m,0} \delta_{n,0} \mathcal{L}_B (O_B) \\
    + \delta_{m,1} \delta_{n,0} \mathcal{B}_L (O_B) +  \delta_{m,0} \delta_{n,1} \mathcal{B}_D (O_B) + (h.c.).
  \end{gathered}
\end{equation*}
To evaluate $\mathcal{J}_n$, we need to calculate $\mathcal{M}_A^{+}(\bullet)$ and $\ketbra{0}{0} \otimes \braket{0|\mathcal{M}_{\rm ad} (\bullet)|0}  - \mathcal{M}_{\rm ad} (\bullet)$.
From Eq.$\,$(\ref{eq:vect_MPinverse}), the operation of $\mathcal{M}_A^{+}$ reads
\begin{equation*}
  \begin{gathered}
    \mathcal{M}_A^{+} (\ketbra{m}{n} + (h.c.)) = - y_{m,n} \ketbra{m}{n} + (h.c.), \\
    (m + n > 0),
  \end{gathered}
\end{equation*}
with $y_{m,n} = - 1/\lambda_{m,n} = 2/(\bar{\gamma} m + \bar{\gamma}^* n)$.
To calculate the latter, $\ketbra{0}{0} \otimes \braket{0|\mathcal{M}_{\rm ad} (\bullet)|0}  - \mathcal{M}_{\rm ad} (\bullet)$, we use
\begin{equation*}
  \begin{gathered}
    \ketbra{0}{0} \otimes \braket{0|\mathcal{M}_{\rm ad} (\ketbra{0}{0} \otimes \rho_B)|0}  - \mathcal{M}_{\rm ad} (\ketbra{0}{0} \otimes \rho_B) \\
    = - \ketbra{1}{0} \otimes \mathcal{B}_R (\rho_B) + (h.c.),
  \end{gathered}
\end{equation*}
for $\rho_B$ any Hermitian operator on $\mathscr{H}_B$ and
\begin{equation*}
  \begin{gathered}
    \ketbra{0}{0} \otimes \braket{0|\mathcal{M}_{\rm ad} (O_{m,n})|0} - \mathcal{M}_{\rm ad} (O_{m,n}) \\
    = - \sqrt{m+1} \ketbra{m+1}{n} \otimes \mathcal{B}_R(O_B) \\
    - \sqrt{n + 1} \ketbra{m}{n+1} \otimes \mathcal{B}_U(O_B) \\
    - (1 - \delta_{m,1}\delta_{n,0}) \sqrt{m} \ketbra{m-1}{n} \otimes \mathcal{B}_L(O_B) \\
    - (1 - \delta_{m,0}\delta_{n,1}) \sqrt{n} \ketbra{m}{n-1} \otimes \mathcal{B}_D(O_B) \\
    - \ketbra{m}{n} \otimes \mathcal{L}_B (O_B) + (h.c.).
  \end{gathered}
\end{equation*}

With the aid of these formulas, we can calculate the higher-order contributions.
For the partial trace parametrization, $\{ \mathcal{L}_{s,n} \}_{n=1,2,3,4}$ and $\{ \mathcal{J}_{n} \}_{n = 1,2,3}$ read
\begin{equation*}
  \mathcal{L}_{s,1}(\rho_B) = \mathcal{L}_B (\rho_B),
\end{equation*}
\begin{equation*}
  \mathcal{J}_{1}(\rho_B) = y_{1,0} \ketbra{1}{0} \otimes \mathcal{B}_R (\rho_B) + (h.c.),
\end{equation*}
\begin{equation*}
  \mathcal{L}_{s,2}(\rho_B) = y_{1,0} \mathcal{B}_R \circ \mathcal{B}_L (\rho_B) + (h.c.),
\end{equation*}
\begin{equation*}
  \begin{gathered}
    \mathcal{J}_{2}(\rho_B) = y_{1,0}^2 \ketbra{1}{0} \otimes [\mathcal{L}_B, \mathcal{B}_R] (\rho_B) \\
    + \sqrt{2} \, y_{1,0} y_{2,0} \ketbra{2}{0} \otimes \mathcal{B}_R \circ \mathcal{B}_R (\rho_B) \\
    + y_{1,0} y_{1,1} \ketbra{1}{1} \otimes \mathcal{B}_U \circ \mathcal{B}_R (\rho_B) + (h.c.),
  \end{gathered}
\end{equation*}
\begin{equation*}
  \mathcal{L}_{s,3}(\rho_B) = y_{1,0}^2 \mathcal{B}_L \circ [\mathcal{L}_B, \mathcal{B}_R] (\rho_B) + (h.c.),
\end{equation*}
\begin{equation*}
  \begin{gathered}
    \mathcal{J}_{3}(\rho_B) = y_{1,0}^3 \ketbra{1}{0} \otimes [\mathcal{L}_B, [\mathcal{L}_B, \mathcal{B}_R]] (\rho_B) \\
    + 2 \, y_{1,0}^2 y_{2,0} \ketbra{1}{0} \otimes \mathcal{B}_L \circ \mathcal{B}_R \circ \mathcal{B}_R (\rho_B) \\
    + |y_{1,0}|^2 y_{1,1} \ketbra{0}{1} \otimes \mathcal{B}_L \circ \mathcal{B}_U \circ \mathcal{B}_R (\rho_B) \\
    + y_{1,0}^2 y_{1,1} \ketbra{1}{0} \otimes \mathcal{B}_D \circ \mathcal{B}_U \circ \mathcal{B}_R (\rho_B) \\
    - y_{1,0}^2 \ketbra{1}{0} \otimes \mathcal{B}_R \circ \mathcal{L}_{s,2} (\rho_B) \\
    + \dots + (h.c.),
  \end{gathered}
\end{equation*}
\begin{equation*}
  \begin{gathered}
    \mathcal{L}_{s,4}(\rho_B) = y_{1,0}^3 \mathcal{B}_L \circ [\mathcal{L}_B, [\mathcal{L}_B, \mathcal{B}_R]] (\rho_B) \\
    + 2 \, y_{1,0}^2 y_{2,0} \mathcal{B}_L \circ \mathcal{B}_L \circ \mathcal{B}_R \circ \mathcal{B}_R (\rho_B) \\
    + |y_{1,0}|^2 y_{1,1} \mathcal{B}_D \circ \mathcal{B}_L \circ \mathcal{B}_U \circ \mathcal{B}_R (\rho_B) \\
    + y_{1,0}^2 y_{1,1} \mathcal{B}_L \circ \mathcal{B}_D \circ \mathcal{B}_U \circ \mathcal{B}_R (\rho_B) \\
    - y_{1,0}^2 \mathcal{B}_L \circ \mathcal{B}_R \circ \mathcal{L}_{s,2} (\rho_B) + (h.c.),
  \end{gathered}
\end{equation*}
where $\dots$ in $\mathcal{J}_{3}$ are the terms with $\ketbra{m}{n} \ (m+n > 1)$, which do not contribute to $\mathcal{L}_{s,4}$.
With $B = \sigma_-$ and $\mathcal{L}_B = 0$, we obtain Eq.$\,$(\ref{eq:JC_LsPT}).

\section{Properties of Eq.$\,$(\ref{eq:JC_LsPT})}
\label{app:damposc.4thL}

In this appendix, we discuss several properties of $\mathcal{L}_s^{G=0}$ defined by Eq.$\,$(\ref{eq:JC_LsPT}).

\subsection{Spectrum}
\label{app:damposc.4thL_spec}

Let us first calculate the spectrum of $\mathcal{L}_s^{G=0}$.
With $I_B$ the identity matrix on $\mathscr{H}_B$ ($2$-dimensional identity matrix), the set $\{ I_B/\sqrt{2}, \sigma_x/\sqrt{2}, \sigma_y/\sqrt{2}, \sigma_z/\sqrt{2} \}$ is an orthonormal basis with the Hilbert-Schmidt inner product.
Let $[\mathcal{L}_s^{G=0}]$ be the $4 \times 4$ matrix representation of $\mathcal{L}_{s}^{G=0}$ in this basis. It reads
\begin{equation}
  [\mathcal{L}_{s}^{G=0}] =
  \begin{pmatrix}
    0 & 0 & 0 & 0 \\
    0 & -1/T_2 & - \omega_B^{(4)} & 0 \\
    0 & \omega_B^{(4)} & - 1/T_2 & 0 \\
    R_{z}/T_1 & 0 & 0 & - 1/T_1
  \end{pmatrix},
  \label{eq:damposc.4thL_matrix.form}
\end{equation}
with $1/T_1 = \gamma_-^{(4)} + \gamma_+^{(4)}$, $1/T_2 = 1/(2 T_1) + 2 \gamma_\phi^{(4)}$, and $R_{z} = - (\gamma_-^{(4)} - \gamma_+^{(4)}) T_1$.
From this, the spectrum of $\mathcal{L}_{s}^{G=0}$ is given by $\{ 0, -1/T_2 + i \omega_B^{(4)}, -1/T_2 - i \omega_B^{(4)}, -1/T_1 \}$.

\subsection{Positivity}
\label{app:damposc.4thL_positive}

We next show that the time evolution map is positive even when $\gamma_\phi^{(4)} < 0$.
To see this, we introduce the Bloch vector $\bm{r}(t) = (r_x(t), r_y(t), r_z(t))^\top \in \mathbb{R}^3$, which is related to the partial trace as $\rho_B (t) = (I_B + \sum_{i = x,y,z} r_i(t) \sigma_i)/2$.
From Eq.$\,$(\ref{eq:JC_LsPT}), the evolution of the Bloch vector reads
\begin{equation*}
  \begin{gathered}
    \frac{d}{dt} r_x(t) = - \frac{r_x(t)}{T_2} - \omega_B^{(4)} r_y(t), \\
    \frac{d}{dt} r_y(t) = - \frac{r_y(t)}{T_2} + \omega_B^{(4)} r_x(t), \\
    \frac{d}{dt} r_z(t) = - \frac{1}{T_1} (r_z(t) - R_{z}).
  \end{gathered}
\end{equation*}
These equations mean that that $R_{z}$ is the asymptotic value of $r_z(t)$, $\lim_{t \to \infty} \bm{r}(t) = (0, 0, R_{z})^\top$.
They also lead to
\begin{equation*}
  \frac{d}{dt} \bm{r}^2(t) = - 2 \Big( \frac{\bm{r}^2(t) - r_z^2(t)}{T_2} + \frac{r_z(t) (r_z(t) - R_{z})}{T_1} \Big).
\end{equation*}
The partial trace $\rho_B(t)$ is positive semidefinite if and only if $\bm{r}^2(t) \leq 1$.
If we have $(d/dt) \bm{r}^2(t) < 0$ under the constraint $\bm{r}^2(t) = 1$ for any $t$, then $\bm{r}^2(t)$ does not exceed unity along the time evolution and thus positivity is preserved.
Substituting $\bm{r}^2(t) = 1$ into the above equation, we obtain
\begin{equation*}
  \begin{gathered}
    \frac{d}{dt} \bm{r}^2(t) = \\
    - \frac{2}{\Delta T}  \Big[ \Big( r_z(t) - \frac{\Delta T}{2 T_1} R_{z} \Big)^2 + (\Delta T)^2 ( \gamma_-^{(4)} \gamma_+^{(4)} - 4 (\gamma_\phi^{(4)})^2 ) \Big],
  \end{gathered}
\end{equation*}
with $1/\Delta T = 1/T_1 - 1/T_2 = (\gamma_-^{(4)} + \gamma_+^{(4)})/2 - 2 \gamma_\phi^{(4)}$.
When $|g|/\gamma \ll 1$, we have $\gamma_{\pm}^{(4)} \gg |\gamma_\phi^{(4)}|$.
This leads to $\Delta T > 0$ and $\gamma_-^{(4)} \gamma_+^{(4)} > 4 (\gamma_\phi^{(4)})^2$.
Therefore, $(d/dt) \bm{r}^2(t) < 0$ whenever $\bm{r}^2(t) = 1$.
This proves that the time evolution map is positive even with a negative value of $\gamma_\phi^{(4)}$.

\subsection{Incompatibility with a Kraus map}
\label{app:damposc.4thL_WPG}

From the result in Appendix \ref{app:damposc.4thL_spec}, the spectrum of the time evolution map $\exp(\mathcal{L}_s^{G = 0} t)$ is given by $\{ 1, \exp(-t/T_2 + i \omega_B^{(4)} t), \exp( -t/T_2 - i \omega_B^{(4)} t), \exp( -t/T_1 ) \}$.
Here we prove that a Kraus map whose spectrum is given by this set does not exist when $\gamma_\phi^{(4)} < 0$ and for an infinitesimal time $t$.
To this end, we use the theorem in \cite{WPG10}, which was explicitly presented in Sec.$\,$\ref{sec:JC}.
Let $\Lambda$ be the spectrum of the time evolution map $\exp( \mathcal{L}_{s}^{G=0} t )$, that is,
\begin{equation*}
  \Lambda = \{ 1, e^{-t/T_2 + i \omega_B^{(4)}t}, e^{-t/T_2 - i \omega_B^{(4)} t}, e^{-t/T_1} \}.
\end{equation*}
We recall that $T_1$ and $T_2$ are defined underneath Eq.$\,$(\ref{eq:damposc.4thL_matrix.form}).
It is easy to check that $\Lambda$ is closed under complex conjugation.
To see the condition Eq.$\,$(\ref{eq:JC_thm}), we note that $\{ s_i \}$ are now all positive from their definitions. 
In such cases, assuming $s_1 \geq s_2 \geq s_3$, the condition Eq.$\,$(\ref{eq:JC_thm}) reads $s_1 \leq 1$ and $s_1 + s_2 \leq 1 + s_3$ (see \cite{WPG10}).
When $\gamma_\phi^{(4)} < 0$, we always have $1/T_1 > 1/T_2$, and $\exp( -t/T_2) > \exp( -t/T_1 )$ for $t > 0$.
Thus, we set $s_1 = s_2 = \exp( -t/T_2)$ and $s_3 = \exp( -t/T_1)$.
Then, while the first condition $s_1 \leq 1$ is always satisfied, the second condition that reads
\begin{equation}
  2 e^{-t/T_2} \leq 1 + e^{-t/T_1},
  \label{eq:damposc.4thL_CPinequality}
\end{equation}
is nontrivial.
For an infinitesimal time $t$ such that $\exp( -t/T_i ) \simeq 1 - t/T_i \ (i = 1,2)$, this condition reads $\gamma_\phi^{(4)} \geq 0$ and is violated if $\gamma_\phi^{(4)} < 0$.
Therefore, the condition Eq.$\,$(\ref{eq:JC_thm}) is not satisfied when $\gamma_\phi^{(4)} < 0$ and for an infinitesimal time $t$ and this concludes the proof. 

\section{Exact master equation with a product initial state and complete positivity}
\label{app:exactCP}

When the initial state is a product state,the time evolution of the partial trace is in general completely positive as discussed following Eq.$\,$(\ref{eq:diss_InitialProduct}).
The complete positivity violation found in this paper stems from the transient dynamics that is discarded in adiabatic elimination.
We expect complete positivity to be restored by accounting for the short-time regime.
In this appendix, we confirm this expectation.
To this end, we consider the qudit system discussed in Sec.$\,$\ref{sec:Diag} and derive the exact master equation for the partial trace with an initially product state.

In order to derive the master equation, we first calculate the time evolution of the partial trace.
Using Eq.$\,$(\ref{eq:Diag_LAmn}),
the evolution of the total density matrix with a product initial state $\rho(0) = \rho_A \otimes \rho_B(0)$, with $\rho_A$ a fixed initial state on $\mathscr{H}_A$, reads
\begin{equation*}
  \begin{gathered}
    \rho(t) = e^{\mathcal{L}_{\rm tot} t} (\rho_A \otimes \rho_B(0)) \\
    = \sum_{m,n = 1}^d e^{\mathcal{L}_A^{(m,n)} t} (\rho_A) \otimes E_m \rho_B(0) E_n.
  \end{gathered}
\end{equation*}
By taking ${\rm tr}_A$ of this equation, we obtain the evolution of the partial trace
\begin{equation}
  \rho_B(t) = \sum_{m,n = 1}^d [\mathcal{T}_{B,t}]_{m,n} E_m \rho_B(0) E_n,
  \label{eq:exactCP_rhoBt}
\end{equation}
with $[\mathcal{T}_{B,t}]_{m,n} = {\rm tr}_A \circ \exp (\mathcal{L}_A^{(m,n)} t) (\rho_A)$.
From $(\mathcal{L}_A^{(m,n)} (A))^\dagger = \mathcal{L}_A^{(n,m)} (A^\dagger)$, which can be directly shown from the definition of $\mathcal{L}_A^{(m,n)}$, the coefficients satisfy $[\mathcal{T}_{B,t}]_{m,n}^* = [\mathcal{T}_{B,t}]_{n,m}$.
This relation ensures the Hermitian property of $\rho_B(t)$.
Furthermore, since $\mathcal{L}_A^{(m,m)}$ is a Lindbladian, the trace preservation leads to $[\mathcal{T}_{B,t}]_{m,m} = {\rm tr}_A (\rho_A) = 1$.
This ensures the trace preservation of the partial trace ${\rm tr}_B (\rho_B(t)) = 1$ with ${\rm tr}_B$ the trace over $\mathscr{H}_B$.

In what follows, we assume $[\mathcal{T}_{B,t}]_{m,n} \ne 0$ for any $m,n = 1,\dots,d$ and $t \in \mathbb{R}_{\geq 0}$.
Under this assumption, the master equation for the partial trace can be put into a compact form as
\begin{equation}
  \frac{d}{dt} \rho_B (t) = \sum_{m,n = 1}^d \lambda_{m,n} (t) E_m \rho_B(t) E_n,
  \label{eq:exactCP_LB}
\end{equation}
with $\lambda_{m,n} (t) = (d/dt) \ln ([\mathcal{T}_{B,t}]_{m,n})$.
Long after decay time of the fast sub-system, the time dependence of $[\mathcal{T}_{B,t}]_{m,n}$ is governed by the eigenmode of $\mathcal{L}_A^{(m,n)}$ whose real part is the closest to zero.
In Appendix \ref{app:ae.qudit_all}, we have defined the mode $k = 1$ to have this property.
Therefore, the time dependence of $[\mathcal{T}_{B,t}]_{m,n}$ asymptotically reads $[\mathcal{T}_{B,t}]_{m,n} \propto \exp(\lambda_{m,n}t)$, from which we obtain
\begin{equation}
  \lim_{t \to \infty} \lambda_{m,n} (t) = \lambda_{m,n}.
  \label{eq:exactCP_lmdrelax}
\end{equation}
In other words, while the master equation Eq.$\,$(\ref{eq:exactCP_LB}) explicitly depends on time before the relaxation is completed, it asymptotically agrees with the evolution equation in adiabatic elimination Eq.$\,$(\ref{eq:Diag_LsPT}).

From the properties of $[\mathcal{T}_{B,t}]_{m,n}$ derived above, we have $\lambda_{m,n}(t)^* = \lambda_{n,m}(t)$ and $\lambda_{m,m}(t) = 0$.
As in Eq.$\,$(\ref{eq:vect_LsPT_general}), thus, Eq.$\,$(\ref{eq:exactCP_LB}) can be rewritten as
\begin{equation}
  \begin{gathered}
    \frac{d}{dt} \rho_B (t) = \\
    - \Big[ \sum_{\substack{m,n=1 \\ (m<n)}}^d \sum_{k = 1}^{d-1} \frac{\lambda_{m,n}(t) - \lambda_{m,n}^*(t)}{2d} (S_{m,k}-S_{n,k})  \tau_k, \rho_B (t) \Big] \\
    + \sum_{k,l = 1}^{d-1} (S^\top \lambda (t) S)_{k,l} \Big[ \tau_k \rho_B (t) \tau_l - \frac{1}{2} ( \tau_l \tau_k \rho_B (t) + \rho_B (t) \tau_l \tau_k) \Big],
  \end{gathered}
  \label{eq:exactCP_LB_Lindblad}
\end{equation}
where the matrix $\lambda(t)$ is defined by $[\lambda(t)]_{m,n} = \lambda_{m,n}(t)$.
Note that the eigenvalues of $S^\top \lambda (t) S$, which correspond to the decay rate, are now dependent on time.

As a concrete example, we consider a qutrit $(d = 3)$ coupled to a dissipative qubit governed by Eq.$\,$(\ref{eq:Diag_qubit}).
We are interested in a parameter set where (complete) positivity is violated in the slow dynamics.
According to Fig.$\,$\ref{fig:Diag_sign}, the violation occurs, for instance, when $(\chi_1,\chi_2,\chi_3) = (0,\chi,2\chi)$, $\chi/\kappa = 0.1$, $\Omega/\kappa = 0.5$, and $\Delta/\kappa = 0.5$, and we use these values in our simulation.
We assume that $\rho_A$ is the unique steady state of $\mathcal{L}_A$, that is, the initial states are on the invariant manifold without the coupling.
In our simulation, we computed $\rho_A$ from the right eigenvector of $\mathcal{L}_A$ the eigenvalue of which is the closest to zero.
With the steady state $\rho_A$, we computed $[\mathcal{T}_{B,t}]_{m,n}$ in the vectorized representation $[\mathcal{T}_{B,t}]_{m,n} = \dbra{I_2} \exp (\hat{\mathcal{L}}_A^{(m,n)} t) \dket{\rho_A}$,
where the matrix exponentiation $\exp (\hat{\mathcal{L}}_A^{(m,n)} t)$ was directly evaluated using a SciPy function.
We confirmed that $[\mathcal{T}_{B,t}]_{m,n} \ne 0$ is satisfied for $m,n = 1,2,3$ within the time range under study ($0 \leq \kappa t \leq 20$).
Lastly, we computed the coefficients $\{ \lambda_{m,n}(t) \}$ in the discretized form $\lambda_{m,n}(t) = (\ln ([\mathcal{T}_{B,t+\delta t}]_{m,n}) - \ln ([\mathcal{T}_{B,t}]_{m,n}))/\delta t$ with $\kappa \delta t = 10^{-2}$.

\begin{figure}[t]
  \includegraphics[keepaspectratio, scale=0.45]{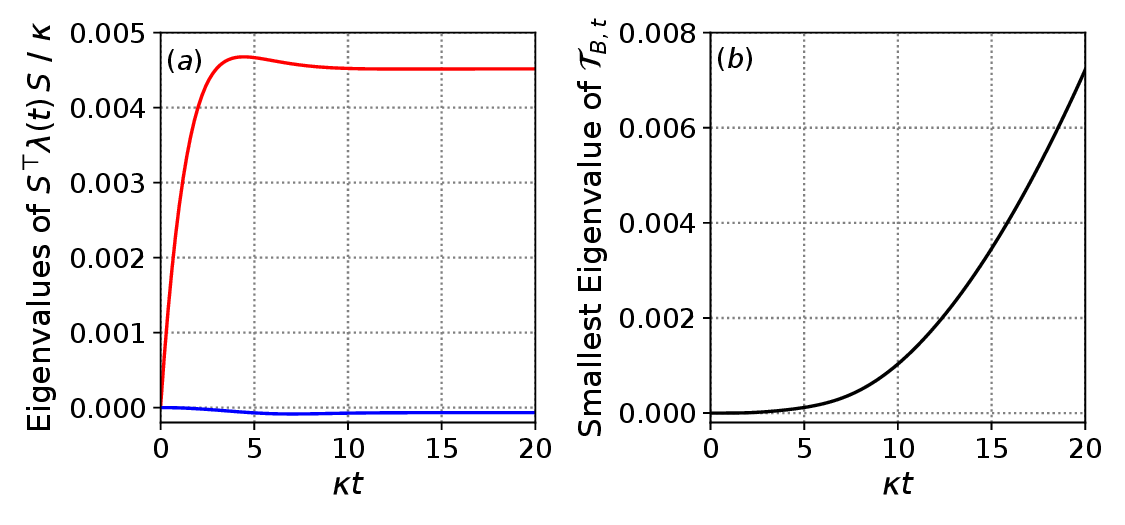}
  \caption{Coefficients in the exact master equation and confirmation of complete positivity for the qubit-qutrit system with $(\chi_1,\chi_2,\chi_3) = (0,\chi,2\chi)$, $\chi/\kappa = 0.1$, $\Omega/\kappa = 0.5$, and $\Delta/\kappa = 0.5$.
  $(a)$ Time-dependent coefficients in front of the dissipators in the exact master equation.
  From Eq.$\,$(\ref{eq:exactCP_LB_Lindblad}), we evaluated them from the eigenvalues of the $2$-dimensional matrix $S^\top \lambda(t) S$.
  The larger eigenvalue (the red curve) is positive all the time, whereas the smaller eigenvalue (the blue curve) is negative.
  $(b)$ Smallest eigenvalue of the matrix $\mathcal{T}_{B,t}$.
  The fact that this is non-negative indicates that the time evolution map is completely positive (see the main text).
  }
  \label{fig:exactCP_ExactMaster}
\end{figure}

Figure \ref{fig:exactCP_ExactMaster} $(a)$ shows the two eigenvalues of the coefficient matrix $S^\top \lambda (t) S$ as a function of time.
As expected, the eigenvalues become constant after $\kappa t \simeq 10$.
We numerically checked the relation Eq.$\,$(\ref{eq:exactCP_lmdrelax}) in this long-time regime.
One of the eigenvalues shown by the blue curve is always negative.
This negativity asymptotically results in the (complete) positivity violation found in Sec.$\,$\ref{sec:Diag}.
On the other hand, the time evolution map including the transient regime ($0 \leq \kappa t \leq 10$ in the current setting) is completely positive.
To see this numerically, we consider the Choi matrix of the time evolution map.
From Eq.$\,$(\ref{eq:exactCP_rhoBt}), it reads $\sum_{m,n = 1}^d [\mathcal{T}_{B,t}]_{m,n} \, \dketbra{E_m}{E_n}$.
As shown in Appendix \ref{app:vect}, the time evolution map is completely positive if and only if this matrix is positive semidefinite.
Since $\{ \dket{E_m} \}_{m = 1,2,3}$ is an orthonormal set, we only need to consider the property of the matrix $\mathcal{T}_{B,t} \in \mathbb{C}^{3 \times 3}$ the $mn$-elements of which are $[\mathcal{T}_{B,t}]_{m,n} \ (m,n = 1,2,3)$.
In Fig.$\,$\ref{fig:exactCP_ExactMaster} $(b)$, we show the smallest eigenvalue of $\mathcal{T}_{B,t}$ along the time evolution.
It is non-negative as expected.
Therefore, the time evolution map is completely positive including the transient regime in this example.

\section{$\mathcal{K}$ and $\mathcal{L}_s$ for a general class of settings}
\label{app:K2L3}

In this appendix, we consider a general class of settings introduced in Sec.$\,$\ref{sec:AE} ($\mathcal{L}_B$ and $\mathcal{L}_{\rm int}$ are assumed to contain only Hamiltonian terms) and calculate the second-order expansion of $\mathcal{K}$ and the third-order expansion of $\mathcal{L}_s$.
These are for clarification and extension of results in \cite{AzouitThesis}.
As in \cite{AzouitThesis}, we denote the interaction by $\mathcal{L}_{\rm int} (\rho) = - i [H_{\rm int}, \rho]$ with $H_{\rm int} = \sum_{k} A_k^\dagger \otimes B_k = \sum_k A_k \otimes B_k^\dagger$.
Occasionally, we consider the oscillator-qubit system in Sec.$\,$\ref{sec:JC} as a concrete example. 

\subsection{Second-order expansion of $\mathcal{K}$}
\label{app:K2L3_K2.geneal}

In Theorem 6 of \cite{AzouitThesis}, the second-order expansion of $\mathcal{K}$ was calculated for a general class of settings.
The theorem states that, there always exist gauge choices such that $\mathcal{K}$ is completely positive within the expansion order.
Although not proved explicitly, the result also implies that $\mathcal{K}$ for the partial trace is always noncompletely positive.
Here we repeat the calculation to clarify this point and to find a specific form of gauge superoperators leading to completely positive $\mathcal{K}$.

We first outline how we proceed the calculation.
The right-hand side of Eq.$\,$(\ref{eq:AE_LAKn}) is traceless and thus generally takes the form
\begin{equation*}
  \begin{gathered}
    \bar{\rho}_A \otimes \mathcal{L}_{s,n} (\rho_s) - \mathcal{L}_n(\rho_s) =  \mathcal{S}(X_{1} \bar{\rho}_A) \otimes \mathcal{B}_1 (\rho_s) \\
    + \mathcal{S}(\bar{\rho}_A X_{2}) \otimes \mathcal{B}_2 (\rho_s)
    + \mathcal{S}(X_{3} \bar{\rho}_A X_{4}) \otimes \mathcal{B}_3 (\rho_s),
  \end{gathered}
\end{equation*}
where the superoperator $\mathcal{S}$ is defined by $\mathcal{S}(X) = X - {\rm tr}_A(X) \bar{\rho}_A$, $\{ X_{i} \}_{i =1,2,3,4}$ are operators on $\mathscr{H}_A$, and $\{ \mathcal{B}_j \}_{j = 1,2,3}$ are superoperators on $\mathscr{H}_B$.
Then, Eq.$\,$(\ref{eq:AE_Kn}) reads
\begin{equation*}
  \begin{gathered}
    \mathcal{K}_n^{G} (\rho_s) = \mathcal{L}_A^{+} (\mathcal{S}(X_{1} \bar{\rho}_A)) \otimes \mathcal{B}_1 (\rho_s) \\
    + \mathcal{L}_A^{+} ( \mathcal{S}(\bar{\rho}_A X_{2}) ) \otimes \mathcal{B}_2 (\rho_s)
    + \mathcal{L}_A^{+} ( \mathcal{S}(X_{3} \bar{\rho}_A X_{4}) ) \otimes \mathcal{B}_3 (\rho_s) \\
    + \bar{\rho}_A \otimes G_n(\rho_s).
  \end{gathered}
\end{equation*}
Regarding the first and second terms on the right-hand side,
it was shown in Lemma 4 of \cite{Azouit17} that there exist $\{ Y_{i} \}_{i = 1,2}$ such that $\mathcal{L}_A^{+} (\mathcal{S}(X_{1} \bar{\rho}_A)) = Y_{1} \bar{\rho}_A$
and $\mathcal{L}_A^{+} ( \mathcal{S}(\bar{\rho}_A X_{2}) ) = \bar{\rho}_A Y_{2}$, even when $\bar{\rho}_A$ is not full rank.
These relations are essential in proving the Kraus map form as shown below.
The fourth term on the right-hand side represents the gauge degree of freedom.
Here we restrict our gauge choice to the following form which only modifies the other terms without generating independent ones;
\begin{equation*}
  G_n(\rho_s) = g_1 \mathcal{B}_1 (\rho_s) + g_2 \mathcal{B}_2 (\rho_s) + g_3 \mathcal{B}_3 (\rho_s),
\end{equation*}
with $g_i \in \mathbb{C} \ (i = 1,2,3)$.
Redefining $Y_{i}$ so that the $g_i$ term is absorbed for $i = 1,2$, we obtain
\begin{equation*}
  \begin{gathered}
    \mathcal{K}_n^G (\rho_s) = Y_{1} \bar{\rho}_A \otimes \mathcal{B}_1 (\rho_s)
    + \bar{\rho}_A Y_{2} \otimes \mathcal{B}_2 (\rho_s)
    + Z \otimes \mathcal{B}_3 (\rho_s),
  \end{gathered}
\end{equation*}
with
\begin{equation*}
  Y_{1} \bar{\rho}_A = \mathcal{L}_A^{+} (\mathcal{S}(X_{1} \bar{\rho}_A)) + g_1 \bar{\rho}_A,
\end{equation*}
\begin{equation*}
  \bar{\rho}_A Y_{2} = \mathcal{L}_A^{+} ( \mathcal{S}(\bar{\rho}_A X_{2}) ) + g_2 \bar{\rho}_A,
\end{equation*}
and
\begin{equation*}
  Z = \mathcal{L}_A^{+} ( \mathcal{S}(X_{3} \bar{\rho}_A X_{4}) ) + g_3 \bar{\rho}_A.
\end{equation*}

Using these formulas, we can calculate the first- and second-order contributions as follows;
\begin{equation}
  \mathcal{K}_1^G (\rho_s) = - i \sum_k (F_k \otimes B_k^\dagger) (\bar{\rho}_A \otimes \rho_s) + (h.c.),
  \label{eq:K2L3_K_1}
\end{equation}
and
\begin{equation}
  \begin{gathered}
    \mathcal{K}_2^G (\rho_s) = i \sum_k V_k \bar{\rho}_A \otimes (\mathcal{C}_B^k (\rho_s) - \mathcal{L}_B (B_k^\dagger) \rho_s) \\
    + i \sum_k (V_k \otimes \mathcal{L}_B (B_k^\dagger) ) (\bar{\rho}_A \otimes \rho_s) \\
    - \sum_{k,j} \braket{A_k^\dagger} (V_j \otimes B_j^\dagger B_k) (\bar{\rho}_A \otimes \rho_s) \\
    + \sum_{k,j} ( U_{k,j} \otimes B_k B_j^\dagger ) (\bar{\rho}_A \otimes \rho_s) \\
    - \sum_{k,j} Z_{k,j} \otimes B_k^\dagger \rho_s B_j + (h.c.).
  \end{gathered}
  \label{eq:K2L3_K_2}
\end{equation}
with $\mathcal{C}_B^k (\rho_s) = \mathcal{L}_B (B_k^\dagger \rho_s) - B_k^\dagger \mathcal{L}_B (\rho_s)$.
In these equations, we have introduced $\braket{A_k} = {\rm tr}_A (A_k \bar{\rho}_A)$.
The inverse of $\mathcal{L}_A$ has led to
\begin{equation*}
  F_k \bar{\rho}_A = - \mathcal{L}_A^{+} (\mathcal{S} (A_k \bar{\rho}_A)) + f_k \bar{\rho}_A,
\end{equation*}
\begin{equation*}
  V_k \bar{\rho}_A = \mathcal{L}_A^{+} (\mathcal{S} (F_k \bar{\rho}_A)) + v_k \bar{\rho}_A,
\end{equation*}
\begin{equation*}
  U_{k,j} \bar{\rho}_A = \mathcal{L}_A^{+} (\mathcal{S} (A_k^\dagger F_j \bar{\rho}_A)) + u_{k,j} \bar{\rho}_A,
\end{equation*}
and
\begin{equation*}
  Z_{k,j} = \mathcal{L}_A^{+} (\mathcal{S} (F_k \bar{\rho}_A \bar{A}_j^\dagger)) - z_{k,j} \bar{\rho}_A,
\end{equation*}
where $\bar{A}_j = A_j - \braket{A_j} I_A$.
The gauge superoperators have been set as
\begin{equation*}
  G_1 (\rho_s) = - i \sum_k f_k B_k^\dagger \rho_s + (h.c.),
\end{equation*}
and
\begin{equation*}
  \begin{gathered}
    G_2 (\rho_s) = i \sum_k v_k \mathcal{C}_B^k (\rho_s)
    - \sum_{k,j} \braket{A_k^\dagger} v_j B_j^\dagger B_k \rho_s \\
    + \sum_{k,j} u_{k,j} B_k B_j^\dagger \rho_s
    + \sum_{k,j} z_{k,j} B_k^\dagger\rho_s B_j
    + (h.c.).
  \end{gathered}
\end{equation*}
When $\mathcal{L}_B$ contains only Hamiltonian terms, $\mathcal{C}_B^k (\rho_s) = \mathcal{L}_B (B_k^\dagger) \rho_s$ and the first line on the right side of Eq.$\,$(\ref{eq:K2L3_K_2}) vanishes.
Thus, we neglect this term in the following.

Note
\begin{equation*}
  \begin{gathered}
    \sum_{k,j} Z_{k,j} \otimes B_k^\dagger \rho_s B_j + (h.c.) = \\
    \sum_{k,j} (Z_{k,j} + Z_{j,k}^\dagger) \otimes B_k^\dagger \rho_s B_j,
  \end{gathered}
\end{equation*}
where
\begin{equation*}
  \begin{gathered}
    Z_{k,j} + Z_{j,k}^\dagger = \\
    \mathcal{L}_A^{+} ( \mathcal{S} (F_k \bar{\rho}_A \bar{A}_j^\dagger + \bar{A}_k \bar{\rho}_A F_j^\dagger ) ) - (z_{k,j} + z_{j,k}^*) \bar{\rho}_A.
  \end{gathered}
\end{equation*}
When $\mathcal{L}_A$ is given by Eq.$\,$(\ref{eq:AE_LA}), it was shown in \cite{AzouitThesis} that the $\mathcal{L}_A^{+}$ part on the right-hand side reads
\begin{equation*}
  \begin{gathered}
    \mathcal{L}_A^{+} ( \mathcal{S} (F_k \bar{\rho}_A \bar{A}_j^\dagger + \bar{A}_k \bar{\rho}_A F_j^\dagger ) ) = \\
    \mathcal{S} (F_k \bar{\rho}_A F_j^\dagger) - \mathcal{L}_A^{+} ( \sum_\mu [L_{A,\mu}, F_k] \bar{\rho}_A [L_{A,\mu}, F_j]^\dagger ).
  \end{gathered}
\end{equation*}

Combining all the terms together, we eventually obtain, up to the second-order of $\epsilon$,
\begin{equation}
  \begin{gathered}
    \mathcal{K}^G (\rho_s) = W (\bar{\rho}_A \otimes \rho_s) W^\dagger \\
    - \epsilon^2 \mathcal{L}_A^{+} \Big( \sum_\mu P_\mu (\bar{\rho}_A \otimes \rho_s) P_\mu^\dagger \Big) \\
    + \epsilon^2 \sum_{k,j} \mu_{k,j} (I_A \otimes B_k^\dagger) (\bar{\rho}_A \otimes \rho_s) (I_A \otimes B_j), \\
  \end{gathered}
  \label{eq:K2L3_K2nd}
\end{equation}
with $W = I_A \otimes I_B - i \epsilon M + \epsilon^2 N$, $M = \sum_k F_k \otimes B_k^\dagger$,
$N = \sum_{k,j} (U_{k,j} \otimes B_k B_j^\dagger - \braket{A_k^\dagger} V_j \otimes B_j^\dagger B_k) + i \sum_k V_k \otimes \mathcal{L}_B(B_k^\dagger)$,
$P_\mu = \sum_k [L_{A,\mu}, F_k] \otimes B_k^\dagger$, and $\mu_{k,j} = z_{k,j} + z_{j,k}^* - {\rm tr}_A (F_k \bar{\rho}_A F_j^\dagger)$.
The second line is completely positive because $- \mathcal{L}_A^{+}$ is a completely positive map from Lemma 1 of \cite{Azouit17}.
Therefore, $\mathcal{K}^G$ is completely positive if and only if the matrix $\mu$, the elements of which are defined by $[\mu]_{k,j} = \mu_{k,j}$, is positive semidefinite.

The partial trace parametrization corresponds to $G_1 = G_2 = 0$, and thus $z_{k,j} = 0$.
In this case, $\mu_{k,j} = - {\rm tr}_A (F_k^\dagger \bar{\rho}_A F_j)$ and it is negative semidefinite.
This consideration concludes that $\mathcal{K}^{G = 0}$ is always a noncompletely positive map.
On the other hand, if one sets
\begin{equation}
  z_{k,j} = \frac{z}{2} {\rm tr}_A (F_k \bar{\rho}_A F_j^\dagger),
  \label{eq:K2L3_zkj}
\end{equation}
with $z \in \mathbb{R}_{\geq 1}$, the matrix $\mu$ is positive semidefinite.
Therefore, there always exist gauge choices ensuring complete positivity of $\mathcal{K}^{G}$.

Complete positivity of $\mathcal{K}^{G}$ is attained irrespective of the values of $f_k$, $v_k$, and $u_{k,j}$.
If we set $f_k = v_k = u_{k,j} = 0$, however, $G(\rho_s) = \rho_s + \sum_{k,j} (z_{k,j} + z_{j,k}^*)  B_k^\dagger \rho_s B_j$ and $\mathcal{K}$ does not preserve the trace.
The simplest gauge choice recovering the trace preservation is $f_k = v_k = 0$ and $u_{k,j} = - z_{j,k}$.
With Eq.$\,$(\ref{eq:K2L3_zkj}), the gauge superoperator reads
\begin{equation}
  G(\rho_s) = \epsilon^2 z \sum_{k,j} {\rm tr}_A (F_k \bar{\rho}_A F_j^\dagger) (B_k^\dagger \rho_s B_j - \frac{1}{2} \{ B_j B_k^\dagger, \rho_s \} ).
  \label{eq:K2L3_G2}
\end{equation}

\subsection{Entanglement and non-positivity of $\mathcal{K}^{G = 0}$}
\label{app:K2L3_K2.positive}

While $\mathcal{K}^{G = 0}$ is not completely positive, it is not certain whether it can be positive or not.
To address this issue, let us first consider the oscillator-qubit system in Sec.$\,$\ref{sec:JC} as an example. 
Using the formula Eq.$\,$(\ref{eq:K2L3_K2nd}), $\mathcal{K}^{G = 0}$ up to the second-order expansion is given by Eq.$\,$(\ref{eq:diss_KG0}) in  this case.
To see that it is not positive, let $\ket{0} \in \mathscr{H}_A$ be the vacuum state of the oscillator and $\ket{\psi} \in \mathscr{H}_B$ be a state.
The matrix element of $\mathcal{K}^{G=0}$ with respect to $\ket{0,\psi} = \ket{0} \otimes \ket{\psi}$ reads
\begin{equation*}
  \begin{gathered}
    \braket{0,\psi|\mathcal{K}^{G=0} (\rho_B)|0,\psi} = \braket{0|\bar{\rho}_A|0} \Big[ \Big( 1 + \frac{4g^2 n_{\rm th}}{\gamma^2} \Big) \braket{\psi|\rho_B|\psi} \\
    - \frac{4 g^2 (1 + n_{\rm th})}{\gamma^2} \, \braket{\psi_+|\rho_B|\psi_+} - \frac{4 g^2 n_{\rm th}^2}{\gamma^2 (1+n_{\rm th})} \, \braket{\psi_-|\rho_B|\psi_-} \Big].
  \end{gathered}
\end{equation*}
with $\ket{\psi_\pm} = \sigma_\pm \ket{\psi}$.
If $\rho_B$ is a pure state, its kernel is not empty. Suppose $\ket{\psi}$ is in the kernel of $\rho_B$, that is, $\rho_B \ket{\psi} = 0$.
Since either $\braket{\psi_+|\rho_B|\psi_+}$ or $\braket{\psi_-|\rho_B|\psi_-}$ is nonzero, we obtain $\braket{0,\psi|\mathcal{K}^{G=0} (\rho_B)|0,\psi} < 0$.
Consequently, $\mathcal{K}^{G=0}$ is not positive.

To extend the analysis to a general class of settings, we note a theorem regarding assignment maps, which was first proved for qubit reduced systems in \cite{CPdebate} and later generalized in \cite{Jordan04}.
With the notations in Sec.$\,$\ref{sec:diss}, let $\Phi:\mathscr{D}(\mathscr{H}_B) \to \mathscr{D}(\mathscr{H}_A \otimes \mathscr{H}_B)$ be a map assigning a reduced state in $\mathscr{H}_B$ a total state in $\mathscr{D}(\mathscr{H}_A \otimes \mathscr{H}_B)$.
Then, the following statements are equivalent;
\begin{itemize}
  \item $\Phi$ is linear, consistent (${\rm tr}_A \circ \Phi = \mathcal{I}_B$), and positive ($\Phi(\rho_B) \in \mathscr{D}(\mathscr{H}_A \otimes \mathscr{H}_B)$ for every $\rho_B \in \mathscr{D}(\mathscr{H}_B)$).
  \item For every $\rho_B \in \mathscr{D}(\mathscr{H}_B)$, $\Phi(\rho_B) = \rho_A \otimes \rho_B$ where $\rho_A \in \mathscr{D}(\mathscr{H}_A)$ is independent of $\rho_B$.
\end{itemize}
We now apply this theorem to $\mathcal{K}^{G=0}$.
Since it is linear and consistent, $\mathcal{K}^{G=0}$ is positive if and only if $\mathscr{M}_{\rm inv}$ is characterized by product states with a fixed state on $\mathscr{D}(\mathscr{H}_A)$.
Due to the interaction term, $\mathscr{M}_{\rm inv}$ at the second-order expansion includes entangled states in general.
Therefore, the corresponding $\mathcal{K}^{G=0}$ is nonpositive.

One might argue that nonproduct states should also be obtained in the first-order expansion of $\epsilon$.
On this point, we note that the first-order expansion of $\mathcal{K}^{G=0}$ reads (see Eq.$\,$(\ref{eq:K2L3_K2nd}))
\begin{equation}
  \begin{gathered}
    \mathcal{K}^{G=0} (\rho_B) = \\
    \bar{\rho}_A \otimes \rho_B - i \epsilon (M (\bar{\rho}_A \otimes \rho_B) - (\bar{\rho}_A \otimes \rho_B) M^\dagger),
  \end{gathered}
  \label{eq:diss_1stOrder}
\end{equation}
where $M$ satisfies ${\rm tr}_A (M \bar{\rho}_A) = 0$.
In agreement with the above theorem, this is not positive.
We here note that this can be rewritten as
\begin{equation}
  \begin{gathered}
    \mathcal{K}^{G=0} (\rho_B) = \\
    (I_A \otimes I_B - i \epsilon M) (\bar{\rho}_A \otimes \rho_B) (I_A \otimes I_B - i \epsilon M)^\dagger \\
    - \epsilon^2 M (\bar{\rho}_A \otimes \rho_B) M^\dagger.
  \end{gathered}
  \label{eq:diss_1stOrderRewritten}
\end{equation}
Within the accuracy of the first-order expansion, we can neglect the third line and obtain a Kraus map form of $\mathcal{K}^{G=0}$.
Since the neglect of the third line violates the consistency at order $\epsilon^2$, this result does not contradict with the above theorem either.
In the second-order expansion, on the other hand, there is a negative term that cannot be eliminated within that accuracy.
This results in the positivity violation of $\mathcal{K}^{G=0}$.

\subsection{Third-order expansion of $\mathcal{L}_s$}
\label{app:K2L3_L3}

While the map $\mathcal{K}^{G=0}$ resulting from the first-order expansion, i.e., Eq.$\,$(\ref{eq:diss_1stOrder}), is not completely positive, its rewriting with Eq.$\,$(\ref{eq:diss_1stOrderRewritten}) shows that the nonpositivity only yields a term of order $\epsilon^2$.
This explains the reason why the Lindblad form must always be obtained including the second-order contribution, as shown in \cite{Azouit17}.

The nonpositivity of $\mathcal{K}^{G=0}$ at second-order indicates danger for the third-order expansion and higher.
For the oscillator-qubit system, the third-order contribution $\mathcal{L}_{s,3}^{G=0}$ vanishes and the non-Lindblad form is obtained from the fourth-order expansion.
Even when $\mathcal{L}_{s,3}^{G=0}$ does not vanish, it was shown in Theorem 9 of \cite{AzouitThesis} that, if the interaction Hamiltonian includes one tensor-product term and the second-order contribution does not vanish, $\mathcal{L}_{s}^{G=0}$ admits the Lindblad form up to the third-order expansion.
Here we consider the third-order expansion with a general interaction Hamiltonian $H_{\rm int} = \sum_{k} A_k^\dagger \otimes B_k = \sum_k A_k \otimes B_k^\dagger$.

Since ${\rm tr}_A \circ \mathcal{K}_n^{G=0} = 0 \ (n = 1,2,\dots)$, from Eq.$\,$(\ref{eq:inv.cond._Ln}), $\mathcal{L}_{s,n}^{G=0}(\rho_B) = {\rm tr}_A \Big( (\mathcal{I}_A \otimes \mathcal{L}_B + \mathcal{L}_{\rm int}) (\mathcal{K}_{n-1}^{G=0} (\rho_B) ) \Big)$.
With Eqs.$\,$(\ref{eq:AE_asymexp0}), (\ref{eq:K2L3_K_1}), and (\ref{eq:K2L3_K_2}), we obtain $\{ \mathcal{L}_{s,n}^{G=0} \}_{n=1,2,3}$ as
\begin{equation*}
  \mathcal{L}_{s,1}^{G=0} (\rho_B) = - i \sum_k \braket{A_k} B_k^\dagger \, \rho_B + (h.c.),
\end{equation*}
\begin{equation*}
  \mathcal{L}_{s,2}^{G=0} (\rho_B) = \sum_{j,k} \braket{A_j^\dagger F_k} [B_k^\dagger \rho_B, B_j] + (h.c.),
\end{equation*}
\begin{equation*}
  \begin{gathered}
    \mathcal{L}_{s,3}^{G=0} (\rho_B) = - \sum_{j,k} \braket{A_j^\dagger V_k} [ \mathcal{L}_B (B_k^\dagger) \rho_B, B_j] \\
    - \sum_{i,j,k} i \braket{A_i^\dagger V_j} \braket{A_k^\dagger} [B_j^\dagger B_k \rho_B, B_i] \\
    + \sum_{i,j,k} i \braket{A_i^\dagger U_{k,j} } [B_k B_j^\dagger \rho_B, B_i] \\
    - \sum_{i,j,k} i {\rm tr}_A (A_i^\dagger Z_{k,j}) [B_k^\dagger \rho_B B_j, B_i] + (h.c.).
  \end{gathered}
\end{equation*}

Since $\mathcal{L}_s^{G=0}$ preserves the Hermitian property and have trace zero, we only need to focus on the dissipation part to see if $\mathcal{L}_s^{G=0}$ is a Lindbladian.
The dissipation part of $\mathcal{L}_{s,2}^{G=0}$ reads $\sum_{j,k} \eta_{k,j} B_k^\dagger \rho_B B_j$ with $\eta_{k,j} = \braket{A_j^\dagger F_k} + \braket{A_k^\dagger F_j}^*$.
As proved in \cite{Azouit17}, the matrix $\eta$ defined by $[\eta]_{k,j} = \eta_{k,j}$ is positive semidefinite in general.
The dissipation part of $\mathcal{L}_{s,3}^{G=0}$ reads $\sum_{k} (Q_k^\dagger \rho_B B_k + B_k^\dagger \rho_B Q_k)$ with
$Q_k^\dagger = - \sum_i \braket{A_k^\dagger V_i} [\mathcal{L}_B (B_i^\dagger) + i \sum_j \braket{A_j^\dagger} B_i^\dagger B_j ] + i \sum_{i,j} [ \braket{A_k^\dagger U_{i,j}} + {\rm tr}_A (A_i^\dagger (Z_{j,k} + Z_{k,j}^\dagger)) ] B_i B_j^\dagger$.
If we assume that $\eta$ is invertible (or, equivalently, $\eta$ is positive definite), then the dissipation part of $\mathcal{L}_s^{G=0}$ reads, up to the third-order of $\epsilon$,
\begin{equation*}
  \sum_{j,k} \eta_{k,j} R_k^\dagger \rho_B R_j,
\end{equation*}
with $R_k^\dagger = \epsilon B_k^\dagger + \epsilon^2 \sum_j [\eta^{-1}]_{j,k} Q_j^\dagger$.
Therefore, $\mathcal{L}_s^{G=0}$ admits the Lindblad form.

When $\eta$ is not invertible, the third-order contribution can generate a new dissipator that is independent of the second-order contribution.
Studies on such cases are in progress.
Note that, when $\mathcal{L}_B$ and $\mathcal{L}_{\rm int}$ contain only Hamiltonian terms, $\mathcal{L}_{\rm tot}$ is a Lindbladian irrespective of the sign of $\epsilon$.
Thus, the structure of $\mathcal{L}_s^{G=0}$ is not affected by flipping the sign of $\epsilon$ as $\epsilon \to - \epsilon$.
This implies that the new dissipator generated at the third-order should either always include a negative coefficient or vanish.
To elaborate on it, let us consider an extreme case with $\eta = 0$.
In general, we can write $Q_k$ as $Q_k = q_k B_k + \bar{Q}_k$ with $q_k$ a complex number and $\bar{Q}_k$ an operator on $\mathscr{H}_B$ that is linearly independent of $B_k$.
The dissipation part of $\mathcal{L}_{s}^{G=0}$ then reads $\epsilon^3 \sum_k [(\mathcal{D}[\bar{Q}_k+B_k] - \mathcal{D}[\bar{Q}_k-B_k])/2 + 2 {\rm Re}(q_k) \mathcal{D}[B_k]]$ up to the third-order expansion.
On the one hand, the first part with $\bar{Q}_k$ always includes a negative coefficient, and the structure of $\mathcal{L}_{s}^{G=0}$ is independent of the sign of $\epsilon$.
On the other hand, the second part with ${\rm Re}(q_k)$ flips its sign by the transformation $\epsilon \to - \epsilon$.
According to the above discussion, we expect to have ${\rm Re}(q_k) = 0$ under the condition of $\eta = 0$.
However, we have not been able to prove it yet.

\section{Quantum discord and complete positivity violation}
\label{app:qdiscord}

In Sec.$\,$\ref{sec:diss_initial.correlated}, we interpret the complete positivity violation in terms of correlations present in states on an invariant manifold.
To support this interpretation, we examine the relation between the strength of correlations on an invariant manifold and the degree of complete positivity violation.
We take the qubit-qutrit system in Sec.$\,$\ref{sec:Diag} for this analysis.
States on the invariant manifold are characterized by Eq.$\,$(\ref{eq:Diag_KPT}) in this case.
For this class of bipartite states, there is a strong connection between complete positivity of the reduced dynamics and quantum discord, as we elaborate in the next paragraph.
Hence, we take quantum discord as a measure of correlations.

Prior to discussing a methodology for estimating quantum discord, we make several remarks on connections between complete positive reduced dynamics and quantum discord.
These remarks are intended to complement the historical overview given in Sec.$\,$\ref{sec:diss_initial.correlated}.
It was shown in \cite{Rosario08} that, for Hamiltonian systems, vanishing quantum discord implies completely positive reduced dynamics.
In connection with the present study, their proof can be straightforwardly extended to any completely positive processes in the total system, including the time evolution with a Lindblad generator.
Hence, we must have completely positive reduced dynamics if all states on the invariant manifold have vanishing quantum discord. 
In \cite{Shabani09}, it was shown that, for initial states in the form of Eq.$\,$(\ref{eq:Diag_KPT}), the converse is also true, that is, completely positive reduced dynamics imply vanishing quantum discord.
Here we should emphasize that the condition is complete positive reduced dynamics for all possible unitary transformations.
In fact, the proof in \cite{Shabani09} relies on this fact as it employs a special form of unitary transformation to elucidate restrictions on the environment operators imposed by the complete positivity.
Therefore, the criterion is only sufficient and we should not necessarily have vanishing quantum discord even if we observe complete positive reduced dynamics in the present analysis.
As an example, let us consider the dispersively coupled qubit-qubit system in Sec.$\,$\ref{sec:Diag}.
As mentioned in the last paragraph in Sec.$\,$\ref{sec:Diag_any.order}, the generator $\mathcal{L}_s^{G=0}$ is in the Lindblad form, ensuring complete positive reduced dynamics.
However, this merely confirms the complete positivity for the total system evolution in that case and does not guarantee it for all unitary transformations.
In fact, we estimated quantum discord in this case following the procedure described below and observed nonzero quantum discord.

Now we delve into the estimation of quantum discord.
When considering local projective measurements on $\mathscr{H}_B$, quantum discord of a quantum state $\rho \in \mathscr{D}(\mathscr{H}_A \otimes \mathscr{H}_B)$, denoted as $\delta_B(\rho)$, is defined by \cite{Henderson01,Ollivier01,Modi13}
\begin{equation}
    \delta_B(\rho) = S(\rho_B) - S(\rho) + \min_{ \{ \Pi_j^B \}  } S(A|\{ \Pi_j^B \}),
    \label{eq:qdiscord_def}
\end{equation}
with $S(\rho) = - {\rm tr} (\rho \ln \rho)$ the von Neumann entropy. 
The last term on the right-hand side, $\min_{ \{ \Pi_j^B \}  } S(A|\{ \Pi_j^B \})$, encapsulates the conditional entropy of subsystem $\mathscr{H}_A$ subsequent to the measurement on $\mathscr{H}_B$ associated with rank-1 orthogonal projectors $\{ \Pi_j^B \} \ (j = 1,\dots,{\rm dim}(\mathscr{H}_B))$.
The conditional entropy is calculated using the formula
\begin{equation*}
    S(A|\{ \Pi_j^B \}) = \sum_{j=1}^{{\rm dim}(\mathscr{H}_B)} p_j S(\rho_{A|\Pi_j^B}),
\end{equation*}
with $p_j = {\rm tr}(\Pi_j^B \rho)$ the probability of the outcome labeled by $j$ and $\rho_{A|\Pi_j^B} = \Pi_j^B \rho \Pi_j^B / p_j$ the associated postmeasurement state. 
The minimization is performed over all possible projective measurements $\{ \Pi_j^B \}$.

Currently, no efficient method exists to compute $\min_{ \{ \Pi_j^B \} } S(A|\{ \Pi_j^B \})$ to our knowledge. Consequently, we estimate this minimum by employing a brute-force approach, which involves random sampling of projectors $\{ \Pi_j^B \}$.
By diagonalizing randomly generated observables in $\mathscr{H}_B$, we obtain a set of orthogonal projectors.
Let $N_{\rm proj}$ be the number of generated observables. 
Given a state, we compute the minimum value of the conditional entropy in this set, from which we estimate the quantum discord of the state using Eq.$\,$(\ref{eq:qdiscord_def}).
Since sampling all orthogonal projectors $\{ \Pi_j^B \}$ is infeasible in practice, the estimated quantum discord should be considered an upper bound.

Our current objective is to estimate a representative value of quantum discord of the invariant manifold $\mathscr{M}_{\rm inv}$, which we denote as $\delta_B (\mathscr{M}_{\rm inv})$.
To achieve this, we adopt a straightforward approach, in which we randomly sample $N_{\rm state}$ states on the invariant manifold $\rho_a \in \mathscr{M}_{\rm inv} \ (a = 1, 2, \dots, N_{\rm state})$ and then estimate $\delta_B (\mathscr{M}_{\rm inv})$ through the average value of quantum discord among those states;
\begin{equation}
    \delta_B (\mathscr{M}_{\rm inv}) = \frac{1}{N_{\rm state}} \sum_{a=1}^{N_{\rm state}} \delta_B (\rho_a).
    \label{eq:qdiscord_def.inv}
\end{equation}
A set $\{ \rho_a \}$ can be prepared by randomly generating $\rho_B \in \mathscr{D}(\mathscr{H}_B)$ and then inserting them into Eq.$\,$(\ref{eq:Diag_KPT}).
Given that the map $\mathcal{K}^{G=0}$ is not positive, we check the eigenvalues of $\mathcal{K}^{G=0}(\rho_B)$ to ensure that $\{ \rho_a \}$ are all positive semidefinite.

The resulting quantum discord is shown in the left figure in Fig.$\,$\ref{fig:qdiscord}.
For the numbers of random sampling, we take $N_{\rm proj} = N_{\rm state} = 500$.
We confirmed that the results do not change significantly by employing larger values of $N_{\rm proj}$ and $N_{\rm state}$.

\section{Gauge choice for practical applications}
\label{app:gauge}

The geometric formulation introduced in Sec.$\,$\ref{sec:AE} involves the gauge degree of freedom.
It was previously anticipated in \cite{Azouit17} that the gauge degree of freedom can be leveraged to restore the Lindblad form of the generator $\mathcal{L}_s$.
In Sec.$\,$\ref{sec:JC}, however, we found an example where such a restoration is unattainable.
Given this background, this appendix explores the potential roles that the gauge degree of freedom plays in the practical applications.

In practical applications, care should be taken when the initial state is determined in the reduced system.
For the partial trace parametrization, the domain of initial reduced states must be restricted \cite{Jordan04} to ${\rm tr}_A (\mathscr{M}_{\rm inv})$ (see Sec.$\,$\ref{sec:diss} for the notation), which is a subset of $\mathscr{D}(\mathscr{H}_B)$ in general.
By linearity, extending the domain of $\mathcal{K}^{G=0}$ to $\mathscr{D}(\mathscr{H}_B)$ (all density matrices) is possible.
However, such extension leads to an initial total state that is not positive semidefinite and thus is unphysical.
Recall that $\mathcal{K}^{G=0}$ is not positive at the second-order expansion in general.
Therefore, strictly speaking, the domain restriction is necessary once we consider contributions higher than the second-order, even when $\mathcal{L}_s^{G=0}$ admits the Lindblad form.

In order to cover all the states in $\mathscr{M}_{\rm inv}$, the map $\mathcal{K}^{G}: \mathscr{D}(\mathscr{H}_B) \to \mathscr{M}_{\rm inv}$ needs to be positive and surjective.
If $G = G_{ps}$ has those properties, then the time evolution map with this gauge choice $\exp(\mathcal{L}_s^{G_{ps}} t)$ at least preserves positivity.
Indeed, since $\mathcal{K}^{G_{ps}}$ maps $\mathscr{D}(\mathscr{H}_B)$ onto all the positive states in the subspace spanned by $\mathscr{M}_{\rm inv}$,
by injectivity a nonpositive $\rho_s$ obtained with $\exp(\mathcal{L}_s^{G_{ps}} t)$ can also be mapped only onto a nonpositive $\rho$ in the space spanned by $\mathscr{M}_{\rm inv}$.

A question then arises whether such a $G_{ps}$ exists in general.
To our knowledge, the existence of $G_{ps}$ has been confirmed only for a dispersively coupled two qubit system in \cite{Alain20}.
The rest of this appendix examines this question for the qubit-qutrit system introduced in Sec.$\,$\ref{sec:Diag}.
In fact, we see that the approach employed for the two qubit system is not applicable.

As employed in \cite{Alain20}, we consider the following gauge choice;
\begin{equation}
  (\mathcal{I}_B + G) (\rho_s) = \sum_{m,n = 1}^d c_{m,n} E_m \rho_s E_n,
  \label{eq:surj_diagonalG}
\end{equation}
where the complex coefficients $c_{m,n}$ satisfy $c_{m,n}^* = c_{n,m}$ and $c_{m,m} = 1$ for the Hermitian and trace preservation of $\mathcal{K}^{G}$, respectively.
This is one of the simplest gauge choices because the corresponding supermatrix (see Appendix \ref{app:vect}) is diagonal as $\hat{\mathcal{I}}_B + \hat{G} = \sum_{m,n = 1}^d c_{m,n} E_n \otimes E_m$.
From this representation, it is clear that $(\mathcal{I}_B + G)^{-1}$ exists if and only if all the coefficients $\{ c_{m,n} \}$ are nonzero.
In that case, one can calculate $\mathcal{L}_s$ with different gauge choices using Eq.$\,$(\ref{eq:AE_LG}).
In fact, within the form of Eq.$\,$(\ref{eq:surj_diagonalG}), the generator $\mathcal{L}_s^G$ is independent of gauge choice because $\mathcal{L}_s^{G=0}$ is diagonal (see Eq.$\,$(\ref{eq:Diag_LsPT})) and thus commutes with $(\mathcal{I}_B + G)$.

\begin{figure}[t]
  \includegraphics[keepaspectratio, scale=0.55]{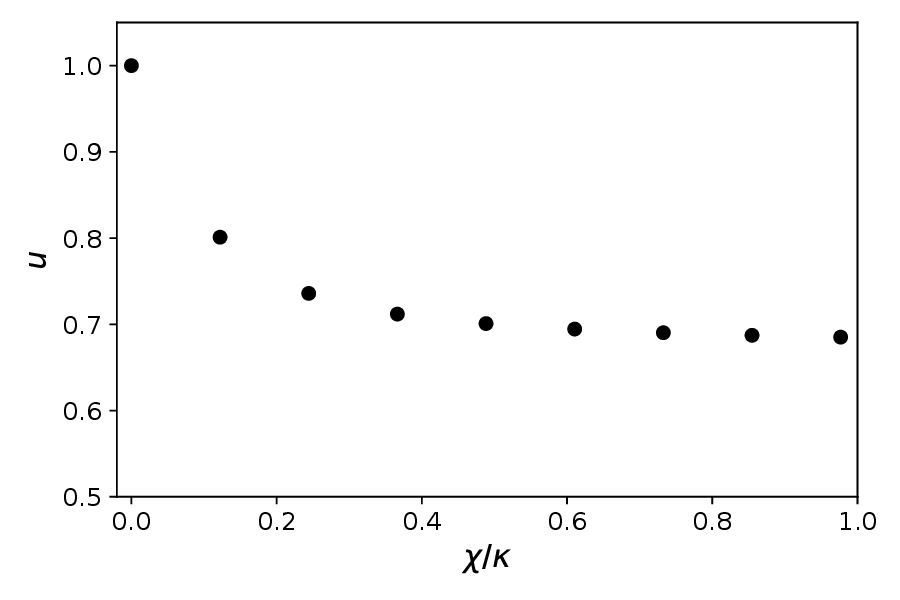}
  \caption{Maximial multiplying factor $u$ such that, with $c_{m,n} = u \tilde{c}_{m,n}$, positivity is ensured for the composite qubit-qutrit state in the case of any three-state superposition, as a function of the coupling strength ($(\chi_1,\chi_2,\chi_3) = (0,\chi,2\chi)$) in unit of the qubit damping rate. The figure is obtained by varying the value of $\chi$, while fixing $\Omega / \kappa = 0.5$ and $\Delta / \kappa = - 0.5$.}
  \label{fig:surj_maxfactor}
\end{figure}

From Eqs.$\,$(\ref{eq:AE_KG}) and (\ref{eq:Diag_KPT}), $\mathcal{K}^G$ with Eq.$\,$(\ref{eq:surj_diagonalG}) reads
\begin{equation*}
  \mathcal{K}^G (\rho_s) = \sum_{m,n = 1}^d c_{m,n} Q_{m,n} \otimes E_m \rho_s E_n.
\end{equation*}
The absolute values of $\{ c_{m,n} \}$ are bounded above by the necessity of $\mathcal{K}^G$ to be positive.
On the other hand, taking too small values results in nonsurjective $\mathcal{K}^G$.
The problem is thus to find their optimal values ensuring both the surjectivity and positivity.
We can use any qudit state, $\rho_s$, to obtain constraints on these coefficients and explore the space of the total state they allow to span.
In particular, if we consider a qudit in a superposition of two levels between $m$ and $n$, we can use the results of \cite{Alain20} to show that there exists an optimal value $\tilde{c}_{m,n}$ that ensures both the surjectivity and positivity.
As this is valid for any pair $(m,n)$, one might be tempted to choose the set $\{ \tilde{c}_{m,n} \}$ as the optimal one.
However, strikingly, this set is not admissible for $d > 2$ as it does not ensure the positivity of $\mathcal{K}^G$.
In other words, one needs to take smaller values of the coefficients than $\{ \tilde{c}_{m,n} \}$ as $c_{m,n} = u \tilde{c}_{m,n}$ with $u \leq 1$ to obtain positive $\mathcal{K}^G$.
For instance, for the qubit-qutrit system ($d = 3$) with Eq.$\,$(\ref{eq:Diag_qubit}), we plot the maximum value of $u$ that ensures the positivity of $\mathcal{K}^G$ in Fig.$\,$\ref{fig:surj_maxfactor}.
The necessity to take $u < 1$ implies that not all the qutrit states can be captured in this gauge choice.
In particular, some superpositions of only two levels require $u = 1$.
Thus, this counterexample allows us to conclude that, within the diagonal gauge given by Eq.$\,$(\ref{eq:surj_diagonalG}), there exists no set $\{ c_{m,n} \}$
leading to positive and surjective $\mathcal{K}^G$ when $d > 2$.
This result motivates us to consider nondiagonal gauge superoperators.
Such an extension is currently under investigation.



\end{document}